\newtheorem{definition}{Definition}[section]
\newtheorem{lemma}[definition]{Lemma}
\newtheorem{theorem}[definition]{Theorem}
\newtheorem{observation}[definition]{Observation}
\newcommand{\bigo}{\mathcal{O}}
\newcommand{\ie}{i.e.\xspace}
\newcommand{\alg}{\mathcal{A}}
\newcommand{\dalg}{\mathcal{D}}
\newcommand{\head}{h}
\newcommand{\hash}{\Phi}
\newcommand{\divides}{\bigm|}
\newcommand{\poly}{\text{poly}}
\newcommand{\bhash}{\hash}
\newcommand{\midpalin}[1][n]{{\bf MID-PALIN}$_{\Sigma}[#1]$\xspace}
\newcommand{\palin}[1][n]{{\bf PALIN}$_{\Sigma}[#1]$\xspace}
\newcommand{\aerr}{\ensuremath{E}}
\newcommand{\etal}{et~al.}
\newcommand{\level}{\lambda}
\DeclareMathOperator{\per}{per}
\DeclareMathOperator{\lcm}{lcm}
\newcommand{\kpers}{\ensuremath{p}}
\newenvironment{mydesc}[1]{%
  \begin{list}{}{%
    \settowidth{\labelwidth}{#1}
    \setlength{\labelsep}{0.5cm}
    \setlength{\leftmargin}{\labelwidth}
    \addtolength{\leftmargin}{\labelsep}
    \setlength{\rightmargin}{0pt}
    \setlength{\parsep}{0.5ex plus0.2ex minus0.1ex}
    \setlength{\itemsep}{0ex plus0.2ex}
    }
  }
{\end{list}}
\begin{document}

\title{Tight tradeoffs for approximating palindromes in streams\footnote{up to a logarithmic factor.}}

\author[1]{Paweł Gawrychowski}
\author[2]{Przemysław Uznański}
\affil[1]{Institute of Informatics, University of Warsaw, Poland}
\affil[2]{Department of Computer Science, ETH Z\"{u}rich, Switzerland
}

\date{}

\maketitle

\begin{abstract}
We consider computing the longest palindrome in a text of length $n$ in the streaming model, where
the characters arrive one-by-one, and we do not have random access to the input. While 
computing the answer exactly using sublinear memory is not possible in such a setting, one can still hope for a good approximation guarantee.

We focus on the two most natural variants, where we aim for either additive or multiplicative approximation
of the length of the longest palindrome. We first show that there is no point in considering
Las Vegas algorithms in such a setting, as they cannot achieve sublinear space complexity.
For Monte Carlo algorithms, we provide a lower bound of $\Omega(\frac{n}{\aerr})$ bits for
approximating the answer with additive error $\aerr$, and  $\Omega(\frac{\log n}{\log(1+\varepsilon)})$ bits for 
approximating the answer with multiplicative error $(1+\varepsilon)$ for the binary alphabet.
Then, we construct a generic Monte Carlo algorithm, which by 
choosing the parameters appropriately achieves space complexity matching up to a logarithmic factor for
both variants.
This substantially improves the previous results by Berenbrink~\etal~(STACS~2014) and essentially settles
the space complexity.
\end{abstract}

\section{Introduction}
A recent trend in algorithms on strings is to develop efficient algorithms in the streaming model, where
characters arrive one-by-one, and we do not have random access to the input. The main goal is to minimize
the space complexity, \ie, avoid storing the already seen prefix of the text explicitly. One usually allows
randomization and requires that the answer should be correct with high probability.
We consider computing the longest palindrome in this model, where a palindrome is a fragment which
reads the same in both directions. This is one of the basic questions concerning regularities in texts
and it has been extensively studied in the classical non-streaming setting, see~\cite{Apostolico,GalilSeiferas,KMP,Manacher}
and the references therein. The notion of palindromes, but with a slightly different meaning, is
very important in
computational biology, where one considers strings over $\{A,T,C,G\}$ and a palindrome is a sequence
equal to its reverse complement (a reverse complement reverses the sequences and interchanges
$A$ with $T$ and $C$ with $G$); see~\cite{GawrychowskiSTACS} and the references therein for
a discussion of their algorithmic aspects. Our results generalize to biological palindromes
in a straightforward manner.

Computing the longest palindrome in the streaming model was recently considered by Berenbrink \etal~\cite{Berenbrink},
who developed tradeoffs between the bound on the error and the space complexity for additive and multiplicative variants of the problem, that is,
for approximating the length of the longest palindrome with either additive or a multiplicative error.
Their algorithms were Monte Carlo, \ie, returned the correct answer with high probability. They also proved
that any Las Vegas algorithm achieving additive error $\aerr$ must necessarily use $\Omega(\frac{n}{\aerr}\log|\Sigma|)$ bits of memory,
which matches the space complexity of their solution up to a logarithmic factor in the $\aerr\in [1,\sqrt{n}]$ range,
but leaves at least two questions. Firstly, does the lower bound still hold for Monte Carlo algorithms? Secondly, 
what is the best possible space complexity when $\aerr\in (\sqrt{n},n]$ in the additive variant,
and what about the multiplicative version? We answer all these questions.

\paragraph{Related work.} The most basic problem in algorithms on strings is pattern matching, where we want to detect an occurrence of a a pattern in a given text.
It is somewhat surprising that one can actually solve it using polylogarithmic space in the streaming model,
as proved by Porat and Porat~\cite{PoratStreaming}. A simpler solution was later given by Erg{\"u}n \etal~\cite{ErgunPeriodicity}, and Breslauer
and Galil~\cite{Breslauer}. Similar questions studied in such setting include multiple-pattern
matching~\cite{DictionaryStream}, approximate pattern matching~\cite{KMismatch}, and
parametrized pattern matching~\cite{ParametrizedStreaming}.

Pattern matching is also very closely related to detecting periodicities, and in fact Erg{\"u}n \etal~\cite{ErgunPeriodicity} also developed
an efficient algorithm for computing the smallest period, where $p$ is a period of $T[1..n]$ if $T[i]=T[i+p]$ for all $i=1,2,\ldots,n-p$. Also palindromes are closely connected to periodicities. Informally, two long
palindromes occurring close to each other imply a periodicity of the underlying fragment of the text
(to the best of our knowledge, this has been first explicitly stated by Apostolico \etal~\cite{Apostolico}).
Similar insights have been used to partition the text into the smallest number of palindromes~\cite{FiciGKK14,ISIBT14}
and recognizing the so-called $\text{Pal}^{k}$ language~\cite{KosolobovRS15}.
At a very high level, the idea there is to consider longer and longer prefixes of the text and maintain a succinct description
of all palindromic suffixes of the current prefix. Naturally, our algorithm is based on the same high-level idea, but there
are multiple non-trivial technical difficulties stemming from the fact that we cannot provide random access to the already
seen part of the text, so we can only approximate such information.

\paragraph{Model.} We work in the streaming model and consider additive and multiplicative variant of the problem.
The model works as follows: we are first given the length of the text $n$ and the bound on the desired error
$\aerr$ (in the additive variant) or $\varepsilon$ (in the multiplicative variant), then the characters
$T[1],T[2],\ldots,T[n]\in\Sigma$ arrive one-by-one. In the $h$-th step we receive $T[\head]$ and we are required to output a number $\ell$, such that the length of the longest palindrome in
$T[1..\head]$ is either between $\ell$ and $\ell+\aerr$ (in the additive variant) or between $\ell$ and $(1+\varepsilon)\cdot \ell$
(in the multiplicative variant). 
We have $s(n)$ bits of memory available, where we can store an arbitrary data.
It is important to remember that the procedure operates in steps corresponding
to the characters and we cannot retrieve an already seen character unless it has been stored in memory.

Now we are interested in the possible tradeoffs between $s(n)$ and the bound on the error.
We consider Las Vegas and Monte Carlo algorithms.
A Las Vegas algorithm always returns a correct answer, but its memory usage $s(n)$ is a random variable.
A Monte Carlo algorithm returns a correct answer with high probability, and its memory usage $s(n)$
does not depend on the random choices, where high probability means $1-\frac{1}{n^{c}}$, for arbitrarily large constant $c$.

We assume that the memory consists of words of size $\Omega(\log\max\{n,|\Sigma|\})$ and
basic operations take $\bigo(1)$ time on such words. Bounds on the space are expressed in such words unless stated otherwise.

\paragraph{Previous work.} The longest palindrome can be found in $\bigo(n)$ time and space (cf. Manacher \cite{Manacher}).
Berenbrink \etal~\cite{Berenbrink} constructed a streaming algorithm
achieving additive error $\aerr$ using $\bigo(\frac{n}{\aerr})$ space and $\bigo(\frac{n^{1.5}}{\aerr})$ total time
for any $\aerr \in [1,\sqrt{n}]$,
and a streaming algorithm guaranteeing multiplicative error $(1+\varepsilon)$ using $\bigo(\frac{\log n}{\varepsilon \log(1+\varepsilon)})$
space and $\bigo(\frac{n \log n}{\varepsilon \log(1+\varepsilon)})$ total time for any $\varepsilon \in (0,1]$,
both Monte Carlo.
They also proved that any Las Vegas algorithm with additive error $\aerr$ must necessarily use
$\Omega(\frac{n}{\aerr}\log|\Sigma|)$ bits of space.

\paragraph{Our results.} We significantly improve on the previous results as follows and essentially settle the space complexity of the
problem in both variants (see Table~\ref{table:results} for summary).

Firstly, we prove that any Las Vegas algorithm approximating (in either variant) the length of the longest palindrome inside a text of length $n$
over an alphabet $\Sigma$ must necessarily use $\Omega(n\log|\Sigma|)$ bits of memory (see Theorem~\ref{th:lv}).
Hence Las Vegas randomization is simply not the right model for this particular problem.
Then we move to Monte Carlo algorithms, and prove the following lower bounds on their space complexity:
\begin{itemize}
\item $\Omega(\frac{n}{\aerr} \log \min \{|\Sigma|,\frac{n}{\aerr}\} )$ bits to achieve additive error $\aerr$ with high probability if $\aerr \in [1,0.49 n]$ (see Theorem~\ref{th:montecarlo_additive_lowerbound}),\footnote{This can be strengthened to $0.5n - \omega(\sqrt{n})$, but for the sake of clarity we prefer to state a weaker bound, here and in subsequent similar places.}
\item $\Omega(\frac{\log n}{\log(1+\varepsilon)}\log \min \{|\Sigma|,\frac{\log n}{\log(1+\varepsilon)}\})$ bits to achieve multiplicative error $(1+\varepsilon)$ with high probability if $\varepsilon \in [n^{-0.98}, n^{0.49}]$\  (see Theorem~\ref{th:montecarlo_multiplicative_lowerbound}).\footnote{Here $-0.98$ can be replaced by any constant larger than $-1$, and $n^{0.49}$ can be strengthened to $o(\sqrt{n})$.}
\end{itemize}

\newcommand{\tablebgcolor}{}
\newcommand{\tablebgcolordark}{}
\newcommand{\cellbgcolor}{}
\newcommand{\cellbgcolordark}{}

\begin{table}[t!]

\begin{tabular}{p{0.44\columnwidth}p{0.51\columnwidth}}
	\toprule		
		\multicolumn{2}{c}{Las Vegas approximation}\\
		\midrule
		$\Omega(\frac{n}{\aerr}\log|\Sigma|)$ & $\Omega(n\log|\Sigma|)$ \\
		\midrule
		\multicolumn{2}{c}{Monte Carlo additive approximation}\\
		\midrule
		$\bigo(\frac{n}{\aerr})$ space, $\bigo(\frac{n^{1.5}}{\aerr})$ time, $\aerr \in [1,\sqrt{n}]$ & $\bigo(\frac{n}{\aerr})$ space, $\bigo(n \log n)$ time, $\aerr \in [1,n]$\\
	
		\multicolumn{1}{c}{--} & $\Omega(\frac{n}{\aerr} \log \min\{ |\Sigma| \}, \frac{n}{\aerr} \})$ bits, $\aerr \in [1,0.49n]$ \\
		\midrule
		\multicolumn{2}{c}{Monte Carlo multiplicative approximation}\\
		\midrule
		$\bigo(\frac{\log n}{\varepsilon\log(1+\varepsilon)})$ space, $\bigo(\frac{n\log n}{\varepsilon \log(1+\varepsilon)})$ time, $\varepsilon \in (0,1]$ & $\bigo(\frac{\log(n\varepsilon)}{\log(1+\varepsilon)})$ space, $\bigo(n \log n)$ time, $\varepsilon \in [\frac2n,n]$\\

		\multicolumn{1}{c}{--} & $\Omega(\frac{\log n}{\log(1+\varepsilon)}\log \min \{|\Sigma|,\frac{\log n}{\log(1+\varepsilon)}\})$ bits, $\varepsilon \in [n^{-0.98}, n^{0.49}]$\\
		\bottomrule
	\end{tabular}
	\caption{
    A comparison of previous (on the left, c.f. \cite{Berenbrink}) and our (on the right) results. Lower bounds are in bits, and upper bounds in words consisting of $\Omega(\log \max\{ n,|\Sigma|\})$ bits.
\label{table:results}
}

\end{table}

Secondly, we construct a generic 
Monte Carlo approximation algorithm, which by adjusting the parameters appropriately matches our lower bounds up to a logarithmic multiplicative 
factor.
In more detail, our algorithm uses $\bigo(\frac{n}{\aerr})$ words of space for any $\aerr \in [1,n]$ in the additive variant (see Theorem~\ref{additive_scheme} and
Theorem~\ref{memory_efficient}) and $\bigo(\frac{\log (n\varepsilon)}{\log(1+\varepsilon)})$ words of space for any $\varepsilon \in [\frac2n,n]$
in the multiplicative variant (see Theorem~\ref{multiplicative_scheme} and Theorem~\ref{memory_efficient}).\footnote{For 
small $\varepsilon$ this is $\bigo(\frac{\log(n\varepsilon)}{\varepsilon})$, and for large $\varepsilon$ becomes $\bigo(\frac{\log n}{\log(1+\varepsilon)})$.
Note that this does not contradict the lower bound, because $\log(n\varepsilon)=\Theta(\log n)$ for $\varepsilon\in[n^{-0.98},n^{0.49}]$.}
This essentially settles the space complexity of the problem, as it can be seen that our lower and upper bounds differ by at most a logarithmic factor
for any $\aerr \in [1,0.49n]$ and $\varepsilon \in [n^{-0.98},n^{0.49}]$.
The time complexity of our algorithm is always $\bigo(n\log n)$ (see Theorem~\ref{thm:time_efficient}).

\paragraph{Overview of the methods.} As usual in the streaming model, we apply Karp-Rabin fingerprints. We store
such fingerprints for some carefully chosen prefixes of the already seen part of the text. Informally, these chosen
prefixes become more and more sparse as we move closer to the beginning, with the details depending on the
variant. We call such fingerprints of prefixes landmarks, and formalize this notion in Section~\ref{section:preliminaries}. Then, in Section~\ref{section:basic}, we present
a generic algorithm. The idea is that for every possible palindrome center we create a separate process
which maintains the corresponding palindromic radius (or, more precisely, its approximation).
By adjusting the parameters of the generic algorithm we are able to guarantee good bound on the error
in both variants.
For $\aerr \in \Omega(\frac{n}{\log n})$ or $\varepsilon \in \Omega(1)$ there are only few landmarks and such generic algorithm
is already efficient enough when implemented naively. To implement it efficiently for smaller $\aerr$
or $\varepsilon$, we need to avoid running processes which have already found
a mismatch, but maintaining such a list explicitly might take too much space. However, multiple sufficiently
long palindromes appearing close to each other imply periodicity of the corresponding fragment of the text, which
can be exploited to concisely describe the whole situation.
This insight (dating back to Apostolico \etal~\cite{Apostolico})
allows us to approximate the information about all active processes in logarithmic space as explained
in Section~\ref{section:time}. Then in Section~\ref{section:time2} we use it to avoid running all active processes after
reading every character.
Finally, in Section~\ref{section:lowerbounds} we apply the Yao's minimax principle to derive the
lower bounds. For Las Vegas algorithms, this is straightforward, but requires more work for Monte Carlo algorithms.


\paragraph{Comparison with previous work.}
The additive approximation algorithm proposed by Berenbrink \etal~\cite{Berenbrink} uses a flat structure of $\sqrt{n}$ fingerprints (called checkpoints).  The most recently seen $\sqrt{n}$ characters are stored explicitly,
and the information of all palindromes with larger radius is compressed using the periodicity lemma.
For multiplicative approximation, a sparse structure of checkpoints is used.
Our technical contribution is of several flavors. Firstly, we use a single generic construction 
for both variants of the problem. Secondly, in all variants we use a hierarchical structure of fingerprints,
with the fingerprints becoming more and more sparse as we move closer to the beginning.
This in particular allows us to avoid storing a long suffix explicitly in the
additive version. Thirdly, also the periodicity compression is applied in a hierarchical manner:
we maintain a partition of the text into segments with lengths exponentially increasing
with the distance from the most recently seen character and compress each such segment
separately. In previous work, a rigid partition into segments of length $\sqrt{n}$ was used.
Storing such rigid partition requires $\Omega(\sqrt{n})$ space, which might be too much
when the allowed error is large. Working with segments of exponentially increasing lengths
allows us to decrease this additional memory usage to only $\bigo(\log n)$, but
also makes the details more involved.


\section{Preliminaries}
\label{section:preliminaries}

For a word $w\in\Sigma^{*}$, we denote its length by $|w|$, and its $i$-th letter by $w[i]$ for any $i=1,2,\ldots,|w|$. Similarly, $w[i..j]$ denotes the fragment starting at the $i$-th
and ending at the $j$-th character, and $w^{R}$ denotes the reversal, \ie, $w[|w|]w[|w|-1]..w[1]$. The period $\per(w)$ of $w$ is the smallest natural number such that
$w[i]=w[i+\per(w)]$ for all $i=1,2,\ldots,|w|-\per(w)$.
The well-known periodicity lemma~\cite{FineWilf} states that if $p$ and $q$ are periods of $w$ and $p+q\leq |w|$, then $\gcd(p,q)$ is also a period
of $w$.
We focus on detecting palindromes of even length (odd palindromes can be detected with the standard trick of duplicating every letter, see~\cite{Apostolico}).
The palindromic radius at $c$ is the largest $R(c)$ such that $T[c..(c+R(c)-1)] = T[(c-R(c))..(c-1)]^R$. $c$ is the center of
a~palindrome $T[(c-R(c))..(c+R(c)-1)]$.

\paragraph{Karp-Rabin fingerprints.} We use the Karp-Rabin fingerprints~\cite{KR} to quickly check equality of long strings. We choose a large prime $p\geq \max(|\Sigma|,\poly(n))$
and draw $x\in \mathbb{Z}_{p}$ uniformly at random.
Then define $f_x(w[1..k]) = (w[1] + w[2] x + \ldots + w[k] x^{k-1}) \bmod p$
and define fingerprint $\hash(w)$ of a word $w$ to consist of $|w|$, $f_x(w)$, $f_{x}(w^R)$, $x^{|w|}$, $x^{-|w|}$,
which takes $\bigo(1)$ space if $|w|\leq n$.
The following operations take $\bigo(1)$ time:
\begin{mydesc}{\bf erasing a suffix}
\item[\bf concatenation] given $\hash(w)$ and $\hash(v)$, find $\hash(wv)$,
\item[\bf erasing a prefix] given $\hash(wv)$ and $\hash(w)$, find $\hash(v)$,
\item[\bf erasing a suffix] given $\hash(wv)$ and $\hash(v)$, find $\hash(w)$,
\item[\bf reversal] given $\hash(w)$, find $\hash(w^R)$.
\end{mydesc}

The fingerprints allow us to check if two strings are the same. Formally, we assume that $|\Sigma|\leq\poly(n)$. Then,
to check if $u=v$ we compare $\hash(u)$ and $\hash(v)$.
If $u=v$ then $\hash(u)=\hash(v)$, and if $u\neq v$ while $|u|,|v|\leq n$ then $\hash(u)=\hash(v)$ with probability at most $\frac{n}{\poly(n)}$.
The latter situation is called a false positive. Because the running time of our algorithms will be always polynomial in $n$, and we will
be operating on strings of length at most $n$, by the union bound the probability of a false positive can be made $\frac{1}{n^{c}}$
for any $c$ by choosing exponent in $\poly(n)$ large enough. When analyzing the correctness, we assume no false positives.

\paragraph{Landmarks.} Our algorithms stores some values $\bhash(i)=\hash(T[1..i])$.
The intuition is that after reading $T[\head]$ we calculate
the fingerprint of the currently seen prefix and keep it available for some time.
A landmark is a position $i$ such that $\bhash(i)$ is currently stored.
If additionally $i=2^{\level}\cdot j$ for some $j$, $i$ is a landmark on
level $\level$, and if $j$ is odd then $i$ is a landmark strictly on level
$\level$. $\mathcal{Y}_\level$ is the set of landmarks on level $\level$
and $\mathcal{Y}$ is the set of all landmarks. Observe that knowing
$\bhash(t)$ for all $t\in\mathcal{Y}$ is enough to calculate
$\hash(T[t+1\ ..\ t'])$ for any $t,t' \in \mathcal{Y}$ in $\bigo(1)$ time.
Technically, $\mathcal{Y}$ depends on the current value of $\head$, and
``$t$ is a landmark at $h$'' means that $t\in\mathcal{Y}$ just after reading $T[\head]$.

For each level of landmarks we fix its size $b_{\level}$. After reading $T[\head]$, the
$\level$-th level consists of the $b_\level$ most recently seen positions of the form $2^\level \cdot j$,
that is, $2^\level (\lfloor \frac{\head}{2^\level} \rfloor - i)$ for $i = 0, 1, \ldots, b_\level-1$.
The sizes $b_\level$ are chosen differently depending on the desired approximation guarantee.
In all versions, the last level has number $L\le \log n$, and $b_0=...=b_{L-1}$, while no restriction is put on $b_L$.

For such choice of landmarks, after increasing $h$ by one we need to add and remove at
most one landmark per level, which takes $\bigo(\log h)$ time in total. The landmarks
on each level are kept in a random access array with cyclic addressing, thus using 
$\bigo(b_{\level}+1)$ space while allowing accesses and updates in $\bigo(1)$ time.

\section{Space-efficient algorithm}
\label{section:basic}
We start with the basic algorithm. A proper choice of all $b_{\level}$ guarantees small additive or multiplicative error, but the time and
space complexity might be high. Nevertheless, the basic algorithm serves as a good starting point for developing first the space efficient version,
and then finally the time efficient solution.

The idea of the algorithm is that for every possible center $c$ we create a process $P(c)$, which keeps on computing the corresponding
radius $R(c)$. We call a process alive if it has not found $T[c+\Delta]$ such that
$T[c-1-\Delta]\neq T[c+\Delta]$ yet, and dead otherwise. The process starts with $R(c)=0$ and then uses the landmarks to update the value
of $R(c)$ (and also the final answer) whenever possible.
To verify if $R(c) \geq \head-c+1$ we need to check if $T[(2c-\head-1)..h]$ is a palindrome. This requires accessing
$\bhash(\head)$ and $\bhash(2c-\head-2)$ to calculate $\hash(T[(2c-\head-1)..\head])$ and then $\hash((T[(2c-\head-1)..\head])^{R})$, which
are then compared to each other, see Fig.~\ref{fig:radius}.
We can simply maintain the current value of $\bhash(\head)$ but retrieving $\bhash(2c-\head-2)$ is only possible when $2c-\head-2$ is a landmark.
Therefore, the process $P(c)$ can update its $R(c)$ only when $2c-\head-2$ is a landmark, and doing so will be referred to as running $P(c)$ using $2c-\head-2$. If $T[(2c-\head-1)..\head]$
is a palindrome, we say that $P(c)$ succeeds, and otherwise fails.

\begin{figure}[t]
\includegraphics[width=\textwidth]{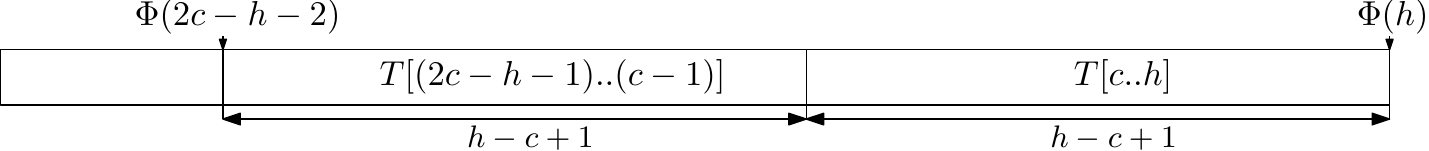}
\caption{Checking if $R(c)\geq \head-c+1$.}
\label{fig:radius}
\end{figure}

We would like to guarantee that running a process is a $\bigo(1)$ time procedure, so we need to quickly check if $2c-\head-2$
is currently a landmark (and if so, access the stored $\bhash(2c-\head-2)$). This can be easily done by iterating through all possible levels, but we want a
faster method. We consider the last $L$-th level separately in $\bigo(1)$ time. For all lower levels, the values $b_{\level}$ are all the same. We compute
the largest power of $2$ dividing $2c-\head-2$, call it $2^{\level}$, then $2c-\head-2$ cannot be a landmark on level larger than $\level$.
On the other hand, if $2c-\head-2$ is a landmark on level $\level'<\level$, then it is also a landmark on level $\level$. Therefore, we only need to consider the
$\level$-th level. This allows us to run any process in $\bigo(1)$ time, and furthermore the state of any $P(c)$
can be fully described just by specifying its center $c$ ($R(c)$ is not stored explicitly, unless mentioned otherwise).
Observe that even if a process is dead, there is no harm in running it again.

The basic algorithm simply runs all processes after reading the next $T[\head]$ using appropriately defined landmarks (depending on variant
and desired error guarantee).
For $\aerr \in \Omega(\frac{n}{\log n})$ or $\varepsilon \in \Omega(1)$, it needs $\bigo(\log n)$ time to process $T[\head]$.

\begin{restatable}{theorem}{restatablespeff}
\label{th:sp_eff}
The basic algorithm approximates the longest palindrome with additive error $\aerr \in [1,n]$ using $\bigo(\frac{n}{\aerr})$ time and words of memory
to process $T[\head]$, and the longest palindrome with multiplicative error $\varepsilon \in [\frac2n,n]$ using $\bigo(\frac{\log(n\varepsilon)}{\log(1+\varepsilon)})$
time and words of memory to process $T[\head]$.
\end{restatable}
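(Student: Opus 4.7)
My plan is to choose the landmark parameters $(L, b_0, \ldots, b_L)$ so that the landmarks are dense enough to give the desired error while keeping $\sum_\level b_\level$ within the stated bound. For the additive variant I would set $L = \lceil\log \aerr\rceil$, $b_0 = \cdots = b_{L-1} = \bigo(1)$, and $b_L = \lceil n/\aerr \rceil$: at any distance $d \le b_L \cdot 2^L$ from $h$, the smallest level $\level$ with $b_\level \cdot 2^\level \ge d$ has spacing $2^\level \le 2^L = \bigo(\aerr)$, so the whole relevant range is covered with absolute spacing $\bigo(\aerr)$. For the multiplicative variant I would set $L = \lceil\log(n\varepsilon)\rceil$ and $b_0 = \cdots = b_{L-1} = \Theta(1/\log(1+\varepsilon))$: at distance $d \in (b \cdot 2^{\level-1}, b \cdot 2^\level]$ the spacing is $2^\level$, so the relative spacing is at most $2/b \le \log(1+\varepsilon)$, well inside a $(1+\varepsilon)$-factor. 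In both cases $\sum_\level b_\level$ equals the memory bound claimed in the theorem.

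The per-step running time equals this memory bound: updating landmarks costs $\bigo(\log n)$ (Section~\ref{section:preliminaries}), and at step $h$ the only processes $P(c)$ that can advance are those for which $2c-h-2$ is a currently active landmark, so there are at most $\sum_\level b_\level$ such centers, each processed in $\bigo(1)$ thanks to the 2-adic lookup trick described just before the statement.

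For correctness, fix a true palindrome with center $c$ and radius $R$, completed at step $h^* = c+R-1$, and set $q = c-R-1$. By the density property above, there is an active landmark $p$ with $p - q \le \aerr$ in the additive case (resp.\ $p - q \le \Theta(\varepsilon R)$ in the multiplicative case). Running $P(c)$ at step $h = 2c-p-2 \le h^*$ certifies a palindromic radius $h - c + 1 = c-p-1 \ge R - \aerr$ (resp.\ $\ge R/(1+\varepsilon)$), and the algorithm records this in its running maximum. The step requiring the most care is showing that $p$ is indeed alive at the specific step $h = 2c-p-2$: a landmark on level $\level$ has lifetime $\Theta(b_\level \cdot 2^\level)$ steps, whereas the elapsed time between $p$ entering the stream and step $h$ is $h-p = 2(c-p-1) \approx 2R$; the parameters above are chosen precisely so that the level on which $p$ lies has lifetime $\Omega(R)$, which would be verified by a brief case analysis on $\level$.
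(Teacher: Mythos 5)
Your parameter choices for the additive case and for small $\varepsilon$ match the paper's: the paper plugs $L=\log \aerr$, $b_L=\infty$ into Theorem~\ref{additive_scheme}, and $D=\Theta(1/\varepsilon)$ into Theorem~\ref{multiplicative_scheme} (which coincides with your $b_\level=\Theta(1/\log(1+\varepsilon))$ when $\varepsilon\le 1$, since $\log(1+\varepsilon)=\Theta(\varepsilon)$ there). Your error analysis and the lifetime argument for the landmark $p$ are essentially a compressed inline version of the proofs of Theorems~\ref{additive_scheme} and~\ref{multiplicative_scheme}, which the paper simply cites.

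There is, however, a gap for large $\varepsilon$. When $\varepsilon>1$ we have $\log(1+\varepsilon)>1$, so your prescription $b_0=\cdots=b_{L-1}=\Theta(1/\log(1+\varepsilon))$ asks for fewer than one landmark per level, which is not a legal choice; rounding up to $b_\level\ge 1$ inflates the space to $\Theta(L)=\Theta(\log(n\varepsilon))$, which for $\varepsilon\in(1,n]$ is larger than the claimed $\bigo\bigl(\frac{\log(n\varepsilon)}{\log(1+\varepsilon)}\bigr)$ by a factor $\log(1+\varepsilon)$. The paper handles $1\le\varepsilon\le n$ with a different landmark layout: instead of keeping $b$ landmarks on every dyadic level, it keeps a single landmark only on every $k$-th level (the last position divisible by $2^{k\cdot i}$ for $i=1,\ldots,\frac{\log n}{k}$, where $k=\log(1+\varepsilon)$), so that level skipping rather than per-level count does the thinning. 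This ``skip levels'' construction is the idea your proposal is missing; with it, the space drops to $\frac{\log n}{\log(1+\varepsilon)}$ as required. A secondary nitpick: for the same reason that $b<1$ is ill-formed, your phrase ``relative spacing at most $2/b\le\log(1+\varepsilon)$'' silently assumes $\varepsilon\lesssim 1$; for $\varepsilon>1$ the correct target ratio is governed by $1/(1+\varepsilon)$, and the argument has to be restated in terms of which skipped level straddles the distance $2R$.
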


\begin{proof}
The algorithm runs, after reading every $T[\head]$, the process $P(\frac{\head+y}{2})$ for every landmark $y$.
For additive approximation, the landmarks are defined as in Theorem~\ref{additive_scheme} with $L=\log \aerr$.
For multiplicative approximation and $\frac2n \le \varepsilon \le 1$,
the landmarks are defined as in Theorem~\ref{multiplicative_scheme} with $D=\Theta(\frac{1}{\varepsilon})$,
but when $1 \le \varepsilon \le n$ we slightly change the definition by choosing $k = \log (1+\varepsilon)$ and keeping as a landmark,
for every $i=1,2,\ldots,\frac{\log n}{k}$, the last position divisible by $2^{k\cdot i}$.
\end{proof}

An optimized version of the basic algorithm will be referred
to as a scheduling scheme. A scheduling scheme should guarantee that any alive $P(c)$ such that $2c-\head-2$ is a landmark is run, unless we 
can either be sure that it would fail anyway, or there is another $P(c')$ such that $c'<c$ and $2c'-\head-2$ is also a landmark, and we can be sure
that it succeeds (in particular, $P(c')$ is still alive).

\begin{restatable}{theorem}{restatableadditivescheme}
\label{additive_scheme}
Any scheduling scheme with $b_L = \infty$ approximates the longest palindrome with additive error
$2^L$.
\end{restatable}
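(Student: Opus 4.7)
The plan is to fix a time $h$ and let $c^{*},R^{*}$ denote the center and radius of the longest palindrome in $T[1..h]$; I will argue that by time $h$ the algorithm has already recorded a palindromic fragment whose length is within $2^{L}$ of $2R^{*}$. The key consequence of the hypothesis $b_{L}=\infty$ is that every multiple of $2^{L}$ not exceeding $h$ is a landmark on level $L$. Let $y^{*}$ be the smallest such multiple with $y^{*}\ge c^{*}-R^{*}-1$; then $y^{*}-(c^{*}-R^{*}-1)<2^{L}$, so the ``would-be'' radius $r:=c^{*}-y^{*}-1$ satisfies $r>R^{*}-2^{L}$, and the moment at which $P(c^{*})$ is scheduled to use $y^{*}$ is $h^{*}=2c^{*}-y^{*}-2\le c^{*}+R^{*}-1\le h$. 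For the boundary case $R^{*}<2^{L}$ the algorithm may simply output $0$, which is already within the claimed error.

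I next show that $P(c^{*})$ is still alive at time $h^{*}$. Any earlier opportunity to run $P(c^{*})$ uses a landmark $y>y^{*}$, so the check is whether $T[y+1..2c^{*}-y-2]$ is a palindrome; this substring is strictly contained in the true palindrome $T[c^{*}-R^{*}..c^{*}+R^{*}-1]$ and is centered at the same point, so it \emph{is} a palindrome and (assuming no Karp--Rabin collisions) the check succeeds. Hence $P(c^{*})$ cannot have died before $h^{*}$. Now the scheduling rule applies to $P(c^{*})$ at time $h^{*}$: either the scheme runs it, or it skips it by claiming $P(c^{*})$ must fail, or it skips it because some $P(c')$ with $c'<c^{*}$ and $2c'-h^{*}-2$ a landmark is guaranteed to succeed. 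The ``sure to fail'' option is ruled out since the sub-palindrome genuinely exists. If the scheme runs $P(c^{*})$ it records a palindrome of length $h^{*}-y^{*}=2r$; if it skips in favor of $P(c')$, the certifying process witnesses a palindrome of length $h^{*}-(2c'-h^{*}-2)=2(c^{*}-c')+2r>2r$. Either way the algorithm's recorded maximum at time $h$ is at least $2r$, giving the additive bound.

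The principal obstacle is the possibility of a chain of skips: the witness $P(c')$ could itself be skipped for some $P(c'')$ with $c''<c'$, and so on. The cleanest way to handle this is to view a scheduling scheme as recording the maximum length of any verification it is \emph{certain} of, not only those whose fingerprint comparison is physically performed; then any chain of sure-to-succeed substitutions still pins down a recorded palindrome of length at least $2r$, and the bound propagates to the reported output.
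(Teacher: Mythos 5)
Your proof is correct and follows the paper's argument: pick a level-$L$ landmark $y^*$ within distance $2^L$ of the left end of the longest palindrome, show $P(c^*)$ is still alive when $y^*$ first becomes usable (the earlier checks test strict sub-palindromes of the true one, so they succeed), and then invoke the scheduling-scheme guarantee to conclude a radius exceeding $R^*-2^L$ is recorded. The paper states this more tersely and leaves the chain-of-skips issue implicit; rather than reinterpreting what the scheme ``records,'' the cleanest way to close the gap is to observe that any chain of sure-to-succeed substitutions has strictly decreasing centers and hence terminates, and the terminal process has no remaining skip reason, so it is actually run and certifies a radius strictly larger than $r$.
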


\begin{proof}
Consider an arbitrary palindrome $T[(c-x)..(c+x-1)]$. We will show that any scheduling scheme with $b_{L}=\infty$ returns at least $x-2^{L}$.
We can assume $x \geq 2^{L}$, then there exists $y\in (x-2^{L},x]$ such that $2^{L}\divides c-y-1$.
Therefore, $c-y-1$ is permanently a landmark on level $L$. $P(c)$ will be alive at $c+y-1$, so by the properties of a scheduling
scheme we will run a process detecting a~palindrome with radius at least $y > x-2^{L}$.
Thus any scheduling scheme with $b_{L}=\infty$ approximates the longest palindrome with additive error $2^{L}$.
\end{proof}

\begin{restatable}{theorem}{restatablemultiplicativescheme}
\label{multiplicative_scheme}
Any scheduling scheme with $b_0 = b_{1} = \ldots, b_{\log(n/D)} = D$ for $D\geq 6$ approximates the longest palindrome with multiplicative 
error $1+\bigo(1/D)$.
\end{restatable}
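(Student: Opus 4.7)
The plan is to mirror the argument for Theorem~\ref{additive_scheme}: given an arbitrary palindrome $T[(c-x)..(c+x-1)]$ of radius $x$, I will exhibit a step $h=c+y-1$ with $y\le x$ and $y\ge x\,(1-O(1/D))$ at which $c-y-1$ is a landmark. Since the palindrome really has radius $x\ge y$, process $P(c)$ is still alive at $h$, so by the definition of a scheduling scheme either $P(c)$ itself is run at $h$ and succeeds (witnessing a palindrome of radius $y$), or some $P(c')$ with $c'<c$ whose landmark $2c'-h-2$ is also present is run and succeeds (witnessing a palindrome of radius $h-c'+1>y$). In either case the scheme detects a palindrome of radius at least $y$, which converts directly into a reported length of at least $2y$.

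To pick the right level, set $L=\log(n/D)$, let $\lambda_{0}$ be the least integer with $2^{\lambda_{0}}\ge 2x/(D-1)$, and take $\lambda=\min(L,\lambda_{0})$. Let $y$ be the largest value in $[1,x]$ such that $c-y-1$ is divisible by $2^{\lambda}$; since the multiples of $2^{\lambda}$ are $2^{\lambda}$ apart and one checks $2^{\lambda}\le x$ in either sub-case for $D\ge 6$, such a $y$ exists. The choice forces $2^{\lambda}<4x/(D-1)$ (by minimality when $\lambda<L$, and because $2^{L}\le 2x/(D-1)$ is exactly the situation triggering the cap when $\lambda=L$), hence $y>x-2^{\lambda}>x\bigl(1-4/(D-1)\bigr)$.

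It remains to check that $c-y-1$ is a landmark on level $\lambda$ at step $h=c+y-1$. The number of multiples of $2^{\lambda}$ in $(c-y-1,\,c+y-1]$ equals $\lfloor 2y/2^{\lambda}\rfloor$, which is at most $D-1$ because $2^{\lambda}\ge 2x/(D-1)\ge 2y/(D-1)$; so $c-y-1$ is among the most recent $D$ multiples of $2^{\lambda}$ at time $h$ and is therefore stored, as $b_{\lambda}=D$. In the edge case $\lambda=L$ the argument is even simpler: there are only $h/2^{L}\le D$ multiples of $2^{L}=n/D$ in $[1,h]$, so every such multiple is always a landmark on level $L$, and $c-y-1$ is one of them by construction.

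Combining the pieces, the scheme detects a palindrome of radius at least $x\bigl(1-4/(D-1)\bigr)$, so the multiplicative error is at most $(D-1)/(D-5)=1+O(1/D)$, matching the statement. I expect the only genuine bookkeeping obstacle to be the case split between $\lambda<L$ and $\lambda=L$: for palindromes so long that the ``natural'' scale $2x/D$ exceeds $n/D$, we must fall back to level $L$, and it is the observation that every multiple of $n/D$ is automatically a landmark on level $L$ that cleanly handles this regime without weakening the bound.
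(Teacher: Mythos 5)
Your proof is correct and follows essentially the same strategy as the paper's: pick the smallest scale $\lambda_0$ at which the bucket-count argument gives at most $D$ candidate landmark positions, take the nearest multiple of $2^{\lambda_0}$ to the palindrome's left end as the landmark, and bound the multiplicative error by $x/(x-2^{\lambda_0})=1+\bigo(1/D)$ via minimality of $\lambda_0$. The paper splits into $\level=0$ (exact detection because all of the last $D$ positions are level-$0$ landmarks) and $\level>0$ (the counting argument), which you fold into a single argument since for $\lambda_0=0$ your bound forces $y=x$.

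Where you genuinely differ, and arguably improve, is in capping $\lambda=\min(L,\lambda_0)$ and dispatching the case $\lambda_0>L$ explicitly. The paper's proof asserts the existence of a landmark on level $\level$ without verifying $\level\le L$; for palindromes of radius close to $n/2$ one can have $\level=L+1$, and then level $\level$ simply does not exist. Your fix -- observing that there are at most $D$ multiples of $2^{L}=n/D$ in $[1,h]$, so every one of them is permanently a landmark on level $L$ since $b_{L}=D$, and then checking that $2^{L}<2x/(D-1)$ still yields $y>x(1-\bigo(1/D))$ -- closes this small gap cleanly. So the approach is the same, but your version is slightly more careful at the extreme-radius boundary.
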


\begin{proof}
Consider an arbitrary palindrome $T[(c-x)..(c+x-1)]$. We will show that any such scheduling scheme returns at least $x/(1+\bigo(1/D))$.
Let $\level$ be the smallest integer such that $(D-1) \cdot 2^\level \ge 2 \cdot x$. We have two cases.

\begin{description}
\item[$\level=0$] After reading $T[c+x-1]$, all $c+x-1,c+x-2,\ldots,c+x-D$ are landmarks on level 0, and $c-x-1$ is one
of them because $2x < D$, so $T[(c-x)..(c+x-1)]$ or a longer palindrome is detected.

\item[$\level>0$] In the interval $[c-x-1,c+x-1]$ there are at most $\left\lfloor \frac{2x}{2^\level} \right\rfloor +1 \le D$  numbers divisible by $2^{\level}$,
thus there exists $y \in (x-2^{\level},x]$ such that $c-y-1$ was a landmark on level $\level$ after reading $T[c+x-1]$. As $P(c)$ is still alive
at $c+x-1$, we will detect a palindrome of radius at least $y$. In other words, we will approximate $x$ with additive error $2^{\level}$.
However, since $\level$ was chosen to be minimal, we have that $2 \cdot x > (D-1) \cdot 2^{\level-1}$, so we can bound the multiplicative error from above by
$\frac{x}{x-2^{\level}}$, which is at most $1+\frac{1}{\Omega(D)}$ for $D\geq 6$.
\end{description}
Therefore, any such scheduling scheme approximates the longest palindrome with multiplicative error $1+\bigo(1/D)$.
\end{proof}

\begin{restatable}{lemma}{restatabledelay}
\label{lemma:delay}
If $b_{\level}\geq 12$ for all $\level\leq L$, then for any $\Delta = 2^{\ell}-1$ and for any $c$ there is at least one $h \in [c+5\Delta,c+6\Delta]$ such that 
$2c-\head-2$ is a landmark at $\head$.
\end{restatable}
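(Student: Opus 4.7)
The plan is to rephrase the condition in terms of the ``mirror'' position that $P(c)$ must look up. At time $\head$, the relevant position is $q(\head) := 2c-\head-2$, and the map $\head \mapsto q(\head)$ is an involution on the integers. As $\head$ sweeps the interval $[c+5\Delta, c+6\Delta]$, the value $q(\head)$ sweeps $[c-6\Delta-2, c-5\Delta-2]$, a run of $\Delta+1 = 2^{\ell}$ consecutive integers. In particular this mirror window contains at least one multiple of $2^{\ell}$; fix one, call it $p^*$, and let $\head^* := 2c-p^*-2 \in [c+5\Delta, c+6\Delta]$ be the corresponding candidate for $\head$.

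Since $2^{\ell} \divides p^*$, the position $p^*$ is eligible to be a landmark on level~$\ell$, and the only remaining question is whether it is still retained in $\mathcal{Y}_{\ell}$ at time $\head^*$. Write $p^* = 2^{\ell} j$. By the definition of the sliding window of landmarks, the landmarks on level $\ell$ at time $\head^*$ are exactly $\{2^{\ell}(\lfloor \head^*/2^{\ell}\rfloor - i) : 0 \le i \le b_{\ell}-1\}$, so it suffices to verify $0 \le \lfloor \head^*/2^{\ell}\rfloor - j \le b_{\ell}-1$.

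Nonnegativity is immediate since $\head^* - p^* = 2(c-p^*)-2 \ge 2(5\Delta+2)-2 > 0$. For the upper bound, using $c - p^* \le 6\Delta + 2$ and $\Delta = 2^{\ell}-1$, I get
\[
\head^* - p^* = 2(c-p^*)-2 \le 12\Delta + 2 = 12\cdot 2^{\ell} - 10,
\]
so $\lfloor \head^*/2^{\ell}\rfloor - j = \lfloor (\head^*-p^*)/2^{\ell}\rfloor \le 11 \le b_{\ell}-1$ under the hypothesis $b_{\ell}\ge 12$. Hence $p^*$ is a landmark at $\head^*$, which proves the claim.

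The only piece requiring care is the bookkeeping of constants: the window $[c+5\Delta, c+6\Delta]$ has length $\Delta$, which after mirroring contains exactly $2^{\ell}$ integers and therefore just barely forces one multiple of $2^{\ell}$, while the factor of $2$ in the mirror map $\head \mapsto 2c-\head-2$ turns the gap of length roughly $6\Delta$ between $c$ and $\head^*$ into a gap of roughly $12\cdot 2^{\ell}$ between $p^*$ and $\head^*$, matching the bound $b_{\ell}\ge 12$. I do not expect any genuine obstacle beyond this matching of constants.
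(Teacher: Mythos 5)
Your proof takes essentially the same route as the paper's: as $h$ sweeps $[c+5\Delta, c+6\Delta]$, the mirror $2c-h-2$ sweeps $2^{\ell}$ consecutive integers and therefore hits exactly one multiple of $2^{\ell}$; then $h^*-p^* = 2(c-p^*)-2 \le 12\cdot 2^{\ell}-10$, which under $b_{\ell}\ge 12$ shows $p^*$ is among the $b_{\ell}$ most recent positions divisible by $2^{\ell}$. Your arithmetic (including the floor computation $\lfloor h^*/2^{\ell}\rfloor - j = \lfloor(h^*-p^*)/2^{\ell}\rfloor \le 11$) is correct and in fact spelled out more carefully than the paper's version, which simply asserts the bound $h-(2c-h-2) < 12\cdot 2^{\ell}$ and membership among the $12$ most recently seen multiples.

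However, there is a gap you should close. You invoke $b_{\ell}\ge 12$, but the lemma's hypothesis only supplies $b_{\level}\ge 12$ for $\level \le L$, and levels with index $> L$ simply do not exist in the landmark scheme. If $\ell > L$, your argument produces a position $p^*$ divisible by $2^{\ell}$ but there is no ``level $\ell$'' to place it on, so the claim ``$p^*\in\mathcal{Y}_{\ell}$'' is vacuous. The paper handles this with an explicit case split: for $\ell\le L$ it argues exactly as you do, while for $\ell > L$ it observes that $2^L \divides p^*$ and $p^*$ is ``definitely a landmark anyway'' — justified in the concrete schemes because the top level covers (essentially) the entire already-seen prefix ($b_L=\infty$ in the additive scheme, and $b_L\cdot 2^L \approx n$ in the multiplicative scheme). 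You should add this $\ell > L$ branch, or at least flag that the argument presented only covers $\ell \le L$.
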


\begin{proof}
Observe that there exists unique $h \in [c+5\Delta,c+6\Delta]$ such that $2^{\ell} \divides 2c-h-2$. Because $h - (2c-h-2) \le 2+12\cdot \Delta < 12\cdot2^{\ell}$, after reading $T[\head]$ there are two possibilities. If $\ell \le L$ then $2c-h-2$ is among the 12 last seen positions divisible by $2^\ell$. If $\ell>L$ then $2c-h-2$ is definitely a landmark anyway.
\end{proof}

\section{Maintaining the alive processes}
\label{section:time}

To implement a scheduling scheme, we want to know which processes are alive. Maintaining
them explicitly is too space-expensive, though.
Therefore, we will store a compressed approximate representation of all alive processes.
Intuitively, we will group together nearby alive processes using the notion
of a partition scheme described below. The representation will not be exact
in the sense that it might report some dead processes as alive (but then they will
have some additional properties).
Such information is not providing any speedup by itself yet,
but later in Section~\ref{section:time2} we will use it to implement any scheduling scheme efficiently.
In this section, we focus on maintaining the information.

\paragraph{Partition scheme.} We maintain a partition of $T[1..\head]$ into disjoint segments stored in a linked list.
The length of every segment is a power of $2$, their lengths are nonincreasing as one moves to the right, and there is $M$
such that we
have between $A$ and $B$ segments of length $2^{\ell}$ for every $\ell=0,1,\ldots,M-1$, and between $1$ and $B$ segments of
length $2^{M}$. $A$ and $B$ are constants to be specified later. After increasing $h$ by one, a new segment of
length $2^{0}$ appears, then we possibly take two adjacent segments of length $2^{\ell}$ such that the
segment on their left (if any) is longer, and merge them into one segment of length $2^{\ell+1}$. We call this a partition scheme, as there is some
flexibility as to when the merging happens.

\begin{restatable}{lemma}{restatablepartition}
\label{lemma:partition}
There is a partition scheme with $A=3$ and $B=5$, which guarantees that after adding a new segment of length $2^{0}$ we can
merge in $\bigo(1)$ time at most one pair of adjacent segments of length $2^{\ell}$, such that there are $3$ segments
of the same length $2^{\ell}$ on their right.
\end{restatable}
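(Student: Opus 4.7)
The plan is to exhibit an explicit partition scheme, then verify its correctness by induction on the step $h$. Let $c_\ell$ denote the number of length-$2^\ell$ segments in the current partition. I would store the segments in a doubly-linked list together with an auxiliary array of pointers $p_0, p_1, \ldots$ to the leftmost segment at each level. After reading $T[h]$, append a new length-$2^0$ segment at the right end. Let $\ell^{\star}$ be the smallest index with $c_{\ell^{\star}} \geq 5$; if such an index exists, merge the two leftmost length-$2^{\ell^{\star}}$ segments (the one at $p_{\ell^{\star}}$ and its right neighbour) into one length-$2^{\ell^{\star}+1}$ segment, place it immediately to the left of the remaining level-$\ell^{\star}$ segments, and update the pointer array. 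This is the only merge performed in the step, and it runs in $\bigo(1)$ time.

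By induction on $h$ I would establish two invariants that together imply all three properties of the lemma. The first is the stated count invariant: $3 \leq c_\ell \leq 5$ for $\ell < M$ and $1 \leq c_M \leq 5$. The second is a ``non-cascading'' auxiliary invariant stating that at the start of every step, no two adjacent levels are simultaneously at their maximum allowed value (where the maximum for level $0$ is $4$ and for higher levels is $5$). Granting these, the ``at most one merge'' claim holds by the scheme's construction, the ``$3$ same-length segments on the right'' condition holds because the merged pair are the leftmost two of at least $5$ segments at level $\ell^{\star}$, and the $\bigo(1)$ running time is immediate.

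The main obstacle is verifying the auxiliary invariant across the inductive step. A case analysis on $\ell^{\star}$ shows that the only way a merge at level $\ell^{\star}$ could create adjacent maxima at levels $\ell^{\star}+1$ and $\ell^{\star}+2$ is if just before the current step one had $c_{\ell^{\star}+1} = 4$ and $c_{\ell^{\star}+2} = 5$. One rules this out by arguing backward: a state like $c_0 = 4, c_1 = 4, c_2 = 5$ at step start would force the scheme to have merged at level $2$ in the previous step, contradicting $c_2 = 5$ at that step's end. The underlying combinatorial reason is that merges at level $\ell$ occur at amortized rate $2^{-\ell-1}$ per step, so the greedy rule of merging at the smallest needy level naturally staggers them, so that no two are ever required simultaneously.
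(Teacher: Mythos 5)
Your overall plan (greedy merging at the smallest level holding at least five segments, plus an induction on $h$ establishing the count bounds) is plausible, but the auxiliary invariant you propose is not inductive, and this is a genuine gap. Consider the configuration $(c_0,c_1,c_2,c_3,c_4)=(3,4,5,4,5)$. It satisfies both of your invariants: all counts lie in $\{3,4,5\}$, and no two adjacent levels are at their maxima (in the pairs at levels $(1,2)$ and $(3,4)$ the lower level holds $4$, not its maximum $5$). Yet one step of your scheme increments $c_0$ and merges at level $2$, giving $(4,4,3,5,5)$, which already violates your ``no adjacent maxima'' condition; four further steps give $(3,5,3,5,5)$, $(4,3,4,5,5)$, $(3,4,4,5,5)$ and then $(4,4,4,3,6)$, i.e.\ a level with six segments, violating $B=5$. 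So the stated hypothesis does not propagate. The backward argument you sketch does not repair this: it only rules out the bottom-level pattern $c_0=4,c_1=4,c_2=5$, while the dangerous pattern occurs at all levels (here $c_2=5$, $c_3=4$, $c_4=5$), and such states are themselves producible in one greedy step from states satisfying your invariants (e.g.\ $(4,3,5,4,5)$ yields $(3,4,5,4,5)$ via a level-$0$ merge), so the backward chase does not close after one step. Likewise, the amortized-rate remark (level-$\ell$ merges happen every $2^{\ell+1}$ steps on average) is an averaging statement and does not give the worst-case staggering you need.

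What is missing is a stronger, genuinely inductive invariant. The paper maintains the Kaplan--Tarjan regularity condition: whenever $c_i=5$ there exists $j<i$ with $c_j=3$ and $c_{j+1}=\cdots=c_{i-1}=4$; in particular $c_0\in\{3,4\}$ at the end of every step and two adjacent levels can never both hold five segments, which is exactly what prevents a merge from pushing a count to six. The paper merges at level $0$ when $c_0$ reaches five, and otherwise only at a level $i$ with $c_1=\cdots=c_{i-1}=4$ and $c_i=5$, and verifies case by case that this preserves the condition; grouping consecutive levels with equal counts into a list of lists is what makes locating that level an $\bigo(1)$ operation. (Your pointer array $p_0,p_1,\ldots$ alone does not let you find the smallest level with at least five segments in constant time; with your rule you would additionally maintain, say, the sorted list of levels currently holding five segments.) Your greedy rule can in fact be proved correct, but only by replacing your auxiliary invariant with this support invariant---which excludes states such as $(3,4,5,4,5)$ outright---and checking its preservation under your merge rule; that verification is precisely the content your proposal is missing.
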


\begin{proof}
This is a simple example of the recursive slow-down method of Kaplan and Tarjan~\cite{KaplanSlowdown}.
Let $2^{a_{1}}, 2^{a_{2}}, 2^{a_{3}}, \ldots$ be the lengths of the segments in the current partition, where $a_{1}\leq a_{2}\leq a_{3}\leq\ldots$.
We group together all segments with the same length, and denote the number of segments of length $2^{\ell}$ by $c_{\ell}$.
We will show how to maintain $c_{\ell}\in\{3,4,5\}$ for every $\ell=0,1,2,\ldots,M-1$ and $c_{M}\in\{1,2,3,4,5\}$,
where $2^{M}$ is the maximum length of a segment
in the current partition. To this end, we will keep the following invariant: if $c_{i}=5$ then there exists $j\in\{0,1,\ldots,i-1\}$ such that $c_{j}=3$ 
and $c_{j+1}=\ldots=c_{i-2}=c_{i-1}=4$. We call such partition valid.

We must show that, given a valid partition of $T[1..\head]$, we
can construct in $\bigo(1)$ time a valid partition of $T[1..\head+1]$. We start with creating a new segment of length $2^{0}$ and adding it to the
previous partition, which increases $c_{0}$ by one. Now there are two cases.

\begin{description}
\item[$c_{0}=5$] We merge two (leftmost) segments of length $2^{0}$ into a segment of length $2^{1}$, or in other words
we decrease $c_{0}$ by two ($c_0$ is now equal to 3) and increase $c_{1}$ by one. Because the initial value of $c_{0}$ was $4$, the only way the invariant could have been
broken is that $c_{1}$ was $3$, $c_{2}=\ldots=c_{i-1}=4$ and $c_{i}=5$ for some $i\geq 3$. But then the new $c_{1}$ becomes $4$,
and all $c_{2}, c_{3}, \ldots, c_{i-1}$ are now $4$, so the invariant holds.
\item[$c_{0}=4$] If there is $i$ such that $c_1=\ldots=c_{i-2}=c_{i-1}=4$ and $c_{i}=5$, then we merge the two
(leftmost) segments of length $2^{i}$ into a segment of length $2^{i+1}$, which decreases $c_{i}$ by two and increases $c_{i+1}$ by one.
As in the previous case, the only way the invariant could have been broken is that $c_{i+1}$ was $3$, $c_{i+2}=c_{i+3}=\ldots=c_{j-1}=4$
and $c_{j}=5$ for some $j\geq i+2$. Then $c_{i}$ becomes $3$, and all $c_{i+1}, c_{i+2},\ldots,c_{j-1}$ are now $4$, so the invariant holds.
\end{description}

To implement the update, we group together all consecutive $i$'s with the same value of $c_{i}$. In other words, we store a list of lists of segments.
This allows us to find $i$ from the second case in $\bigo(1)$ time.
\end{proof}

Instead of storing every alive $P(c)$ explicitly, for every segment we group together all alive processes such that $c$ lies inside.
We need the following result, which follows from a definition of a palindrome, see~\cite{Apostolico}.

\begin{lemma}
\label{lemma:periodicity}
If $c<c'$, $c'-c\leq 2^{\ell}$ and $R(c),R(c')\geq 2^{\ell}$, then $2(c'-c)$ is a period of $T[(c-2^{\ell})..(c'+2^{\ell}-1)]$.
\end{lemma}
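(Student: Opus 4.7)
The plan is to chain the two palindromic reflections. Set $p = 2(c'-c)$; the goal is to show $T[j] = T[j+p]$ for every $j \in [c-2^\ell,\ c'+2^\ell-1-p]$, which is exactly the statement that $p$ is a period of $T[(c-2^\ell)..(c'+2^\ell-1)]$.

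First, I would unpack each palindrome hypothesis as a reflection identity. Since we are dealing with even palindromes, $R(c)\ge 2^\ell$ is equivalent to
\[
T[j] = T[2c-1-j] \quad \text{for every } j \in [c-2^\ell,\ c+2^\ell-1],
\]
and similarly $R(c')\ge 2^\ell$ gives $T[j'] = T[2c'-1-j']$ for every $j' \in [c'-2^\ell,\ c'+2^\ell-1]$.

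Next, I would compose the two reflections: for any $j$ that lies in the $c$-range and whose image $2c-1-j$ lies in the $c'$-range,
\[
T[j] \;=\; T[2c-1-j] \;=\; T[2c'-1-(2c-1-j)] \;=\; T[j + 2(c'-c)] \;=\; T[j+p].
\]
So the task reduces to identifying the set of $j$ for which both conditions hold.

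The routine but slightly bookkeeping-heavy step is the range intersection. Applying the $c'$-reflection to $2c-1-j$ requires $j \in [2c-c'-2^\ell,\ 2c-c'+2^\ell-1]$, so together with the $c$-range we obtain
\[
j \in \bigl[\max(c-2^\ell,\ 2c-c'-2^\ell),\ \min(c+2^\ell-1,\ 2c-c'+2^\ell-1)\bigr].
\]
Using the assumption $c < c' \le c+2^\ell$, the lower bound simplifies to $c-2^\ell$ and the upper bound to $2c-c'+2^\ell-1 = c'+2^\ell-1-p$, yielding exactly the interval stated above. This is the only place where the constraint $c'-c \le 2^\ell$ is actually used (it guarantees the intersection is nonempty and equal to the claimed range), and it is the main thing one must be careful about; everything else is immediate from the reflection identities.
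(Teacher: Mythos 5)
Your proof is correct. The paper states Lemma~\ref{lemma:periodicity} without proof, citing Apostolico et al.; your argument — unpacking each $R(\cdot)\ge 2^\ell$ as a reflection identity $T[j]=T[2c-1-j]$ (resp. $T[j]=T[2c'-1-j]$) on the appropriate interval, composing the two reflections to get $T[j]=T[j+2(c'-c)]$, and then intersecting the ranges of validity — is the standard derivation and supplies exactly what the paper's citation leaves implicit. The interval arithmetic is carried out correctly: using $c<c'$ to simplify the max and min, the effective range becomes $[c-2^\ell,\,2c-c'+2^\ell-1]=[c-2^\ell,\,c'+2^\ell-1-p]$, which is precisely the index set needed to certify that $p=2(c'-c)$ is a period of $T[(c-2^\ell)..(c'+2^\ell-1)]$. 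You also correctly flag that $c'-c\le 2^\ell$ is what keeps this intersection nonempty (and, equivalently, keeps $p$ below the length of the word), which is the only non-automatic point.
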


The intuition is that in a segment of length $2^{\ell}$ either there are at most $4$ alive processes which can be kept explicitly,
or there are at least $5$ of them and the whole segment is periodic with period at 
most $2^{\ell-1}$. Hence for every segment we store either a sparse or a dense description, depending (roughly) on the periodicity of
the corresponding fragment.

\begin{description}
\item[Sparse description.] We explicitly store a list of all processes inside the segment, which can be potentially still alive.
We guarantee that there are at most $4$ processes on that list, and that if a process is not on the list, it is surely dead. We do not guarantee
that all processes on the list are still alive, but whenever we run one of them and the check fails, we declare it dead and remove from the list.
The processes currently on the list are called relevant.

\begin{figure}[t]
\includegraphics[width=\textwidth]{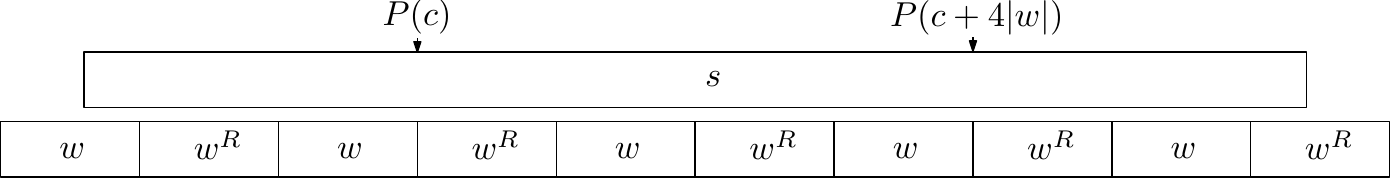}
\caption{Alive processes inside $s$ with a dense description are of the form $P(c+\alpha|w|)$.}
\label{fig:dense}
\end{figure}

\item[Dense description.] We guarantee that there exists a word $w$ such that $|w|\leq\frac{1}{4} 2^{\ell}$ for which the whole segment 
of length $2^{\ell}$
is a subword of $(ww^{R})^{\infty}$, see Fig.~\ref{fig:dense}. Denoting the segment by $s$, this implies that $\per(s)\leq\frac{1}{2}|s|$
and $s$ has a palindromic subword of length $\per(s)$. In such a case we store a~multiple of the period, denoted by $\kpers = k \per(s) \leq\frac{1}{2}2^{\ell}$,
such that the only alive processes inside the segment are of the form $P(c+\alpha \frac{\kpers}{2})$ for $\alpha\geq 0$, where $T[c..(c+\kpers-1)]$ is an even palindrome fully within the segment.
(We do not require that all such processes are still alive.) We store $c$ and $\kpers$, which is enough to run any relevant process
inside $s$ in $\bigo(1)$ time, where relevant means of the form $P(c+\alpha\frac{\kpers}{2})$.
No other process inside $s$ can be alive.
\end{description}

We use Lemma~\ref{lemma:partition} to maintain a partition of $T[1..\head]$ into segments. The description of every segment requires
just $\bigo(1)$ space, making the total additional space complexity $\bigo(\log n)$. 
After reading $T[\head]$ we create a sparse
description of the new segment of length $2^{0}$ and then need to merge at most one pair of adjacent segments. After having updated the
partition, we can simply run all relevant processes.
Therefore, now we need to show how to merge a pair of adjacent segments $s$ and $s'$ of length
$2^{\ell}$ as to obtain a new segment $ss'$. 
If their descriptions are sparse, we merge the lists of $s$ and $s'$ and either get at most $4$ processes, which constitute a valid sparse description,
or at least $5$ processes $P(c_{1}),P(c_{2}),\ldots,P(c_{5})$. The following observation follows from the Lemma~\ref{lemma:partition}.

\begin{observation}
\label{obs:lifespan}
When a segment of length $2^{\ell}$ is being created, the number of already seen characters on its right is at most
$3\cdot 2^{\ell-1}+5(2^{\ell-1}-1)=2^{\ell+2}-5$.
When it is being destroyed, there are at least $3(2^{\ell+1}-1)$ of them.
\end{observation}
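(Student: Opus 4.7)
The plan is to read off the observation directly from the invariant maintained in the proof of Lemma~\ref{lemma:partition}, namely that $c_i\in\{3,4,5\}$ for every $i<M$. For both bounds I will locate the newly created (resp.\ just-destroyed) segment $s$ inside the left-to-right ordering of segments, recall that lengths are nonincreasing from left to right, and then sum the lengths of the segments sitting to the right of $s$ at the relevant moment, using the invariant either as an upper or a lower bound on the per-level counts.

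For the creation bound (assuming $\ell\geq 1$; the case $\ell=0$ is vacuous, since a new length-$1$ segment is simply appended as the rightmost one), I will argue that $s$ is produced by merging the two leftmost length-$2^{\ell-1}$ segments, as explicitly described in the proof of Lemma~\ref{lemma:partition}. Immediately after that merge $c_{\ell-1}$ has dropped from $5$ to $3$, and those three remaining length-$2^{\ell-1}$ segments lie to the right of $s$ (which is now the rightmost length-$2^\ell$ segment, at the old position of its two constituents). All segments further to the right have strictly smaller lengths, with at most $B=5$ of each, so the total is at most
\[
3\cdot 2^{\ell-1}+5\sum_{i=0}^{\ell-2}2^{i}=3\cdot 2^{\ell-1}+5\bigl(2^{\ell-1}-1\bigr)=2^{\ell+2}-5.
\]

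For the destruction bound, $s$ disappears when it is merged with its adjacent length-$2^\ell$ sibling into a length-$2^{\ell+1}$ segment, and this is triggered exactly when $c_\ell=5$. Since the two merged segments are the leftmost pair of that length class, I will take the worst case where $s$ is the \emph{second} leftmost of them; then exactly three further length-$2^\ell$ segments sit to its right, and by the invariant there are at least $A=3$ segments of every smaller length $2^{\ell-1},\ldots,2^0$ beyond those. Summing these minimum contributions yields
\[
3\cdot 2^{\ell}+3\sum_{i=0}^{\ell-1}2^{i}=3\cdot 2^{\ell}+3\bigl(2^{\ell}-1\bigr)=3\bigl(2^{\ell+1}-1\bigr).
\]

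The main subtlety I anticipate is justifying that the bound $c_i\geq 3$ is really in force at the instant the merge fires, since the partition is transiently out of balance during a step. This I would handle by noting that the merge at level $\ell$ happens because $c_\ell=5$ was already the case at the \emph{end} of the previous step; the only action preceding any merging in the current step is incrementing $c_0$, which leaves all $c_i$ with $i\geq 1$ unchanged. Consequently the invariant from the previous step directly furnishes $c_i\geq 3$ for every $i<\ell$. The boundary case $\ell=M$ requires no separate treatment, because it only concerns the absence of larger segments to the \emph{left} of $s$, which is irrelevant to the count on its right.
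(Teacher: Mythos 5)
Your argument is correct and fills in exactly the implicit reasoning the paper expects, since Observation~\ref{obs:lifespan} is stated as a direct consequence of Lemma~\ref{lemma:partition} with no written proof: locate the new (resp.\ dying) segment, use the ``$3$ segments of the same length on the right'' guarantee from Lemma~\ref{lemma:partition}, and bound the remaining shorter segments by $B=5$ from above and $A=3$ from below. Your handling of the transient state (only $c_0$ has been incremented before the unique merge of the step fires) is the right way to justify that the $c_i$ bounds still apply at the merge instant, and the $\ell=0$ edge case is correctly set aside.
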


Hence any $P(c_{i})$ could have been run everywhere in the interval $[c_{i}+2^{\ell}+2^{\ell+2}-5,c_{i}+3(2^{\ell+1}-1)]$,
and by Lemma~\ref{lemma:delay} had at least one landmark available in that interval,
so its radius must be at least $2^{\ell}+2^{\ell+2}-4\geq 2^{\ell+1}$. Therefore, we have a list of $5$ processes inside a segment of length
$2^{\ell+1}$, all of which have radii at least $2^{\ell+1}$ ($\ell=0$ must be considered separately).
This suffices to construct a dense description by the following lemma.

\begin{restatable}{lemma}{restatablemerge}
\label{lemma:merge}
Given a list of $m\geq 5$ processes $P(c_{1}),P(c_{2}),\ldots,P(c_{m})$ inside a segment of length $2^{\ell}$, such that their radii are all at least $2^{\ell}$
and no other process inside is alive, we can construct in $\bigo(m+\log n)$ time a dense description.
\end{restatable}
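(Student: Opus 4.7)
The plan is to identify a short period of the segment via Lemma~\ref{lemma:periodicity} applied to the closest pair of consecutive centers, then use the resulting palindromic-periodic structure to align all $c_j$ into a single arithmetic progression compatible with the dense description.

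First, I sort the centers as $c_1 < c_2 < \cdots < c_m$ (a linear scan suffices if the input lists being merged are already sorted, as is maintained by our invariants) and compute the smallest consecutive gap $d := \min_i (c_{i+1} - c_i)$. Since the centers sit in a segment of length $2^\ell$, we have $c_m - c_1 < 2^\ell$; with at least $m-1 \ge 4$ gaps summing to this, pigeonhole gives $d \le 2^{\ell-2}$. Let $i^*$ realise $d$. Applying Lemma~\ref{lemma:periodicity} to $(c_{i^*}, c_{i^*+1})$, whose radii are both $\ge 2^\ell$, the fragment $T[(c_{i^*}-2^\ell)..(c_{i^*+1}+2^\ell-1)]$ has period $p := 2d \le 2^{\ell-1}$; this fragment contains the entire segment, so $p$ is a period of the segment.

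Next, set $w := T[c_{i^*}..c_{i^*}+d-1]$, which has $|w| = d \le \frac{1}{4}\cdot 2^\ell$. Palindromicity at $c_{i^*}$ yields $T[c_{i^*}-d..c_{i^*}-1] = w^R$, and combined with the period $p=2d$ the segment becomes a subword of $(ww^R)^\infty$. A direct inspection of $(ww^R)^\infty$ shows that the centers of its even palindromes of length at least $2d$ form a single arithmetic progression of common difference $d$; since each $c_j$ has radius $\ge 2^\ell \gg d$, we obtain $c_j \equiv c_{i^*} \pmod{d}$ for every $j$. I then set $\kpers := 2d$ and pick $c$ to be an element of the AP $\{c_{i^*} + \alpha d : \alpha \in \mathbb{Z}\}$ (shifting by multiples of $2d$ if necessary) for which $[c, c+2d-1]$ lies inside the segment; this is always possible because the palindrome around $c_{i^*}$ has radius $\ge 2^\ell \gg d$, leaving ample room to shift. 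With these choices every $c_j$ equals $c + \alpha\,\kpers/2$ for some $\alpha \ge 0$, and the dense-description requirement that no other alive process lies in the segment is supplied directly by the lemma's hypothesis.

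The running time is $\bigo(m)$ for the scan that produces $d$, plus $\bigo(\log n)$ for arithmetic on positions up to $n$ and any auxiliary fingerprint bookkeeping used to confirm the boundary alignment of $c$. The main conceptual obstacle is the structural step that forces all $c_j$ into a single AP of common difference $d$: it is crucial that $p \le 2^{\ell-1}$ is at most half the segment length, for only then does the periodic-palindromic rigidity of $(ww^R)^\infty$ constrain the centers as claimed; this is ultimately what makes the hypothesis $m \ge 5$ (giving $d \le 2^{\ell-2}$) essential.
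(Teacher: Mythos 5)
The central step of your argument is the claim that ``the centers of even palindromes of length at least $2d$ in $(ww^R)^\infty$ form a single arithmetic progression of common difference $d$,'' from which you conclude $c_j \equiv c_{i^*} \pmod d$ for every $j$. This is false whenever $ww^R$ is not the primitive period of the segment: if $w$ is a power of a shorter word, or is itself of the form $vv^R$, the true period is strictly smaller than $2d$ and the long-palindrome centers lie in a \emph{finer} arithmetic progression, of common difference $\per(s)/2 < d$. In that situation nothing forces the remaining $c_j$ onto the coarse grid of step $d$, so setting $\kpers := 2d$ produces an incorrect dense description. Concretely, suppose the sorted gaps are $\Delta = (4,4,6,4)$; then $d = 4$, but $c_4 - c_1 = 14$ is not a multiple of $4$. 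Lemma~\ref{lemma:periodicity} gives periods $8,8,12,8$, whose gcd is $4$, so the segment has period $4$ and all positions congruent mod $2$ carry long local palindromes; it is entirely consistent with the hypotheses (``no other process inside is alive'') that the five alive processes sit at exactly these non-aligned positions, since a process can be killed by a mismatch outside the segment. Your construction would then fail to represent $c_4$.

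The paper avoids this by setting $\kpers := 2\gcd(\Delta_1,\ldots,\Delta_{m-1})$: each $2\Delta_i$ is a period by Lemma~\ref{lemma:periodicity}, the periodicity lemma (applied inductively across overlapping triples of centers) shows the gcd is again a period of the whole segment, and $\kpers \divides 2\Delta_i$ automatically places every $c_j$ on the progression $c_1 + \alpha\kpers/2$. Your pigeonhole bound $\min_i \Delta_i \le 2^{\ell-2}$ is still useful, since it ensures $\kpers \le 2\min_i \Delta_i \le \frac{1}{2}2^\ell$. The time budget $\bigo(m+\log n)$ is not incidental bookkeeping: it is exactly the cost of the $m-1$ Euclidean gcd steps. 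Finally, when you pick the palindromic anchor $c$ you should also argue that it has the correct parity phase with respect to $c_1,\ldots,c_m$; the paper does this by case analysis on whether $c_3$ lies at $c+\alpha\kpers$ or at $c+\kpers/2+\alpha\kpers$, a point your ``shift by multiples of $2d$'' glosses over.
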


\begin{proof}
We rearrange the processes so that $c_{1}<c_{2}<\ldots<c_{m}$ and define $\Delta_{i}=c_{i+1}-c_{i}$. Every $2\Delta_{i}$ is a period of the segment
by Lemma~\ref{lemma:periodicity}. We claim that by the periodicity lemma also $\gcd(2\Delta_{1},2\Delta_{2},\ldots,2\Delta_{m-1})$ is a period of the 
segment. This can be seen by the following reasoning: if the radii at $c<c'<c''$ are all at least $2^{\ell}$, $c''-c\leq 2^{\ell}$, $2d \divides 2(c'-c)$ is a period of
$T[(c-2^{\ell})..(c'+2^{\ell}-1)]$ and $2d'\divides 2(c''-c')$ is a period of $T[(c'-2^{\ell})..(c''+2^{\ell}-1)]$, then by the periodicity lemma $2\gcd(d,d')$ is a
period of  the whole $T[(c-2^{\ell})..(c''+2^{\ell}-1)]$. Then by induction $\kpers=2\gcd(\Delta_{1},\Delta_{2},\ldots,\Delta_{m-1})$ is a period
of $T[(c_{1}-2^{\ell})..(c_{k}+2^{\ell}-1)]$, which contains the whole segment inside. Because $\Delta_{1}+\Delta_{2}+\ldots+\Delta_{m}\leq 2^{\ell}$
and $m\geq 5$,
$\Delta_{i} \leq \frac{1}{4}2^{\ell}$ for at least one $i$, so $\kpers\leq\frac{1}{2}2^{\ell}$ and consequently $\kpers$ must be a multiple of $\per(s)$.

Now we can construct a dense description. We compute $\kpers$ in $\bigo(m+\log n)$ with $m-1$ applications of the Euclidean algorithm,
and set $c=c_{1}$.
Because $\kpers\divides2\Delta_{i}$ for every $i$, all $c_{i}$ are of the form $c+\alpha\frac{\kpers}{2}$. Furthermore, because $\kpers\leq \min(\Delta_{1},\Delta_{2})$, 
$c+\kpers\leq c_{2}$, so $T[c..(c+\kpers-1)]$ is fully within the segment. Finally, we must argue that $T[c..(c+\kpers-1)]$ is an even palindrome.
First observe that $T[(c_{3}-\kpers)..(c_{3}+\kpers)]$ lies fully within the segment, and consider two cases.
\begin{itemize}
\item If $c_{3}=c+\alpha \kpers$, then $T[(c_{3}-\kpers)..(c_{3}+\kpers)]=T[c..(\kpers-1)]^{2}$. Because the palindromic radius at $c_{3}$ is at least $2^{\ell}\geq \kpers$,
$T[c..(c+\kpers-1)]$ is a palindrome.
\item If $c_{3}=c+\frac{\kpers}{2}+\alpha p$, then $T[(c_{3}-\frac{\kpers}{2})..(c_{3}+\frac{\kpers}{2})]=T[c..(c+\kpers-1)]$. Because the palindromic radius at $c_{3}$ is at least $2^{\ell}\geq \frac{\kpers}{2}$, $T[c..(c+\kpers-1)]$ is a palindrome.\qedhere
\end{itemize}
\end{proof}

This settles the situation when both descriptions are sparse. Before we move to the remaining case, we need an additional tool. If a
description of a segment is dense, we maintain some additional information about the processes inside. Informally, we would like to know which
of them are still alive, but of course we cannot afford to explicitly maintain such information. We can only afford to store a short buffer, 
where we keep information about a few most recently run processes. Formally, the buffer is a list of processes $P(c)$ together with their
corresponding values of $R(c)$. We do not require that $P(c)$ is still alive, so it might have happened that it has been run again after reading $T[\head']$
with $h<h'$, but the more recent run was unsuccessful. The buffer is updated whenever we successfully run a process $P(c)$ inside the segment.
There either $P(c)$ was in the buffer, so we move it to the front and update the corresponding $R(c)$, and otherwise
we prepend it to the buffer together with the current $R(c)$, and if the length of the buffer is now $6$ we remove the last element from there.
 Hence the buffer is of length at most $5$. A less trivial consequence is as follows.

\begin{restatable}{lemma}{restatableconvergence}
\label{lemma:convergence}
If a segment with dense description of length $2^{\ell}$ is being destroyed while at most $4$ processes in its buffer have radii at least $2^{\ell+1}$, 
then no other process inside the segment can be still alive.
\end{restatable}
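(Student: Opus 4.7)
The plan is to prove that any process $P(c^*)$ inside the segment that is still alive at the destruction time $h_1$ must sit in the buffer with stored radius $R(c^*)\ge 2^{\ell+1}$; under the hypothesis, this forces it to be one of the at most four such entries, which is exactly the conclusion. To set things up, I would first use Observation~\ref{obs:lifespan} on the length-$2^\ell$ segment: at the creation time $h_0$ there are at most $2^{\ell+2}-5$ characters to its right, while at destruction there are at least $3(2^{\ell+1}-1)$, so any center $c^*$ inside the segment satisfies roughly $h_0-5\cdot 2^\ell+6 \le c^* \le h_1-3(2^{\ell+1}-1)$.

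Next I would apply Lemma~\ref{lemma:delay} with $\Delta=2^\ell-1$ and $c=c^*$: it produces a time $h^*\in[c^*+5\Delta,c^*+6\Delta]$ at which $2c^*-h^*-2$ is a landmark. Plugging in the window from the previous step places $h^*\in[h_0+1,h_1-3]$, so $h^*$ lies strictly inside the segment's lifetime and strictly before destruction. Since $P(c^*)$ is alive at $h_1$, it is alive at $h^*$ as well, so the dense scheduling runs $P(c^*)$ at $h^*$ using that landmark and the check succeeds. This refreshes $P(c^*)$ in the buffer with $R(c^*)=h^*-c^*+1\ge 5\cdot 2^\ell-4 \ge 2^{\ell+1}$, where I use $\ell\ge 2$ (a dense description requires $\frac14\cdot 2^\ell\ge |w|\ge 1$). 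Since subsequent successful runs only refresh $R(c^*)$ upward, every later appearance of $P(c^*)$ in the buffer carries $R(c^*)\ge 2^{\ell+1}$.

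If $P(c^*)$ is still in the buffer at $h_1$, we are done. Otherwise it has been displaced, which under the move-to-front-then-trim-at-six semantics forces the buffer at $h_1$ to hold five distinct other processes $P(c_Q)$, each with last successful run $t_Q>h^*$. For any such displacer I would chain, using $|c_Q-c^*|\le 2^\ell-1$, the bounds $t_Q\ge h^*+1\ge c^*+5(2^\ell-1)+1\ge c_Q+4\cdot 2^\ell-3$, so that the stored radius satisfies $R(c_Q)=t_Q-c_Q+1\ge 4\cdot 2^\ell-2 \ge 2^{\ell+1}$. All five buffer entries would then carry $R\ge 2^{\ell+1}$, contradicting the hypothesis of at most four, so $P(c^*)$ cannot have been displaced. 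The hard part will be this last step: interpreting "not in the buffer" correctly via the LRU dynamics to extract five strictly more recent distinct accesses, and then extracting a uniform $2^{\ell+1}$ radius lower bound for every one of those displacers using only that its center lies in the same length-$2^\ell$ segment as $c^*$.
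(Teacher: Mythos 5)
Your proof is correct and follows the same structure as the paper's: use Observation~\ref{obs:lifespan} to bound the segment lifetime relative to a center $c^*$, invoke Lemma~\ref{lemma:delay} to find a landmark at which the alive $P(c^*)$ is successfully run with resulting radius at least $2^{\ell+1}$, and then use the bounded buffer to conclude that $P(c^*)$ must still be stored. You spell out the buffer-eviction argument (five displacers, each with last successful run after $h^*$ and hence also with radius $\ge 2^{\ell+1}$, contradicting the at-most-four hypothesis) which the paper leaves implicit in a single sentence; and your observation that a dense description forces $\ell\ge 2$ (since $1\le|w|\le\frac14 2^{\ell}$) is a nice way to dispense with the $\ell=0$ edge case that the paper handles separately. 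One small quibble: the inequality $5\cdot 2^{\ell}-4\ge 2^{\ell+1}$ already holds for $\ell\ge 1$, so the stated need for $\ell\ge 2$ at that step is stronger than necessary, though of course harmless given your structural constraint.
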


\begin{proof}
By Observation~\ref{obs:lifespan} and how we process segments with dense descriptions, any $P(c)$ which might be still alive could have been run everywhere
in the interval $[c+2^{\ell}+2^{\ell+2}-5,c+3(2^{\ell+1}-1)]$, and by Lemma~\ref{lemma:delay} it had at least one
landmark available in that interval. Also, whenever we run any $P(c)$ inside the segment in the interval $[c+2^{\ell}+2^{\ell+2}-5,\infty)$, and 
it succeeds, $R(c)$ is set to at least $2^{\ell+1}$ (except when $\ell=0$, but then there is just one process inside the segment, so the buffer
surely contains it). Therefore, if the buffer contains at most $4$ processes with radii at least $2^{\ell+1}$, any $P(c)$ such that $R(c)\geq 2^{\ell+1}$
is stored in the buffer, and no other process can be still alive.
\end{proof}

If at least one description is dense, by applying Lemma~\ref{lemma:convergence} to $s$ (if its description is dense) or $s'$ (if its description is dense), we either get that one of these
segments contains at least $5$ processes with radii at least $2^{\ell+1}$ in its buffer, or we get a list of at most $4$ potentially still alive processes
inside each segment.
In the latter case we concatenate the lists to get a list of at most $8$ processes inside $ss'$ such that all other processes
inside are dead. If the list contains at most $4$ processes, we construct a sparse description of $ss'$, and otherwise we apply Lemma~\ref{lemma:merge} to construct a dense description of $ss'$ in $\bigo(\log n)$ time.
In the former case we get a list of between $5$ and $10$ alive processes $P(c_{1}),P(c_{2}),\ldots,P(c_{m})$ inside $ss'$. It might be the case that 
there are also some other processes inside the segment which are still alive, but they are not stored in the buffer of the corresponding segment. Nevertheless,
proceeding as in the proof of Lemma~\ref{lemma:merge} we can compute in $\bigo(\log n)$ time $\kpers$ and $c$ such that $T[c..(c+\kpers-1)]$ is an even
palindrome fully within $ss'$, all $c_{i}$ are of the form $c+\alpha\frac{\kpers}{2}$, and $\kpers\leq\frac{1}{2}|ss'|$ is a period of $ss'$. This is not a valid 
dense description yet, as $s$ or $s'$ (or both) might have dense descriptions, and we cannot guarantee that all alive processes 
there are of the form $c+\alpha\frac{\kpers}{2}$.

Consider the case when $s$ has a dense description, meaning that we have $\kpers'$ and $c'$ such $T[c'..(c'+\kpers'-1)]$ is an even palindrome fully within $s$,
all alive processes there are of the form $c'+\alpha\frac{\kpers'}{2}$, and $\kpers'\leq \frac{1}{2}|s|$ is a period of $s$. If $\kpers'\divides\kpers$ there is nothing to do.
Otherwise, because the list $P(c_{1}),P(c_{2}),\ldots,P(c_{m})$ contains at least $5$ processes inside $s$ we have $\kpers\leq\frac{1}{2}|s|$ and
by the periodicity lemma $\gcd(\kpers,\kpers')$ is a period of $s$. Then $\gcd(\kpers,\kpers')$ must be actually a period of the whole $ss'$. 
Now we claim that $\kpers$ can be, in fact, replaced by $\gcd(\kpers,\kpers')$. This is because if a power of a word is a palindrome, the word itself must be a palindrome,
so $T[c'..(c'+\gcd(\kpers,\kpers')-1)]$ is an even palindrome.

The case when $s'$ has a dense description, or both $s$ and $s'$ have dense descriptions, can be dealt with similarly.

\begin{theorem}
\label{memory_efficient}
For any scheduling scheme with $b_{\level}\geq 12$ for all $\level\leq L$, descriptions of all segments in the current
partition of $T[1..\head]$ can be maintained in $\bigo(\log n)$ space and $\bigo(\log n)$ time plus the
time to run all relevant processes.
\end{theorem}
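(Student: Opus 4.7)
The plan is to assemble the machinery developed earlier in the section into a single update procedure and bound its cost. For the space bound, Lemma~\ref{lemma:partition} keeps at most $B=5$ segments per level and at most $\log n$ levels, so the current partition has $\bigo(\log n)$ segments. Each segment carries either a sparse description (a list of at most $4$ relevant processes) or a dense description (a center $c$, a period multiple $\kpers$, and a buffer of length at most $5$), both of which fit in $\bigo(1)$ space. Summing across levels gives the $\bigo(\log n)$ space bound.

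For the time bound, after reading $T[\head]$ I would first create a new singleton segment of length $2^0$ with an empty sparse description in $\bigo(1)$ time. By Lemma~\ref{lemma:partition} at most one pair of adjacent equal-length segments $s,s'$ needs to be merged, and there are always at least three segments of the same length to their right, which is exactly the setting that licenses the application of Observation~\ref{obs:lifespan} and Lemma~\ref{lemma:delay}. The running of relevant processes is explicitly excluded from the budget, so it remains to bound the merge cost by $\bigo(\log n)$.

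I would then do a case analysis on the descriptions of $s$ and $s'$. If both are sparse, concatenating the lists gives at most $8$ candidate processes; if at most $4$ survive we keep a sparse description, otherwise Observation~\ref{obs:lifespan} combined with Lemma~\ref{lemma:delay} forces all survivors to have radii at least $2^{\ell+1}$, and Lemma~\ref{lemma:merge} produces a dense description for $ss'$ in $\bigo(\log n)$ time. If at least one side is dense, I apply Lemma~\ref{lemma:convergence} to the dense side(s) to either reduce the situation to a combined list of at most $8$ candidates (handled as the sparse case) or to obtain between $5$ and $10$ alive processes with radii at least $2^{\ell+1}$, to which Lemma~\ref{lemma:merge} again applies, yielding a tentative pair $(c,\kpers)$.

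The main obstacle, and the step I would devote most care to, is reconciling this tentative $\kpers$ with a pre-existing period $\kpers'$ stored in the dense description of $s$ (or symmetrically $s'$) when $\kpers'\nmid\kpers$. Here I would argue that both $\kpers$ and $\kpers'$ are periods of $s$ with $\kpers\le\tfrac{1}{2}|s|$, apply the periodicity lemma to conclude that $\gcd(\kpers,\kpers')$ is a period of $s$ and hence, by re-invoking the structure of the $5$ alive processes spanning $ss'$, of the whole $ss'$; finally, since a power of a word is a palindrome only if the word itself is, the shorter palindromic block $T[c'..(c'+\gcd(\kpers,\kpers')-1)]$ provides a legitimate dense description. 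Each reconciliation costs $\bigo(\log n)$ through a single Euclidean computation, and the whole update is therefore $\bigo(\log n)$ as claimed.
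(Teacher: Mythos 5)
Your proposal is correct and takes essentially the same route as the paper, which in fact gives no standalone proof of this theorem --- it is the summary of Section~\ref{section:time}: the partition scheme of Lemma~\ref{lemma:partition} giving $\bigo(\log n)$ segments with $\bigo(1)$-size descriptions, the merge case analysis with the radius lower bound via Observation~\ref{obs:lifespan} and Lemma~\ref{lemma:delay}, Lemma~\ref{lemma:merge} and Lemma~\ref{lemma:convergence} for producing/exploiting dense descriptions, and the final reconciliation of $\kpers$ with a pre-existing $\kpers'$ through $\gcd(\kpers,\kpers')$, the periodicity lemma, and the fact that a palindromic power forces the base word to be a palindrome --- all matching the paper's argument, including the $\bigo(\log n)$ cost per merge. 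The only slip is that the freshly created segment of length $2^{0}$ should get a sparse description containing the newly available process rather than an empty list (sparse descriptions must list every process that could still be alive); this does not affect the space or time bounds.
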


\section{Time-efficient algorithm}
\label{section:time2}
The simulation from the previous section was space-efficient, but not time-efficient yet, because there might be segments with
dense descriptions and small periods, which in turn requires running many relevant processes. 
This is the only reason the time to process $T[\head]$ might exceed $\bigo(\log n)$, as merging at most one pair of segments and
running the processes in all segments with sparse descriptions takes just $\bigo(\log n)$ time. In this section we show how to simulate
running all relevant processes in a segment with dense description in $\bigo(1)$ time.

Consider a dense description of a segment $s$. Recall that it consists of $c$ and $\kpers$, such that $T[c..(c+\kpers-1)]$ is an even palindrome
and $\kpers$ is a period of the whole segment, and we want to run
all processes $P(c')$ inside $s$ of the form $c'=c+\alpha\frac{\kpers}{2}$, where $2c-\head-2$ is a landmark.
We can construct and run all relevant processes in $\bigo(1)$ time each, but there might be many of them. However, there are
only two consequences of running such a $P(c')$: we might update the final answer, and we might also store it in the buffer (or move it to the front there).
Therefore, if we can guarantee that a particular $P(c')$ will fail anyway, we can avoid running it altogether. Similarly, if we can guarantee that many
processes $P(c')$ will succeed, it is enough to run just the $5$ leftmost of them.
We will build on these observations to simulate running all processes of such form in a single segment with a dense description in $\bigo(1)$ total time.
This is the most technical part, so we start with an overview.

\paragraph{Overview.}   We start with observing in Lemma~\ref{lem:few levels} that, when considering such a segment, just a constant number of \emph{associated landmark
levels} needs to be considered. Then we analyze which relevant processes inside a segment should be run because of a landmark on level $\level$.
After some basic arithmetical manipulation, we get a succinct description of all such values of $c'$. To avoid considering all of them, which
might be too costly, we apply two lemmas characterizing the structure of palindromes in a sufficiently periodic fragment of the text,
described in Lemma~\ref{lemma:structure1} and Lemma~\ref{lemma:structure2} (these observations go back to~\cite{Apostolico}, but we need
a slightly different formulation). To apply them, we need to compute how far the periodicity of a segment with a dense description continues
to the left and to the right. To this end, we relax the notion of landmarks, introducing the so-called \emph{ghost landmarks}, which allow us to
operate on a longer suffix of the already seen $T[1..\head]$. Then, using the ghost landmarks, we binary search to compute how far
the periodicity extends, and apply the structural results to isolate at most $5$ relevant processes, which should be run as to guarantee
the correctness. To achieve the final complexity of $\bigo(\log n)$  to process $T[h]$, we precompute how far the
periodicity continues when creating the segment, and then maintain this information in $\bigo(1)$ time.


\paragraph{Associated landmark levels.} Consider a segment $s$. If, for some $c$ inside $s$, $2c-\head-2$ is a~landmark strictly strictly on
level $\level$ at $h$, we say that $\level$ is a~landmark level associated to $s$.

\begin{restatable}{lemma}{restatablefewlevels}
\label{lem:few levels}
There are at most $4$ landmark levels associated to a single segment, and they can be all determined
$\bigo(\log n)$ time.
\end{restatable}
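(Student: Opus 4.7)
The plan is to combine the tight geometric constraint on where segment $s$ sits in the partition with the periodic structure of landmark positions, and conclude that only a constant number of landmark levels can possibly match any $v = 2c - h - 2$ for $c$ inside $s$.

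First I would fix notation: let $s$ be of length $2^{L_s}$ with rightmost position $h-d$, so that $I = \{2c-h-2 : c \in s\}$ is a set of $2^{L_s}$ consecutive values of fixed parity lying in the distance range $[2d+2,\, 2d+2^{L_s+1}]$ from $h$. By Observation~\ref{obs:lifespan} together with the invariant $c_\ell \ge 3$ from Lemma~\ref{lemma:partition}, every segment of length $2^{L_s}$ (with $L_s \ge 1$) has at least $3\cdot 2^{L_s-1}$ characters from the three level-$(L_s-1)$ segments immediately to its right, plus further characters from lower levels, so $d \ge 3 \cdot 2^{L_s} - 3$. For a level $\lambda$ to be associated with $s$, the landmark range $[h-(b_\lambda-1)\cdot 2^\lambda,\, h]$ must reach into $I$, which, since $b_\lambda \ge 12$, forces $b_\lambda \cdot 2^\lambda \ge 2d+2$ and hence $\lambda \ge L_s-1$.

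For the upper bound on the number of associated levels I would use the arithmetic structure of $I$. Any two distinct multiples of $2^{L_s+1}$ differ by at least $2^{L_s+1}$, which exceeds the span of $I$; so $I$ contains at most one element $v^*$ with $v_2(v^*) \ge L_s+1$, and every other element has $v_2 \in \{1,\ldots,L_s\}$. Since the level on which $v$ is strictly associated is exactly $v_2(v)$, the set of associated levels is contained in $\{L_s-1,\, L_s\} \cup \{v_2(v^*)\}$, a set of constant size; the statement's ``at most 4'' absorbs a one-level margin arising from boundary effects near the coarsest level $L$ (where $b_L$ may be set to infinity, effectively widening the landmark range) or from constant slack in the lower bound on $d$.

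For the computation, for each of the $\bigo(1)$ candidate ``regular'' levels $\lambda \in \{L_s-1, L_s\}$ I would find in $\bigo(1)$ arithmetic operations the constantly many $v \in I$ with $v_2(v) = \lambda$, and check in $\bigo(1)$ whether each is one of the last $b_\lambda$ multiples of $2^\lambda$ by comparing with $h$. For the exceptional level, I would locate the unique multiple of $2^{L_s+1}$ in $I$ (falling back to the multiple of $2^{L_s}$ if no such higher multiple exists), extract its 2-adic valuation via a single bit scan in $\bigo(\log n)$ time on the word-RAM, and then test its landmark status in $\bigo(1)$. The main obstacle is the careful case analysis needed to keep the constant small: the level $\lambda = L_s-1$ only becomes associated in a narrow window of values of $d$ near the minimum, the exceptional valuation $v_2(v^*)$ may or may not exceed $L_s$, and boundary behaviour near the coarsest level $L$ must be accounted for to ensure that at most four levels can ever be simultaneously associated.
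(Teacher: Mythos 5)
Your proposal takes a genuinely different route from the paper's. The paper bounds the \emph{distance} $2(h-c+1)$ of $2c-h-2$ from $h$ via Observation~\ref{obs:lifespan}, then asserts that a landmark strictly on level $\level<L$ lies at a distance in $[b_0 2^{\level-1},b_0 2^\level)$, and counts the $\level$ for which this interval meets the lifespan-derived interval $[6\cdot 2^\ell-4,\ 20\cdot 2^\ell-8)$; the bounded ratio of the endpoints caps the count at $3$, plus one for level $L$. You instead work with the $2$-adic structure of the set $I=\{2c-h-2 : c\in s\}$: since $I$ has $2^{L_s}$ elements of fixed parity spanning less than $2^{L_s+1}$, at most one element $v^*$ has $v_2(v^*)\geq L_s+1$, and you combine this cap on valuations with a lower bound $\lambda\geq L_s-1$ coming from the landmark window needing to reach $I$. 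That $2$-adic observation is not used in the paper and is a nice independent insight.

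However, there is a concrete gap in your lower bound step. You write that, since $b_\lambda\geq 12$, the reach condition $b_\lambda\cdot 2^\lambda\geq 2d+2\geq 6\cdot 2^{L_s}-4$ forces $\lambda\geq L_s-1$. The implication runs the wrong way: a \emph{larger} $b_\lambda$ makes a fixed level's window reach \emph{farther}, so the constraint $b_\lambda 2^\lambda\geq 6\cdot 2^{L_s}-4$ yields $\lambda\geq L_s-1$ only if $b_\lambda$ is bounded \emph{above} by a constant near $12$. The statement being proved (and the theorems that invoke it) impose only the lower bound $b_\lambda\geq 12$; in the multiplicative variant $b_0=\cdots=b_{L-1}=D=\Theta(1/\varepsilon)$ can be arbitrarily large. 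For large $b_0$ your lower bound on $\lambda$ degrades to roughly $L_s-\log_2(b_0/6)$, so the combined set of candidate levels $\{\lambda_{\min},\ldots,L_s\}\cup\{v_2(v^*)\}\cup\{L\}$ can have $\Theta(\log b_0)$ elements rather than at most $4$. You need, as the paper's argument at least attempts to supply, a \emph{matching upper bound} on $\lambda$ in terms of $b_0$ (the paper gets it from the lower end $b_0 2^{\lambda-1}$ of the asserted distance window), so that the two bounds trap $b_0 2^\lambda$ in an interval of bounded ratio and cancel the $b_0$-dependence; your $2$-adic upper bound $\lambda\leq L_s$ alone does not cancel it. Until that is supplied, the ``at most $4$'' claim does not follow from your argument.
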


\begin{proof}
Consider a segment $s$ of length $2^{\ell}$ and any $c$ inside. By Observation~\ref{obs:lifespan}, when the segment is being created
by merging two segments of length $2^{\ell-1}$ we have $h-c \geq  3(2^{\ell}-1)$. Similarly, when the segment is being destroyed
by merging with an adjacent segment of length $2^{\ell}$ to form a segment of length $2^{\ell+1}$ we have $h-c < 2^{\ell+1}+2^{\ell+3}-5=
5(2^{\ell+1}-1)$. Consequently, we can bound $2(h-c+1)$, which is the number of already seen characters on the right of $2c-h-2$, as follows:
\begin{eqnarray*}
2(h-c+1) &<& 10\cdot 2^{\ell+1}-8 \\
2(h-c+1) &\geq&  6\cdot 2^{\ell}-4
\end{eqnarray*}
If $2c-h-2$ is a landmark strictly on level $\level<L$, then the number of already seen characters on its right belongs to
$[b_{0}\cdot 2^{\level-1},b_{0}\cdot 2^{\level})$. Bounding the number of different
landmark levels associated to $s$ requires counting $\level<L$ such that 
$[b_{0}\cdot 2^{\level-1},b_{0}\cdot 2^{\level}) \cap [6\cdot 2^{\ell}-4,10\cdot 2^{\ell+1}-8) \neq \emptyset$.
The condition translates into:
\begin{eqnarray*}
b_{0} \cdot 2^{\level-1} &\leq& 10\cdot 2^{\ell+1} - 8 - 1 \\
b_{0} \cdot 2^{\level}-1 &\geq& 6\cdot 2^{\ell}-4
\end{eqnarray*}
which is equivalent to $2^{\level}\cdot b_{0} \in [6\cdot 2^{\ell}-3,40\cdot 2^{\ell}-18]$. If $\level=L$, the number of already seen characters on
the right is at least $b_{0}\cdot 2^{\level-1}$, so the condition becomes $2^{\level}\cdot b_{0}\leq 40\cdot 2^{\ell}-18$.
All in all, there are at most $4$ different possible values of $\level$.

Generating the landmark levels associated with a given segment can be done in $\bigo(\log n)$ time by performing the above calculation.
\end{proof}

Due to the above Lemma~\ref{lem:few levels}, to achieve the claimed $\bigo(\log n)$ time complexity for processing $T[h]$,
we only need to show how to run all relevant processes inside a segment with a dense description using landmarks on a particular level
$\level$ associated to that segment in $\bigo(1)$ time.

\paragraph{Relevant processes.} We need to consider all relevant processes $P(c')$, such that $2c'-h-2$ is a landmark on level $\level$, implying that
$2^{\level} \divides 2c'-h-2$. The condition is equivalent to:
\begin{equation}
\label{eq:alfa1}
\alpha \cdot \kpers = h+2-2c \pmod{2^{\level}}
\end{equation}
which, denoting $2^{\ell}=\gcd(\kpers,2^{\level})$, is in turn equivalent to:
\[ \alpha \cdot \frac{\kpers}{2^{\ell}} = \frac{h+2-2c}{2^{\ell}} \pmod{2^{\level-\ell}}\]
(unless $2^{\ell}$ does not divide $h+2-2c$, when no $c'$ needs to be considered), so by computing the multiplicative inverse we finally get
a base solution to \eqref{eq:alfa1}:
\[ \alpha_0 = \frac{h+2-2c}{2^{\ell}} \cdot \left(\frac{\kpers}{2^{\ell}}\right)^{-1} \pmod{2^{\level-\ell}}\]
and the general solution is:
\[ \alpha = \alpha_0 + t \cdot 2^{\level-\ell} \quad\text{for}\quad t \in \{\ldots,-1,0,1,\ldots\}\]
Thus we also get the solution to the original equation:
\begin{equation}
\label{eq:gform}
c' = c'_0 +  t \cdot \frac{\kpers}{2}2^{\level-\ell} = c'_0 + t \cdot \frac{1}{2} \lcm(2^\ell,\kpers) \qquad \textrm{ where } c'_0 = c+ \alpha_0 \frac{\kpers}{2}.
\end{equation}

Therefore, with a simple calculation we get a succinct description of all values of $c'$ which should be taken into the account.
Before we proceed further, let us comment on the complexity of the calculation. Since $\level$ is fixed, both values of
\[2^{\ell}=\gcd(\kpers,2^{\level}) \quad\text{and}\quad \left(\frac{\kpers}{2^{\ell}}\right)^{-1}\bmod{2^{L-\ell}}\]
can be computed in $\bigo(\log n)$ time when we create the segment and stored there.

The situation now is that we have a dense description of a segment, and want to run all processes $P(c')$ of the form \eqref{eq:gform}
inside the segment. Additionally, because we do not necessarily have all possible landmarks on level $\level$, just a few most recent,
we are interested only in sufficiently large $c'$. Observe, that we can analyze separately processes of the following two forms:
\begin{eqnarray}
P(c'_0 + t \cdot \lcm(2^\level,\kpers)) & \label{eq:4} \\
P(c''_0 + t \cdot \lcm(2^\level,\kpers)) &\quad\text{where}\quad c''_0 = c'_0 + \frac{1}{2}\lcm(2^\level,\kpers) \label{eq:5}
\end{eqnarray}
From now on we will only consider the former, as the whole reasoning still holds after replacing $c'_0$ by $c''_0$. We will also assume
that only $t\geq 0$ need to be considered, which can be  ensured by decreasing $c'_{0}$ by an appropriate multiple of
$\lcm(2^{\level},\kpers)$.

Because $\kpers$ is a period of the whole segment, $\lcm(2^{\level},\kpers)$ is its period as well, and furthermore we can assume that
$\lcm(2^{\level},\kpers)\leq\frac{1}{2}|s|$, as otherwise there are just at most two relevant processes to run.
Intuitively, knowing how far the period extends to the left and to the right allows us to restrict the number of processes to run
by an argument based on the combinatorial properties of palindromes.
While computing how far the period extends exactly is not possible in our setting, it can be approximated quite well using the landmarks.
First, we need to introduce the notion of ghost landmarks.

\paragraph{Ghost landmarks.} For every level of landmarks $\level$, we store $f_{\level} = 4 \cdot b_{\level}$ most recently seen
landmarks on level $\level$. We call them ghost landmarks on level $\level$. All ghost landmarks can be maintained in the
same manner as the regular landmarks, so storing them does not change the complexity of our algorithm.

\begin{lemma}
\label{lem:using ghosts}
If $\level$ is a landmark level associated to a segment $s$, then for any $c$ inside $s$ there exists at least one ghost landmark on
level $\level$ in $T[1..(2c-\head-2)]$.
\end{lemma}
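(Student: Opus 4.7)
The plan is to show that the pool of $f_\level = 4b_\level$ ghost landmarks reaches far enough back to cover every position of the form $2c-h-2$ with $c \in s$, not merely the particular witness $c^*$ making $\level$ an associated level. Concretely, I would verify that the oldest ghost landmark on level $\level$ lies at a position $p_{\text{old}} \le 2c-h-2$, equivalently $h - p_{\text{old}} \ge 2(h-c+1)$.

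First, I would harvest the same two numerical estimates used in the proof of Lemma~\ref{lem:few levels}. From Observation~\ref{obs:lifespan}, for every $c$ inside a segment of length $2^\ell$ one has $h - c < 5(2^{\ell+1}-1)$, hence $2(h-c+1) \le 20\cdot 2^\ell - 9$. From the hypothesis that $\level$ is associated to $s$, there is a witness $c^* \in s$ for which $2c^*-h-2$ is a landmark strictly on level $\level$; the analysis in Lemma~\ref{lem:few levels} then forces $b_0 \cdot 2^\level \ge 6\cdot 2^\ell - 3$ and $b_0 \cdot 2^\level \le 40\cdot 2^\ell - 18$. Since every scheduling scheme we care about uses $b_0 \ge 12$ (see Lemma~\ref{lemma:delay} and Theorem~\ref{memory_efficient}), the latter bound gives $2^\level \le 2\cdot 2^\ell$ after rounding down to a power of two.

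Second, I would plug these into the ghost-landmark estimate. The ghost landmarks on level $\level$ are, by definition, the $f_\level = 4 b_\level$ most recent multiples of $2^\level$, so $h - p_{\text{old}} \ge (f_\level - 1)\cdot 2^\level = (4b_0 - 1)\cdot 2^\level$. Combining,
\[
(4b_0 - 1)\cdot 2^\level \;=\; 4(b_0 \cdot 2^\level) - 2^\level \;\ge\; 4(6\cdot 2^\ell - 3) - 2\cdot 2^\ell \;=\; 22\cdot 2^\ell - 12,
\]
which dominates $2(h-c+1) \le 20\cdot 2^\ell - 9$ for every $\ell \ge 1$. The corner $\ell = 0$ collapses to $\level = 0$ (since $b_0 \ge 12$ rules out $\level \ge 1$ via $2^\level \cdot b_0 \le 40\cdot 1 - 18 = 22$) and is verified directly, as $(4b_0 - 1) \ge 47$ beats $2(h-c+1) \le 10$. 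The case $\level = L$ is analogous: if $b_L$ is finite the same calculation applies with $b_L$ in place of $b_0$, and if $b_L = \infty$ then every historical multiple of $2^L$ is already a ghost landmark and the conclusion is immediate.

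The main obstacle is a tightness issue rather than a conceptual one: the slack between $22\cdot 2^\ell$ and $20\cdot 2^\ell$ would evaporate without either the fourfold enlargement of the pool to $f_\level = 4 b_\level$ or the lower bound $b_0 \ge 12$, so the argument rests on the careful interplay between the factor $4$ in the definition of ghost landmarks, the constant $12$ required by Lemma~\ref{lemma:delay}, and the worst-case lifespan bounds of $s$ from Observation~\ref{obs:lifespan}, which is precisely why these constants were chosen as they were.
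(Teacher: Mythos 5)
Your proposal is correct and is built from the same ingredients as the paper's argument: Observation~\ref{obs:lifespan}, the fact that the witness landmark bounds $b_\level 2^\level$ from below, the factor $4$ in $f_\level = 4b_\level$, and $b_\level \ge 12$. The route through the intermediate quantities is slightly different, though. The paper compares the current $h-x$ directly to the witness value $h'-x'$ via a triangle-type estimate ($h-x \le (h'-x') + 16\cdot 2^\ell - 8$, then $h-x \le \tfrac{11}{3}(h'-x')+\tfrac{8}{3} \le (4b_\level -1)2^\level$), never needing an explicit upper bound on $2^\level$ in terms of $2^\ell$. You instead bound $h-x$ and $(4b_0-1)2^\level$ separately against the common yardstick $2^\ell$, importing both inequalities $b_0 2^\level \ge 6\cdot 2^\ell - 3$ and $b_0 2^\level \le 40\cdot 2^\ell - 18$ from the proof of Lemma~\ref{lem:few levels}; the second is only needed to cap the subtracted $2^\level$ term, and it is what forces your separate $\ell=0$ case. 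A minor remark: that detour is avoidable — writing $(4b_0-1)2^\level = \bigl(4-\tfrac{1}{b_0}\bigr)b_0 2^\level \ge \tfrac{47}{12}(6\cdot 2^\ell - 3) = 23.5\cdot 2^\ell - 11.75$ already dominates $20\cdot 2^\ell - 9$ for every $\ell \ge 0$, which removes both the reliance on Lemma~\ref{lem:few levels}'s upper bound and the corner case, and brings the calculation closer in spirit to the paper's self-contained one. Either way, your argument is sound.
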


\begin{proof}
Consider a segment $s$ of length $2^{\ell}$. By Observation~\ref{obs:lifespan}, the number of already seen characters on the right
of $s$ when it is being created is at least $3(2^{\ell}-1)$. Let $c'$ be any position inside $s$
causing $\level$ to be associated to $s$, \ie, $2^{\level} \divides 2c'-\head'-2$, and denote
$x' = 2c'-\head'-2$. Notice that $\head'$ might be either smaller or larger than the current $\head$. Because $c'$ is a landmark
on level $\level$ at $\head'$, we have that $2^\level \cdot b_\level \ge \head' - x' = 2(\head' - c')  + 2 \ge 2 + 6(2^{\ell}-1)$.

Now consider any $c$ inside $s$ and denote $x = 2c - \head -2$. Since $c$ and $c'$ both belong to the same segment of length $2^{\ell}$,
$c'-c \le 2^{\ell}-1$. Applying Observation~\ref{obs:lifespan} again, we also get that
$\head-\head' \le 5(2^{\ell+1}-1)-1 - 3(2^{\ell}-1) = 7\cdot 2^{\ell} - 3$.

Thus the number of already seen characters on the right of $x$ can be bounded as follows:
\[\head - x = 2\head - 2c + 2 \le (2\head' - 2c' + 2) + 2(2^{\ell}-1) + 14 \cdot 2^{\ell} - 6.\]
Because $\head' - x' = 2(\head' - c')  + 2 \ge 2 + 6(2^{\ell}-1) $, we have $16\cdot 2^{\ell} - \frac{32}{3} \le \frac{8}{3}\cdot (\head' -x')$, so
the above bound can be rewritten as:
\[\head - x \le \frac{11}{3} \cdot (\head' -x') +\frac{8}{3}\le \frac{11}{3}b_\level\cdot 2^\level +\frac{8}{3} \le (4 b_\level-1) \cdot 2^\level\]
where the last inequality holds because $b_\level \ge 12$.
Since the leftmost ghost landmark on level $\level$ has at least $(4 b_\level-1) \cdot 2^\level$ already seen characters on its right,
by the above calculation it must be on the left of $x=2c-\head-2$ as claimed.
\end{proof}

Now going back to approximating how far the period extends to the left and to the right, we proceed as follows.
We choose $w$ of length $\lcm(2^{\level},\kpers)$ starting at $T[c'_0]$. Because we have adjusted $c'_{0}$ so that only $t\geq 0$ need to
be considered and $|w| \leq \frac{1}{2}|s|$, we can assume that $w$ is fully within the segment.
We know that the whole segment can be covered by repeating $w$ to the left and to the right (where, possibly, the last repetition is
a suffix or a prefix of $w$, respectively), and would like to figure out how far we can continue that until we hit either a boundary of the already seen
$T[1..\head]$, or a subword of length $|w|$ which is different than $w$. This can be approximated quite well using the ghost landmarks,
if $w$ repeats at least twice.

\begin{lemma}
For any $w$ such that $T[i..(i+2|w|-1)]=w^{2}$, $2^{\level} \divides |w|$, and $T[1..(i-1)]$ contains at least one ghost landmark on level
$\level$, we can compute in $\bigo(\log h)$ time $r \geq 2$
such that $T[i..(i+r |w|-1)]=w^{r}$ and either $i+(r+2) |w| > h$ or $T[i..(i+(r+2) |w|-1)] \neq w^{r+2}$.
\label{lemma:extend}
\end{lemma}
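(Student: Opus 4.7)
The plan is to find $r$ by exponential doubling followed by binary search over a parameter $k$, using Karp--Rabin fingerprints at ghost landmarks so that each check costs $\bigo(1)$ time. The hypothesis plus the structure of ghost landmarks (contiguous multiples of $2^\level$ up to $h$) gives us $\bhash$ at every multiple of $2^\level$ in some range $[y,h]$ with $y\leq i-1$. Letting $i_0=\lfloor(i-1)/2^\level\rfloor\cdot 2^\level$ and $q=(i-1)-i_0\in[0,2^\level-1]$, and using $2^\level\divides|w|$, we have $\bhash(i_0+j|w|)$ available in $\bigo(1)$ for every $j$ with $i_0+j|w|\leq h$.

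For $k\geq 3$, define $\mathrm{check}(k)$ to be the fingerprint test $\hash(T[i_0+|w|+1..i_0+(k-1)|w|])=\hash(T[i_0+2|w|+1..i_0+k|w|])$. Both strings have length $(k-2)|w|$ and all four endpoints are multiples of $2^\level$, so the test is $\bigo(1)$. A passing test implies that $T[i_0+|w|+1..i_0+k|w|]$ has period $|w|$, i.e.\ $T[i+|w|-q..i+k|w|-1-q]$ has period $|w|$; combining this with the hypothesis $T[i..i+2|w|-1]=w^{2}$, whose periodicity overlap with the above interval has length $|w|+q\geq|w|$, Lemma~\ref{lemma:periodicity} (or the basic periodicity argument) yields that $T[i..i+k|w|-q-1]$ is a prefix of $w^{\infty}$. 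Conversely, any such prefix forces the two compared substrings to coincide. Therefore, if $R$ denotes the largest $k$ with $\mathrm{check}(k)$ passing and $i_{0}+k|w|\leq h$, a short computation gives $R\in\{r^{*},r^{*}+1\}$ where $r^{*}$ is the true maximum length.

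We find $R$ by exponential search (doubling $k$ until $\mathrm{check}$ fails or $i_0+k|w|>h$) followed by binary search in the last successful range, performing $\bigo(\log h)$ constant-time tests for $\bigo(\log h)$ total time. We output $r=\max(R-1,2)$; a short case analysis over $R\in\{r^{*},r^{*}+1\}$ and whether the search terminated because $i_{0}+(R+1)|w|>h$ verifies both $T[i..i+r|w|-1]=w^{r}$ and the alternative $i+(r+2)|w|>h$ or $T[i..i+(r+2)|w|-1]\neq w^{r+2}$. The main obstacle is designing a check that (a) is $\bigo(1)$ using only fingerprints anchored on the $2^\level$-grid, and (b) is insensitive to the unknown characters of $T[i-q..i-1]$; both are resolved by starting the check strictly after $i$ at $i_{0}+|w|+1\geq i+1$ and using $T[i..i+2|w|-1]=w^{2}$ to bridge the length-$q$ gap between $i$ and the aligned window via the periodicity lemma.
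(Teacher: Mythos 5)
Your proof is correct and takes essentially the same route as the paper's: you use the fact that the ghost landmarks give fingerprints at every multiple of $2^{\level}$ from just below $i$ up to $h$, reduce counting repetitions of $w$ to testing period $|w|$ on a grid-aligned window strictly to the right of $i$ (with the given occurrence of $w^{2}$ at $i$ bridging the offset $q<2^{\level}\leq|w|$), binary search for the largest passing window, and output with slack $1$, exactly as the paper does (which anchors its window at $i_0+2^{\level}$ instead of $i_0+|w|$, an immaterial difference). The only nit is that the gluing step is the standard overlapping-periods argument rather than Lemma~\ref{lemma:periodicity}, which your parenthetical already acknowledges.
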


\begin{proof}
By the assumption about ghost landmark on level $\level$, we can access any $\bhash(2^{\level}\cdot j)$ with
$j\geq\left\lfloor \frac{i-1}{2^{\level}}\right\rfloor$ in $\bigo(1)$ time.
Hence if we are lucky and $i=2^{\level}\cdot j+1$, we can compute $\hash(T[(i+\alpha|w|)..(i+\beta|w|-1)]$ for any $0\leq\alpha\leq\beta$
in $\bigo(1)$ time, which allows us to binary search for $r$ in $\bigo(\log h)$ time. In more
detail, to check if $T[i..(i+r |w|-1)]=w^{r}$ we check if $|w|$ is a~period of $T[i..(i+r |w|-1)$,
which can be done by comparing $\hash(T[(i+|w|)..(i+r |w|-1)])$ and
$\hash(T[i..(i+(r-1)|w|-1)])$.

\begin{figure}[t]
\includegraphics[width=\textwidth]{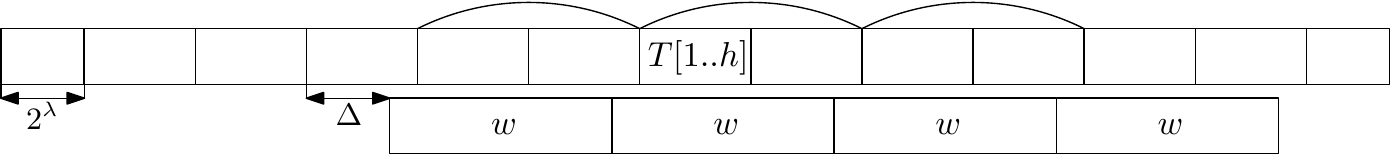}
\caption{A number of repetitions of $w$ such that $2^{\level} \divides |w|$ implies that $|w|$ is a~period of a
certain full fragment between two ghost landmarks.} 
\label{fig:extend}
\end{figure}

In the general case, let $i = 2^{\level}\cdot j+1+\Delta$, where $\Delta\in [0,2^{\level})$.
If $T[i..(i+r |w|-1)]=w^{r}$ and $r\geq 2$, then $|w|$ is a~period of $T[(2^{\level}\cdot j+2^{\level}+1)..(2^{\level}\cdot j +2^{\level}+\alpha |w|)]$,
see Fig.~\ref{fig:extend}, where $\alpha= r-1$. In the other direction, if $|w|$ is a~period of
$T[(2^{\level}\cdot j+2^{\level}+1)..(2^{\level}\cdot j +2^{\level}+\alpha |w|)]$ and $r \geq 2$, then $r \geq \alpha$. (The assumption that
$r \geq 2$ is crucial.) Hence we can determine the largest $\alpha$ such that
$|w|$ is a~period of $T[(2^{\level}\cdot j+2^{\level}+1)..(2^{\level}\cdot j +2^{\level}+\alpha |w|)]$ in $\bigo(\log h)$ time using ghost landmarks
on level $\level$, and then simply return $\alpha$, which guarantees $r \in \{\alpha,\alpha+1\}$.
\end{proof}

\begin{lemma}
For any $w$ such that $T[i..(i+2|w|-1)]=w^{2}$, $2^{\level} \divides |w|$, and $T[1..(2i-\head-2)]$ contains at least one ghost landmark on level
$\level$, we can compute in $\bigo(\log h)$ time $\ell \geq 0$
such that $T[(i-\ell |w|)..(i-1)]=w^{\ell}$ and either $i-(\ell+2) |w|< 2i-h-2$ or $T[(i-(\ell+2) |w|)..(i-1)] \neq w^{\ell+2}$.
\label{lemma:extend2}
\end{lemma}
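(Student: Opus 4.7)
The plan is to mirror the proof of Lemma~\ref{lemma:extend} with the direction reversed. The enabling facts are identical: $2^{\level} \divides |w|$ so that shifts by $|w|$ preserve alignment modulo $2^{\level}$, and any two ghost landmarks on level $\level$ give $\bigo(1)$-time fingerprint access to the substring they delimit. The assumption that $T[1..(2i-h-2)]$ contains a ghost landmark on level $\level$ plays the role here that the upper bound $h$ played in Lemma~\ref{lemma:extend}: it guarantees that every multiple of $2^{\level}$ lying to the right of $2i-h-2$ and close enough to $i$ is currently a ghost landmark, and hence its stored fingerprint is accessible.

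First I would fix an aligned anchor. Write $i-1 = 2^{\level}\cdot j + \Delta$ with $\Delta \in [0, 2^{\level})$ and use $2^{\level}\cdot j$ as the right end of a periodic window. For each $\alpha\ge 2$ the candidate window $T[(2^{\level}\cdot j - \alpha|w|+1)..(2^{\level}\cdot j)]$ has both endpoints at multiples of $2^{\level}$, so provided $2^{\level}\cdot j - \alpha|w| \ge 2i-h-2$ its fingerprint, and that of any prefix or suffix aligned to $|w|$, is $\bigo(1)$-computable from stored ghost landmarks. Next I would reformulate the question as a period test: the window has period $|w|$ iff its length-$(\alpha-1)|w|$ prefix equals its length-$(\alpha-1)|w|$ suffix, which is a single fingerprint comparison. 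Arguing as in Lemma~\ref{lemma:extend}, combining this periodicity with the two copies $w^{2}$ starting at $i$ (which overlap the window because $\Delta < 2^{\level} \le |w|$) shows that the largest admissible $\alpha$ determines the true leftward extent $\ell$ up to an additive slack of~$1$, so returning $\ell = \alpha$ meets the stated $\ell+2$ tolerance.

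Finally I would binary search over $\alpha$ for the largest value satisfying both the fit-within-range condition $2^{\level}\cdot j - \alpha|w| \ge 2i-h-2$ and the period equality, using $\bigo(\log h)$ constant-time fingerprint comparisons, with the two corner cases $\ell \in \{0,1\}$ handled separately by direct comparison against $\hash(w)$ and $\hash(ww)$. The main obstacle, as in Lemma~\ref{lemma:extend}, is the $\Delta$-misalignment between the requested boundary $i-1$ and the landmark-aligned anchor $2^{\level}\cdot j$: I would have to check carefully that this offset is absorbed both by the $\{\alpha,\alpha+1\}$ slack in converting a periodic window into exact copies of $w$ and by the $\ell+2$ tolerance in the conclusion, which is a direct mirror of the corresponding step in Lemma~\ref{lemma:extend}.
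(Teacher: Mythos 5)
The core plan is right (mirror Lemma~\ref{lemma:extend}, reduce to a landmark-aligned period test, binary search), but your choice of window breaks the argument. You take the window $T[(2^{\level}j - \alpha|w|+1)..(2^{\level}j)]$, whose right end $2^{\level}j$ is the nearest multiple of $2^{\level}$ \emph{at or to the left of} $i-1$. Since $i-1 = 2^{\level}j+\Delta$, this window lies entirely inside $T[1..(i-1)]$, and there is a gap of $\Delta$ characters $T[(2^{\level}j+1)..(i-1)]$ between its right end and the known copy $w^{2}=T[i..(i+2|w|-1)]$. Your parenthetical claim that they ``overlap ... because $\Delta < 2^{\level} \le |w|$'' is false: $\Delta < 2^{\level}$ bounds the size of the \emph{gap}, it does not produce overlap. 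Without any overlap, a fingerprint equality only certifies that the window is $|w|$-periodic; it cannot be tied to the specific period block $w$, so you cannot conclude $T[(i-\ell|w|)..(i-1)]=w^{\ell}$. The paper's proof avoids this by anchoring at $2^{\level}(j+1)$ and letting the window run to the right end $2^{\level}(j+1)+|w|$, which sits at least $|w|$ characters past $i-1$; this yields an overlap of length $\ge |w|$ with $w^{2}$ and permits the standard period-propagation argument.

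Two secondary issues. From ``the window is $|w|$-periodic for $\alpha$'' one only gets $\ell \ge \alpha-1$, so one must return $\max(0,\alpha-1)$, not $\alpha$: if the true $\ell$ equals $\alpha-1$, then $T[(i-\alpha|w|)..(i-1)]\ne w^{\alpha}$ and returning $\alpha$ violates the first requirement of the lemma. Also, folding the range restriction $2^{\level}j - \alpha|w| \ge 2i-h-2$ into a joint binary search is structurally fine (both conditions are downward-closed in $\alpha$), but the novel difficulty relative to Lemma~\ref{lemma:extend} is exactly the case where the search stops because the range boundary is hit rather than because the period fails. There one must still show the returned $\ell$ satisfies $i-(\ell+2)|w|<2i-h-2$; the paper does this explicitly via Lemma~\ref{lem:using ghosts} and its two-case analysis around $\alpha_{0}$, whereas your write-up leaves this step as ``I would have to check carefully,'' which is precisely the step that is not a mere transcription of Lemma~\ref{lemma:extend}.
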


\begin{proof}
The proof will be very similar to the proof of Lemma~\ref{lemma:extend}, except that we have to take into the account the fact that while there might
be many more repetitions of $w$ to the left, we might not have enough ghost landmarks on level $\level$ to detect them.

Let $i = 2^{\level}\cdot j+1+\Delta$, where $\Delta\in [0,2^{\level})$.
If $T[(i-\ell |w|)..(i-1)]=w^{\ell}$, then $|w|$ is a period of $T[(2^{\level}\cdot j-\alpha |w|+2^{\level}+1)..(2^{\level}\cdot j+|w|+2^{\level})]$,
where $\alpha= \ell$. In the other direction, if $|w|$ is a period of
$T[(2^{\level}\cdot j-\alpha |w|+2^{\level}+1)..(2^{\level}\cdot j+|w|+2^{\level})]$, then $\ell \geq \alpha-1$.
So we only need to binary search for the largest $\alpha$ such that $|w|$ is a period of
$T[(2^{\level}\cdot j-\alpha |w|+2^{\level}+1)..(2^{\level}\cdot j+|w|+2^{\level})]$ and return $\max(0,\alpha-1)$.
The remaining difficulty is that $2^{\level}\cdot j-\alpha |w|+2^{\level}$ might lie too far on the left to be a ghost landmark on level $\level$,
so the binary search needs to be slightly modified.
We first choose the largest $\alpha_{0}$ such that $2^{\level}\cdot j-\alpha_{0} |w|+2^{\level}$ is a ghost landmark on level $\level$. There
are two possibilities.
\begin{enumerate}
\item $|w|$ is a period of $T[(2^{\level}\cdot j-\alpha_{0} |w|+2^{\level}+1)..(2^{\level}\cdot j+|w|+2^{\level})]$, then the largest
$\alpha$ might exceed $\alpha_{0}$. But we can return $\ell=\max(0,\alpha_{0}-1)$, because then
$i-(\ell+1) |w|\leq i-\alpha_{0}|w|$, and the choice of $\alpha_{0}$ and the assumption, by Lemma~\ref{lem:using ghosts}, implies
$i-(\alpha_{0}+1)|w|<2i-\head-2$,
so $i-(\ell+2)|w| < 2i-\head-2$.
\item Otherwise, we binary search over all $\alpha\leq \alpha_{0}$, and return $\max(0,\alpha-1)$.
\end{enumerate}
We can binary search for $\alpha_{0}$ in $\bigo(\log h)$ time, so the total time is $\bigo(\log h)$.
\end{proof}

We apply Lemma~\ref{lemma:extend} and Lemma~\ref{lemma:extend2} to
approximate how many times $w$ repeats on its right and on its left in $T[(2c'_{0}-\head-2)..(c'_{0}-1)]$ with accuracy $1$,
assuming that $w^{2}$ occurs at $T[c'_{0}]$.
Notice that there might be many more repetitions to the left in the whole $T[1..(c'_{0}-1)]$, but Lemma~\ref{lemma:extend2} does not
allow us to detect all of them.
Now the crucial insight is that even though we do now know the exact number of repetitions, we can iterate through the at most $4$ possible combinations of
the number of of repetitions to the left and to the right right, and the additionally consider the possibility that there is only a single
occurrence of $w$ in the segment. Hence we need to iterate through $5$ possibilities in total.
For each such combination, we will restrict the number of processes which should be run, therefore by running the processes
determined for each of these combinations we will not lose the correctness.
Hence from now on we assume that we know the exact number of repetitions of $w$ to the left and to the right.

We need the following two simple structural results, which allow us to bound the palindromic radius in a sufficiently periodic subword of the
text. A similar (in spirit) argument appeared already in~\cite{Apostolico}, but we need a slightly different formulation. We say that a palindrome
centered at $c$ reaches $h$ if $R(c)\geq h-c+1$.

\begin{lemma}
\label{lemma:structure1}
Consider $uw^{k}v$ starting at position $i$ in $T[1..\head]$, where $|u|=|w|=|v|$, $w$ is a palindrome, and $u,v\neq w$.
For any $\alpha\in\{1,2,\ldots,k\}$, if the palindrome centered at $i+\alpha|u|$ reaches $h$ then $\alpha=\frac{k}{2}+1$.
\end{lemma}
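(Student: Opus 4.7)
Let $m = |u| = |v| = |w|$ and write $c = i + \alpha m$. Since $uw^{k}v$ fits inside $T[1..h]$ we have $h \geq i + (k+2)m - 1$, so the hypothesis $R(c) \geq h - c + 1$ gives $R(c) \geq (k+2-\alpha)m$. In particular the even-palindrome symmetry $T[p] = T[2c - 1 - p]$ holds for every $p$ in the $v$-block $[i+(k+1)m,\ i+(k+2)m-1]$, and (whenever $\alpha \leq k/2+1$) also for every $p$ in the $u$-block $[i,\ i+m-1]$. My plan is to show that unless $\alpha = k/2+1$, this symmetry forces either $v = w$ or $u = w$, contradicting the hypothesis $u,v \neq w$.

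Suppose first that $\alpha > k/2+1$. Reflecting the $v$-block through $c$ produces the length-$m$ interval $[i+(2\alpha-k-2)m,\ i+(2\alpha-k-1)m-1]$, and under $k/2+1 < \alpha \leq k$ one checks $1 \leq 2\alpha-k-2 \leq k-2$, so this reflected interval sits inside $w^{k}$ and coincides exactly with one of its $w$-blocks. Reading the symmetry character by character yields $v = w^{R}$, and since $w$ is a palindrome we conclude $v = w$ -- contradicting $v \neq w$. The case $\alpha < k/2+1$ is completely symmetric: reflecting the $u$-block $[i,\ i+m-1]$ through $c$ gives $[i+(2\alpha-1)m,\ i+2\alpha m-1]$, which under $1 \leq \alpha \leq k/2$ satisfies $1 \leq 2\alpha-1 \leq k-1$ and therefore again lies inside $w^{k}$ as a single $w$-block. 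The symmetry then gives $u = w^{R} = w$, also a contradiction. Consequently the only integer value of $\alpha$ compatible with the hypotheses is $\alpha = k/2+1$.

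The only delicate point I expect is the index arithmetic certifying that each reflected length-$m$ interval lines up exactly with some $w$-block of $w^{k}$ rather than straddling two adjacent ones; this is precisely where the assumption $|u| = |v| = |w|$ (and the fact that $\alpha$ is an integer) is needed. Once the alignment is verified, the remainder of the argument is trivial, using only the palindromicity of $w$ to turn an equation $v = w^{R}$ or $u = w^{R}$ into the desired contradiction.
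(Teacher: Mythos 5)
Your proof is correct and takes essentially the same approach as the paper: reflect the $u$-block or the $v$-block through the center $c$, show the reflected interval aligns with a single $w$-block of $w^k$, and conclude $u = w$ or $v = w$ from $w = w^R$, contradicting the hypotheses unless $\alpha = \frac{k}{2}+1$. One small imprecision: in your second case you quantify over ``$1 \le \alpha \le k/2$'' and verify $1 \le 2\alpha - 1 \le k-1$, but for odd $k$ the case $\alpha < \frac{k}{2}+1$ also includes $\alpha = \frac{k+1}{2}$, for which $2\alpha - 1 = k$; the reflected interval is then the last $w$-block of $w^k$, which is still fine, so the correct bound to verify is $1 \le 2\alpha - 1 \le k$. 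This is a cosmetic fix and does not affect the soundness of the argument.
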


\begin{proof}
Take any $\alpha\in\{1,2,\ldots,k\}$. For a palindrome centered at $i+\alpha|u|$ to reach $c$, $R(i+\alpha|u|)$ must be at least
$\min(\alpha-1,k+1-\alpha)|u|$. But then either $u=w^{R}$ or $v=w^{R}$, which is a contradiction.
\end{proof}

\begin{lemma}
\label{lemma:structure2}
Consider $uw^{k}$ starting at position $i$ in $T[1..\head]$, where $|u|=|w|$, $w$ is a palindrome, $u\neq w$, and $h-i-(k+1)|u|+1<|w|$.
For any $\alpha\in\{1,2,\ldots,\lceil\frac{k}{2}\rceil\}$, the palindrome centered at $i+\alpha |u|$ cannot reach $h$.
Additionally, either all palindromes centered at $i+\alpha |u|$ with $\alpha\in\{\lceil\frac{k}{2}\rceil+1,\ldots,k\}$ reach $h$, or none of them do.
\end{lemma}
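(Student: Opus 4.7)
The plan is to translate ``the palindrome centred at $c = i + \alpha |u|$ reaches $h$'' into explicit character equalities, then peel off the inner ones using the palindromicity of $w$ so that only a few residual conditions remain. Set $t = h - i - (k+1)|u| + 1 \in [0, |w|-1]$ (recall $|u| = |w|$). Reaching $h$ is equivalent to $T[c-R\ldots h]$ being a palindrome for $R = (k+1-\alpha)|u| + t$, i.e.\ to $T[c - i'] = T[c + i' - 1]$ for every $i' = 1, \ldots, R$.

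For the first claim, suppose $\alpha \le \lceil k/2 \rceil$. Then $2\alpha \le 2\lceil k/2 \rceil \le k+1$, so $c - R \le i$ and the palindrome contains the full $u$-block. A direct calculation shows that the palindromic involution sends $[i, i + |u| - 1]$ (the $u$-block) to $[i + (2\alpha - 1)|u|, i + 2\alpha|u| - 1]$, which is entirely inside $w^k$ since $2\alpha - 1 \le k$, and therefore equals $w$. The induced equalities $u[m] = w[|u| + 1 - m]$ for $m = 1, \ldots, |u|$, combined with $w = w^R$, force $u = w$ --- contradicting $u \ne w$.

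For the second claim, fix $\alpha \in \{\lceil k/2 \rceil + 1, \ldots, k\}$. Now $2\alpha - k - 1 \ge 1$, so $c - R$ sits at or to the right of $i + |u| - t$ and most of the palindromic equalities have both sides strictly inside $w^k$. Split them into an ``inner'' group, whose right-hand side lies inside $uw^k$, and an ``over-shoot'' group of exactly $t$ equalities whose right-hand side ranges over $[i + (k+1)|u|, h]$. The inner group unwinds, block by block, to instances $w[j] = w[|u| + 1 - j]$ --- free by the palindromicity of $w$. For the over-shoot group the mirror position is $c - i' = i + (2\alpha - k - 1)|u| - s$; after checking that this mirror lies inside a $w$-block and applying $w = w^R$ once more, the $t$ equalities collapse to the single $\alpha$-independent test $T[i + (k+1)|u| + s] = w[s+1]$ for $s = 0, \ldots, t-1$. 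Hence every $\alpha$ in the range reaches $h$ iff this test holds, so either all reach $h$ or none do.

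The delicate step is the uniform check, for all $\alpha$ in the second range, that the mirror of each over-shoot position really falls inside a $w$-block rather than grazing the $u$-block. The tightest situation arises at the smallest admissible $\alpha$ combined with $t$ close to $|w|$, and it is exactly the hypothesis $t < |w|$ that has to be used to push the mirror safely clear of $u$.
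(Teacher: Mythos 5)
Your inner/over-shoot decomposition and your proof of the first claim are sound, but the step you yourself flag as delicate is exactly where the argument breaks, and it cannot be repaired in the form you propose. Take $k$ even and the smallest admissible centre, $\alpha=\lceil k/2\rceil+1=\frac{k}{2}+1$. The mirror of the over-shoot position $i+(k+1)|u|+s$ about the centre $c=i+(\frac{k}{2}+1)|u|$ is $2c-1-(i+(k+1)|u|+s)=i+|u|-1-s$, i.e.\ it lies among the last $t$ positions of the $u$-block, not in a $w$-block; the hypothesis $t<|w|$ only prevents the mirror from running past the left end of $u$, it does not push it clear of $u$. (Your own formula shows this: the mirror sits in block $2\alpha-k-2$, which is block $0$, namely $u$, precisely when $\alpha=\frac{k}{2}+1$.) For this centre the reaching condition is therefore ``$v^{R}$ equals the last $t$ characters of $u$'', which is not your $\alpha$-independent test ``$v$ is a prefix of $w$'', and the all-or-none conclusion genuinely fails there: with $w=aa$, $u=bb$, $k=4$ and trailing character $b$ (so $T[i..h]=bbaaaaaaaab$ and $t=1<|w|$), the palindrome centred at $i+3|u|$ reaches $h$ while the one centred at $i+4|u|$ does not; with $u=ab$ and trailing character $a$ the situation is reversed. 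So your argument is complete only for odd $k$ and for the centres $\alpha\ge\frac{k}{2}+2$ when $k$ is even; no use of $t<|w|$ closes the remaining case, because the literal ``all or none'' claim is false at that one centre.

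This exceptional centre is precisely what the paper's proof isolates. It writes $T[i..\head]=uw^{k}v$ with $|v|<|u|$ and splits on whether $v$ is a prefix of $w$: if it is not, the configuration reduces to Lemma~\ref{lemma:structure1}, so only $\alpha=\frac{k}{2}+1$ can possibly correspond to a palindrome reaching $h$; if it is, all the non-exceptional centres reach $h$ by the periodic structure. In other words, what the proof establishes (and what Section~\ref{section:time2} actually uses, where the conclusion is phrased as ``either all such processes succeed, or just the one with $\alpha=\frac{\ell'+r}{2}+1$ succeeds'') is a trichotomy in which the middle centre may behave differently from all the others, rather than the literal all-or-none statement. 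To repair your write-up, keep your collapse argument for the centres whose over-shoot mirrors fall in $w$-blocks, and treat $\alpha=\frac{k}{2}+1$ (for even $k$) separately via the Lemma~\ref{lemma:structure1}-type argument, as the paper does.
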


\begin{figure}[t]
\includegraphics[width=\textwidth]{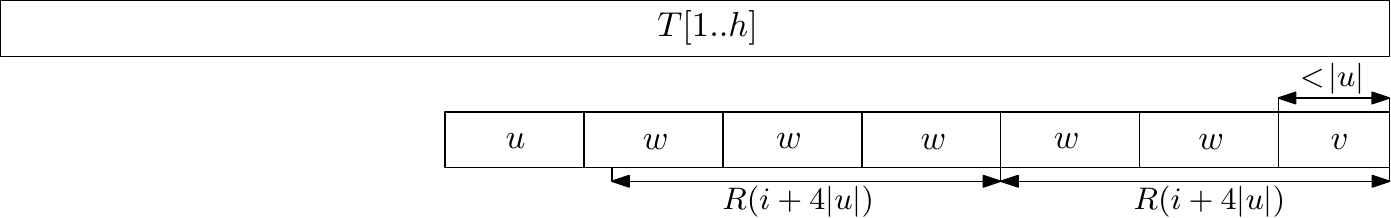}
\caption{All palindromes centered at positions $i+\alpha|u|$ with $\alpha\in\{\lceil\frac{k}{2}\rceil+1,\ldots,k\}$ reach $h$ if $v$ is a~prefix of $w$.}
\label{fig:structure2}
\end{figure}

\begin{proof}
Take any $\alpha\in\{1,2,\ldots,k\}$. If $\alpha\leq\lceil\frac{k}{2}\rceil$, then because $u\neq w$ the radius at $i+\alpha |u|$ is too small
for the palindrome centered at $i+\alpha|u|$ to reach $h$.
Otherwise, let $T[i..\head]=uw^{k}v$, where $|v|<|u|$ because $h-i-(k+1)|u|+1$, see
Fig.~\ref{fig:structure2}. Now either $v$ is not a prefix
of $w$, and we actually get the situation from Lemma~\ref{lemma:structure1}, so only $\alpha=\frac{k}{2}+1$ can possibly correspond
to a palindrome reaching $h$, or $v$
is a~prefix of $w$, and for all $\alpha\geq\lceil\frac{k}{2}+1\rceil$ the palindrome centered at $i+\alpha|u|$ reaches $h$.
\end{proof}

Recall that we want to run all $P(c'_0+t|w|)$ inside the segment with $t \geq 0$, and our $w$ starts at $T[c'_{0}]$.
We know that $w$ repeats $\ell$ times to the left in $T[(2c'_{0}-\head-2)..(c'_{0}-1)]$ and $r$ times to the right till the end of the
already seen $T[1..\head]$. The actual number of repetitions of $w$ to the left in the whole $T1..(c'_{0}-1)]$, denoted $\ell'$,
might be larger than $\ell$. By Lemma~\ref{lemma:structure1} and Lemma~\ref{lemma:structure2}, either all processes of the form
$P(c'_0+(-\ell'+\alpha)|w|)$ with $\alpha \geq \lceil\frac{\ell'+r}{2}\rceil+1$ will succeed, or just the one with
$\alpha = \frac{\ell'+r}{2}+1$ will succeed. Because the size of the buffer is $5$, we only need to ensure that the $5$ leftmost
processes which will succeed are run.
To guarantee this, we run all processes of the form $P(c'_{0}+(\max(-\ell+\lceil\frac{\ell+r}{2}\rceil +1,0) + x)|w|)$ for $x=0,1,2,3,4$
which are still inside the segment.
This is correct, as following two cases show.

\begin{enumerate}
\item The process $P(c'_{0}+(-\ell'+\alpha)|w|)$ with $\alpha = \frac{\ell'+r}{2}+1$ is on the left of the segment, so either all or none
processes of such form in the segment are alive.
\item The process $P(c'_{0}+(-\ell'+\alpha)|w|)$ with $\alpha = \frac{\ell'+r}{2}+1$ is inside the segment, so $\ell'$ cannot be too large.
More precisely, $\ell' \leq r$, and consequently $\ell=\ell'$.
\end{enumerate}

We run a constant number of processes, each of them in $\bigo(1)$ time, but to ensure that every segment is processed in such
complexity, we also need to remove the binary search used to approximate how
many times $w$ can be repeated to the left and to the right.

Recall that $|w|=\lcm(2^{\level},\kpers)$, $w$ starts at $T[c'_0]$ and lies
fully within a segment $s$, and furthermore $|w|$ is a period of the whole $s$. As mentioned before, we can also assume that $|w| \leq \frac{1}{2}|s|$,
as otherwise there are at most two processes which might need to be run. We can compute how many times $w$ can be repeated to
its left (or rather approximate this value as described in Lemma~\ref{lemma:extend2}) when the segment is created, as the result
does not depend on the current value of $h$.
Similarly, we can compute how many times it can be repeated to the right when we create the segment, but here the important
difference is that we might continue till the very end of the current $T[1..\head]$, \ie, the next copy of $w$ might extend beyond the
current prefix $T[1..\head]$. It can be seen that in such a case the next time we need to deal with the same segment, at most one
additional copy of $w$ fits inside $T[1..\head]$. This happens because the segment is relevant when $2^{\ell} \divides h+2-2c$, and
$|w|\geq 2^{\ell}$. Therefore, the number of times $w$ repeats to the right can be maintained in $\bigo(1)$ time.

\begin{theorem}
\label{thm:time_efficient}
Any scheduling scheme with $b_{\level}\geq 12$ for all $\level\leq L$
can be simulated using $\bigo(\log n)$ additional space on the top of the
space taken by the landmarks and $\bigo(\log n)$ time to process $T[\head]$.
\end{theorem}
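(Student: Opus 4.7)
The plan is to combine Theorem~\ref{memory_efficient} with the machinery developed throughout Section~\ref{section:time2}. Theorem~\ref{memory_efficient} already gives us that maintaining the partition and running all processes in segments with sparse descriptions takes $\bigo(\log n)$ time per step (since merges are $\bigo(\log n)$, each sparse segment holds at most $4$ processes, and there are $\bigo(\log n)$ segments). Thus the only obstacle is to simulate, in $\bigo(1)$ amortized time per segment, running every relevant process inside a segment with a dense description; everything else in the theorem is immediate.

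Fix a segment $s$ with dense description given by $(c,\kpers)$. First, I would invoke Lemma~\ref{lem:few levels} to reduce to a constant number of associated landmark levels $\level$, and then, for each such $\level$, rewrite the congruence $2c'-h-2 \equiv 0 \pmod{2^\level}$ as in equation~(\ref{eq:gform}), which shows that the centers $c'$ to be run form two arithmetic progressions with common difference $\lcm(2^\level,\kpers)$. The quantities $2^\ell=\gcd(\kpers,2^\level)$ and the modular inverse $(\kpers/2^\ell)^{-1}$ depend only on $\kpers$ and $\level$, so they can be precomputed in $\bigo(\log n)$ when the segment is created and stored inside its description; the base shift $c'_0$ can then be derived in $\bigo(1)$ from the current value of $h$.

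Next, for a fixed progression I pick the word $w$ of length $|w|=\lcm(2^\level,\kpers)$ starting at $c'_0$. Since $|w|\le\tfrac12|s|$ (otherwise at most two processes are relevant and we handle them directly), $w^{2}$ occurs at $c'_{0}$ inside $s$. I then apply Lemma~\ref{lemma:extend} and Lemma~\ref{lemma:extend2}, with the help of the ghost landmarks guaranteed by Lemma~\ref{lem:using ghosts}, to determine with accuracy $1$ how many copies of $w$ fit on the right of $c'_0$ within $T[1..h]$ and on the left of $c'_0$ within $T[(2c'_0-h-2)..(c'_0-1)]$. Using the structural Lemma~\ref{lemma:structure1} and Lemma~\ref{lemma:structure2}, among all successful processes of the progression either just one (the middle one) can reach $h$, or all of a suffix of the progression can; hence it is enough, for the purpose of updating the answer and the $5$-slot buffer, to run the $5$ leftmost processes that could possibly succeed. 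Because the approximation only loses an additive $1$ on each side, iterating over the $5$ candidate combinations (plus the single-occurrence case) still runs $\bigo(1)$ processes, each in $\bigo(1)$ time.

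The remaining obstacle, which I expect to be the main one, is removing the $\bigo(\log h)$ binary searches inside Lemma~\ref{lemma:extend} and Lemma~\ref{lemma:extend2}: with four associated levels per segment and $\bigo(\log n)$ segments, running these searches on every step would cost $\bigo(\log^{2} n)$. The fix is to observe that the left extent of the periodicity of $w$ does not depend on $h$, so it can be computed once in $\bigo(\log n)$ when $s$ is created and stored in its description. The right extent does depend on $h$, but since a dense segment is processed only when $2^\level\mid h+2-2c$ and $|w|\ge 2^\level$, between two consecutive visits of the same level at most one new copy of $w$ can fit to the right of the previously known extent, so the right extent is maintained in $\bigo(1)$ per visit by testing a single additional copy with one fingerprint comparison. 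Summing over all $\bigo(\log n)$ segments in the partition and the $\bigo(1)$ associated levels and progressions per segment then yields the claimed $\bigo(\log n)$ time per character, while the extra state stored in each segment is $\bigo(1)$, preserving the $\bigo(\log n)$ additional space bound of Theorem~\ref{memory_efficient}.
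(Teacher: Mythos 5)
Your proposal follows essentially the same route as the paper: it reduces the problem to dense segments via Theorem~\ref{memory_efficient}, invokes Lemma~\ref{lem:few levels} to bound the associated levels, rewrites the landmark condition as an arithmetic progression with difference $\lcm(2^\level,\kpers)$, uses ghost landmarks with Lemmas~\ref{lemma:extend} and~\ref{lemma:extend2} to approximate the periodicity extents, applies the structural Lemmas~\ref{lemma:structure1} and~\ref{lemma:structure2} to restrict to the five leftmost candidates, and then amortizes away the binary searches by precomputing the left extent at segment creation and updating the right extent incrementally. This matches the paper's argument step for step (a minor quibble: the paper iterates over $4+1=5$ possibilities, not $5+1$, but this does not affect the asymptotics).
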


\section{Lower bounds}
\label{section:lowerbounds}

In this section we use Yao's minimax principle~\cite{Yao77} to prove
lower bounds on the space complexity of computing the largest radius of a palindrome in a word
of length $n$ over an alphabet $\Sigma$ in the streaming model. We denote this problem
by \palin.

\begin{theorem}[Yao's minimax principle for randomized algorithms]
\label{yao}
Let $\mathcal{X}$ be the set of inputs for a problem and $\mathcal{A}$ be the set of all deterministic algorithms solving it. Then,
for any $x \in \mathcal{X}$ and $A \in \mathcal{A}$, the cost of running $A$ on $x$ is denoted by $c(a,x) \ge 0$.

Let $p$ be the probability distribution over $\mathcal{A}$, and let $A$ be an algorithm chosen at random according to $p$. Let $q$ be
the probability distribution over $\mathcal{X}$, and let $X$ be an input chosen at random according to $q$. Then
the worst-case expected cost of the randomized algorithm is at least as large as the cost of the best deterministic algorithm against the chosen
distribution on the inputs:
\[\max_{x \in \mathcal{X}} \mathbf{E}[c(A,x)] \ge \min_{a \in \mathcal{A}} \mathbf{E}[c(a,X)].\]
\end{theorem}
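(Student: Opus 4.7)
The plan is to establish the stated inequality by chaining three elementary facts: the maximum of a function over $\mathcal{X}$ is at least its average under any distribution $q$, expectations of independent quantities can be swapped, and the average of $\mathbf{E}_X[c(a,X)]$ over $a$ sampled from $p$ is at least the minimum of $\mathbf{E}_X[c(a,X)]$ over all $a\in\mathcal{A}$. None of these steps is deep on its own; the whole content of Yao's principle is that stringing them together gives a useful tool for lower bounds.

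First I would observe that for the random input $X$ drawn from $q$,
\[
\max_{x \in \mathcal{X}} \mathbf{E}_A[c(A,x)] \;\geq\; \mathbf{E}_X\!\bigl[\mathbf{E}_A[c(A,X)]\bigr],
\]
since the maximum of any non-negative function is at least its expectation under any probability measure on the same domain. Next, because $A$ and $X$ are drawn independently and $c \geq 0$, I may use Fubini's theorem (or, in the discrete setting relevant here, plain linearity of expectation applied to a finite or countable sum) to interchange the two expectations:
\[
\mathbf{E}_X\!\bigl[\mathbf{E}_A[c(A,X)]\bigr] \;=\; \mathbf{E}_A\!\bigl[\mathbf{E}_X[c(A,X)]\bigr].
\]
Finally, for every deterministic $a \in \mathcal{A}$ one has $\mathbf{E}_X[c(a,X)] \geq \min_{a' \in \mathcal{A}} \mathbf{E}_X[c(a',X)]$, and since this bound holds pointwise it is preserved after taking the outer expectation over $A \sim p$, yielding the desired inequality.

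The main obstacle is really only bookkeeping: one must make sure $A$ and $X$ are independent so that the swap of expectations is legitimate, and one must verify that the inner expectation $\mathbf{E}_X[c(a,X)]$ is well-defined for every fixed $a$. In our applications both $\mathcal{A}$ (deterministic streaming algorithms with bounded memory) and $\mathcal{X}$ (strings of length $n$ over a finite alphabet) are finite, so measure-theoretic issues do not arise, and the three inequalities above combine directly to the statement of the theorem.
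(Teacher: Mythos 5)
Your proof is correct. Note, however, that the paper does not prove this statement at all: it is quoted as a known theorem with a citation to Yao's 1977 paper and used purely as a black box in Section 6, so there is no ``paper proof'' to compare against. Your three-step chain
\[\max_{x}\mathbf{E}_A[c(A,x)] \;\ge\; \mathbf{E}_X\bigl[\mathbf{E}_A[c(A,X)]\bigr] \;=\; \mathbf{E}_A\bigl[\mathbf{E}_X[c(A,X)]\bigr] \;\ge\; \min_{a}\mathbf{E}_X[c(a,X)]\]
is exactly the standard textbook argument for the easy direction of Yao's principle, and each step is justified as you state: the maximum of a nonnegative function dominates its average under any distribution, the swap of expectations is legitimate because $A$ and $X$ are sampled independently (and $c\ge 0$, so Tonelli applies even without finiteness), and the pointwise bound $\mathbf{E}_X[c(a,X)]\ge\min_{a'}\mathbf{E}_X[c(a',X)]$ survives averaging over $A$ since the right-hand side is a constant. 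The only cosmetic caveat is that for an infinite class $\mathcal{A}$ the minimum should in general be an infimum (and the maximum over $\mathcal{X}$ a supremum); as you observe, in the paper's applications both sets are finite, so the statement as written is fine.
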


We use the above theorem for both Las Vegas and Monte Carlo algorithms. For Las Vegas algorithms, we consider only correct algorithms, and $c(x,a)$ is the
memory usage.
For Monte Carlo algorithms, we consider all algorithms (not necessarily correct) with memory usage not exceeding a certain threshold,
and $c(x,a)$ is the correctness indicator function, \ie, $c(x,a)=0$ if the algorithm is correct and $c(x,a)=1$ otherwise.

Our proofs will be based on appropriately chosen padding. The padding requires a constant number
of fresh characters.
If $\Sigma$ is twice as large as the number of required fresh characters, we can still use half of
it to construct a difficult input instance, which does not affect the asymptotics. Otherwise,
we construct a difficult input instance over $\Sigma$, then add enough new fresh characters
to facilitate the padding, and finally reduce the resulting larger alphabet to binary at the expense
of increasing the size of the input by a constant factor.

\begin{lemma}
\label{alphabetreduction}
For any alphabet $\Sigma=\{1,2,\ldots,\sigma\}$ there exists a morphism $h : \Sigma^* \rightarrow \{0,1\}^*$ such that, for any $c\in\Sigma$, $|h(c)| = 2\sigma+6$ and, for any word $w$, $w$ contains a palindrome
of length $\ell$ if and only if $h(w)$ contains a palindrome of length $(2\sigma+6)\cdot \ell$.
\end{lemma}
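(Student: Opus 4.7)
The plan is to define $h$ by sending each symbol $c\in\Sigma$ to a carefully engineered palindromic binary block of length $2\sigma+6$ and then verify the two directions of the biconditional separately. A convenient template is $h(c) = T\cdot g(c)\cdot T^{R}$, where $T$ is a short fixed binary string, $T^{R}$ is its reverse, and $g(c)$ is a palindromic body of length $2\sigma+6-2|T|$ that encodes $c$ (for instance, via the positions of a pair of symmetric $1$'s embedded in runs of $0$'s). The tag $T$ is chosen so that: (i)~each $h(c)$ is a palindrome and $h$ is injective; and (ii)~a short palindromic pattern $B$ (say, the substring $11$ formed by the junction $T^{R}T$ between consecutive blocks) occurs in $h(w)$ precisely at positions $\equiv 0\pmod{2\sigma+6}$ and nowhere inside a single block.

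For the forward direction, if $w[i..i+\ell-1]$ is a palindrome then $w[i+j]=w[i+\ell-1-j]$ for every $j$, so $h(w[i+j])=h(w[i+\ell-1-j])$; combining this with $h(c)^{R}=h(c)$ for each $c$ shows that the concatenation $h(w[i])h(w[i+1])\cdots h(w[i+\ell-1])$ equals its own reverse and is therefore a palindrome of length $(2\sigma+6)\ell$ in $h(w)$.

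For the backward direction, let $P$ be a palindromic factor of $h(w)$ of length $L=(2\sigma+6)\ell$ starting at position $p$. Once we have $p\equiv 1\pmod{2\sigma+6}$, $P$ coincides with $h(v)$ for a factor $v$ of $w$ of length $\ell$, and the symmetry of $P$ combined with injectivity of $h$ and palindromicity of each block forces $v$ itself to be a palindrome. To prove block-alignment, consider the set of positions at which $B$ occurs inside $P$: by construction this set lies in $\{(2\sigma+6)k:k\in\mathbb{Z}\}$ and, because $B$ is itself palindromic, it must be symmetric about the centre of $P$. Using $L\equiv 0\pmod{2\sigma+6}$, this symmetry yields $2p\equiv 2\pmod{2\sigma+6}$, i.e.\ $p\equiv 1\pmod{\sigma+3}$, leaving two candidate residues $p\equiv 1$ and $p\equiv\sigma+4$ modulo $2\sigma+6$. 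The main obstacle is ruling out the half-shifted residue $p\equiv\sigma+4$: this is precisely why $T$ is budgeted the extra padding in the block length $2\sigma+6$, which is used to insert a short asymmetric feature whose image under the half-shifted reflection would force two positions of $h(w)$ bearing different characters to coincide, yielding a contradiction. Once the half-shift is excluded, block-alignment follows, and the remaining verifications (that $B$ appears exactly at the claimed positions, injectivity of $h$, and that alignment plus palindromicity of $P$ force $v$ to be a palindrome) reduce to short direct inspections of a constant number of characters.
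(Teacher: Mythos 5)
Your plan shares the paper's basic architecture: each $h(c)$ is a palindromic binary block of length $M:=2\sigma+6$, and a distinguished short palindromic pattern occurring at a single offset modulo $M$ is used to constrain where a long palindrome in $h(w)$ can start, leaving two residue classes mod $M$. The paper uses the in-block pattern $00$ (its explicit block is, up to a typo in the text, $h(c)=1^c\,0\,1^{\sigma-c+1}\,00\,1^{\sigma-c+1}\,0\,1^c$), whereas you propose a junction pattern $T^{R}T$; that choice is cosmetic. Your forward direction is fine.

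The genuine gap is in your plan to \emph{rule out} the half-shifted residue $p\equiv\sigma+4\pmod{M}$ by an asymmetric feature yielding a contradiction. This cannot work under your own hypotheses. Because every block $h(c)$ is a palindrome, if $w[i..i+2k]$ is any odd-length palindrome of $w$ (and single characters always are), then $h(w[i..i+2k])$ is a palindrome of length $(2k+1)M$ in $h(w)$; chopping $M/2$ characters off each end gives a palindrome of length $2kM$ that starts at offset $M/2$ within block $i$, i.e.\ exactly at the half-shifted residue. So half-shifted palindromes of length a multiple of $M$ are unavoidable for any palindromic-block morphism, and no choice of $T$ or $g(c)$ can make them lead to a contradiction. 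The paper does not try to exclude this case; it classifies it. Its case split is: either the center of the long palindrome sits inside a $00$, or the palindrome maps some $00$ onto another $00$. In the half-shifted/center-in-$00$ situation with $|P|=M\ell$, the induced constraints on $w$ are $w[j+i]=w[j+\ell-i]$ for $0\le i\le\ell$, giving a palindrome of length $\ell+1$ in $w$ (and the lemma is really used in the ``length at least $\ell$'' sense, under which this suffices). Your argument would need to be reorganized along those lines — produce a palindrome in $w$ in both residue cases — rather than seek a contradiction in one of them. A secondary, smaller gap is that you never pin down $T$ and $g(c)$, so the claims that $B$ occurs only at block boundaries, that $h$ is injective, and that the half-shift analysis closes are all left unverified; the paper's explicit formula makes all of these short direct checks.
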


\begin{proof}
We set:
\[h(c) = 1 1^s 0 1^{s-c} 1 00 1 1^{s-c} 0 1^c 1.\]
Clearly $|h(c)|=2\sigma+6$ and, because every $h(c)$ is a palindrome, if $w$ contains a palindrome
of length $\ell$ then $h(w)$ contains a palindrome of length $(2\sigma+6)\cdot\ell$. Now assume that
$h(w)$ contains a palindrome of length $(2\sigma+6)\cdot\ell$, where $\ell \geq 1$. 
If $\ell=1$ then we obtain that $w$ should contain a palindrome of length $1$, which always holds.
Otherwise, the palindrome contains $00$ inside and we consider two cases.
\begin{enumerate}
\item The palindrome is centered inside $00$. Then it corresponds to an odd palindrome of length
$\ell$ in $w$.
\item The palindrome maps some $00$ to another $00$. Then it corresponds to an even palindrome
of length $\ell$ in $w$.
\end{enumerate}
In either case, the claim holds.
\end{proof}

For the padding we will often use an infinite word $\nu = 0^11^10^21^20^31^3\ldots$, or more precisely
its prefixes of length $d$, denoted $\nu(d)$. Here $0$ and $1$ should be understood as two characters
not belonging to the original alphabet, which is then reduced using the above lemma. The longest
palindrome inside $\nu(d)$ has radius $\bigo(\sqrt{d})$. 

We first show that any Las Vegas approximation algorithm must necessarily use
$\Omega(n\log|\Sigma|)$ bits of memory in expectation in both variants, so Las Vegas
randomization is essentially useless here.
By Yao's minimax principle, it is enough to construct a distribution over the inputs, which is hard for
any deterministic algorithm using less memory. We restrict the inputs to a family of strings
of the form $\nu(\aerr) x \$\$ x^R \nu(\aerr)^R$, where $\$$ is a special character not belonging to
$\Sigma$ and $\nu(\aerr)$ is a padding word of length $\aerr$ chosen so that there are no long palindromes
inside. Then the longest palindrome must be centered in the middle of the whole word.
By a counting argument, the state of the algorithm after
having seen $\nu(\aerr) x \$$ must be distinct for different words $x$, so the required number
of bits is $\Omega(n\log |\Sigma|)$ in expectation.
A bound on multiplicative approximation follows because multiplicative approximation implies
additive approximation.

\begin{restatable}[Las Vegas approximation]{theorem}{restatablelv}
\label{th:lv}
Let $\alg$ be a Las Vegas streaming algorithms solving \palin
with additive error $\aerr \le 0.49 n$ or multiplicative error $(1+\varepsilon) \le 50$
using $s(n)$ bits of memory.
Then $\mathbb{E}[s(n)]=\Omega(n \log |\Sigma|)$.
\end{restatable}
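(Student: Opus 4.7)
I would apply Yao's minimax principle (Theorem~\ref{yao}) to a hard input family of the form $T(x)=\nu(\aerr)\cdot x\cdot\$\$\cdot x^{R}\cdot\nu(\aerr)^{R}$, with $x$ drawn uniformly from $\Sigma^{m}$ and $m=\lfloor(n-2\aerr-2)/2\rfloor=\Omega(n)$ thanks to the assumption $\aerr\le 0.49\,n$. The character $\$$ and the two symbols used inside $\nu$ are drawn from outside $\Sigma$; if $\Sigma$ is too small to host them, I would first add these constantly many fresh symbols and then reduce back via Lemma~\ref{alphabetreduction}, paying only a constant factor in $n$.

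The core is an indistinguishability claim. Fix the random bits of the Las Vegas algorithm $\alg$ to obtain a deterministic always-correct procedure $\alg_{r}$, and suppose that two distinct strings $x_{1}\ne x_{2}\in\Sigma^{m}$ drive $\alg_{r}$ into the same state after reading the prefix $\nu(\aerr)\,x\,\$$. Appending the common suffix $\$\,x_{1}^{R}\,\nu(\aerr)^{R}$ forces $\alg_{r}$ to output the same value on $T(x_{1})$ and on the hybrid $T'=\nu(\aerr)\,x_{2}\,\$\$\,x_{1}^{R}\,\nu(\aerr)^{R}$. The true answer on $T(x_{1})$ is $N=n$, since the whole word is a palindrome. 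I would then bound the true answer $V$ on $T'$ by $2m$: because $\nu$ uses characters disjoint from $\Sigma$, any palindrome of $T'$ that avoids $\$$ is trapped either inside $\nu(\aerr)$ (length $O(\sqrt{\aerr})$) or inside one of $x_{2}, x_{1}^{R}$ (length at most $m$), while a palindrome containing both copies of $\$$ must be the even palindrome centered between them, whose length is $2+2c$ with $c<m$ the longest common suffix of $x_{1}$ and $x_{2}$. Thus $N-V\ge 2\aerr+2>\aerr$, which is incompatible with the additive-$\aerr$ guarantee and yields the desired contradiction.

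Consequently, for every fixing of the random bits, the state of $\alg_{r}$ after $\nu(\aerr)\,x\,\$$ is an injective function of $x\in\Sigma^{m}$; hence the memory at that moment is at least $\lceil m\log|\Sigma|\rceil$ bits deterministically, and Yao's principle gives $\mathbb{E}[s(n)]=\Omega(m\log|\Sigma|)=\Omega(n\log|\Sigma|)$. For the multiplicative variant with $1+\varepsilon\le 50$, I would reuse the same construction but replace the padding length by $p=50m$ in place of $\aerr$, so that $n=102m+2$ still yields $m=\Theta(n)$ and $N/V\ge 1+p/m\ge 51>1+\varepsilon$ again forces distinguishability and produces the same bound.

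The only genuine technical obstacle is the upper bound $V\le 2m$ on the longest palindrome of the hybrid input. This is exactly what the disjoint-alphabet design of $\nu$ combined with the $O(\sqrt{\aerr})$ bound on palindromes inside the padding is set up to provide; after that, the rest is a clean counting argument via Yao's minimax principle.
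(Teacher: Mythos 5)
Your proposal is correct and follows essentially the same route as the paper: Yao's minimax principle, the hard family $\nu(\aerr)\,x\,\$\$\,x^R\,\nu(\aerr)^R$ with fresh padding characters (reduced via Lemma~\ref{alphabetreduction} if $\Sigma$ is small), and the hybrid/collision argument using the $O(\sqrt{\aerr})$ bound on palindromes inside $\nu$ to force distinct states for distinct $x$; the paper takes the second half to be an independent $y$ rather than $x^R$, but since memory is measured just after the prefix this is cosmetic, and your multiplicative treatment (re-padding with $\nu(50m)$) is a valid alternative to the paper's observation that a $(1+\varepsilon)$-approximation is automatically an additive $O(n\varepsilon/(1+\varepsilon))$-approximation. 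One small imprecision to fix: injectivity of states does not by itself give ``memory at least $\lceil m\log|\Sigma|\rceil$ bits \emph{deterministically}'' for a fixed $\alg_r$, since a deterministic streaming algorithm may use different amounts of memory on different inputs; you need the standard counting step (as in the paper) that at most $2^{k+1}$ inputs $x$ can lead to states representable in $\le k$ bits, so at least half the $x$'s force $\ge m\log|\Sigma|-O(1)$ bits, from which $\mathbb{E}_x[s]=\Omega(m\log|\Sigma|)$ and hence, via Yao, $\mathbb{E}[s(n)]=\Omega(n\log|\Sigma|)$.
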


\begin{proof}
By Theorem~\ref{yao}, it is enough to construct a probability distribution $\mathcal{P}$ over $\Sigma^n$ such that
for any deterministic algorithm $\dalg$, its expected memory usage on a word chosen according to $\mathcal{P}$ is
$\Omega(n \log |\Sigma|)$ in bits.

Consider solving \palin with additive error $\aerr$.
We define $\mathcal{P}$ as the uniform distribution over $\nu(\aerr) x \$\$ y \nu(\aerr)^R$, where $x,y \in \Sigma^{n'}$, $n' = \frac{n}{2}-\aerr-1$, and $\$$ are special characters not belonging to $\Sigma$.
Let us look at the memory usage of $\dalg$ after having read $\nu(\aerr) x$. We say that $x$ is "good" when the memory usage 
is at most $\frac{n'}{2}\log |\Sigma|$ and "bad" otherwise.
Assume that $\frac{1}{2}|\Sigma|^{n'}$ of all $x$'s are good, then there are two strings
$x \not= x'$ such that the state of $\dalg$ after having read both $\nu(\aerr) x$ and $\nu(\aerr) x'$ is exactly the same. Hence the behavior of $\dalg$ on
$\nu(\aerr) x\$\$ x^R \nu(\aerr)^R$ and $\nu(\aerr) x'\$\$ x^R \nu(\aerr)^R$ is exactly the same. The former is a palindrome of radius $\frac{n}{2} = n'+\aerr+1$, so $\dalg$ 
must answer at least  $n'+1$, and consequently the latter also must contain a palindrome of radius at least $n'+1$.
A palindrome inside $\nu(\aerr) x'\$\$ x^R \nu(\aerr)^R$ is either fully contained within
$\nu(\aerr)$, $x'$, $x^R$ or it is a middle palindrome.
But the longest palindrome inside $\nu(\aerr)$ is of length $\bigo(\sqrt{\aerr})<n'+1$ (for $n$ large enough)
and the longest palindrome inside $x$ or $x^R$ is of length $n'<n'+1$, so
$\nu(\aerr) x'\$\$ x^R \nu(\aerr)^R$
contains a middle palindrome of radius $n'+1$. This implies that $x=x'$, which is a contradiction.
Therefore, at least $\frac{1}{2}|\Sigma|^{n'}$ of all $x$'s are bad. But then the expected memory usage of $\dalg$ is at least
$\frac{n'}{4}\log |\Sigma|$, which for $\aerr\le0.49 n$ is $\Omega(n \log |\Sigma|)$ as claimed.

Now consider solving \palin with multiplicative error $(1+\varepsilon)$. 
An algorithm with multiplicative error $(1+\varepsilon)$ can also be considered as having additive error $\aerr=\frac{n}{2} \cdot \frac{\varepsilon}{1+\varepsilon}$, so if the expected memory usage of
such an algorithm is $o(n\log |\Sigma|)$ and $(1+\varepsilon) \le 50$ then we obtain
an algorithm with additive error $\aerr \le \frac{n}{2}\frac{49}{50}=0.49n$ and expected memory usage $o(n\log |\Sigma|)$, which we already know to be impossible.
\end{proof}

Now we move to Monte Carlo algorithms. We first consider exact algorithms solving \palin;
lower bounds on approximation algorithms will be then obtained by padding the input
appropriately. We introduce an auxiliary problem \midpalin, which is to compute radius
of the middle palindrome in a word of length $n$ over an alphabet $\Sigma$. We want to
show that solving \midpalin exactly with error probability smaller than $\frac{1}{n|\Sigma|}$ requires
$\lfloor\frac{n}{2} \log |\Sigma| \rfloor$ bits of space. By Yao's minimax principle,
it is enough to construct a distribution over the inputs, such that any deterministic algorithm
using less memory is not able to distinguish between inputs with different answers
reasonably often. This can be done by considering uniform distribution on inputs of the form
$x[1]\ldots x[\frac{n}{2}]x[\frac{n}{2}]\ldots x[k+1]c x[k-1] \ldots x[1]$. Then amplification
(running multiple instances of an algorithm in parallel) gives us a lower bound on the space
complexity of any algorithm solving \midpalin exactly. The lower bound can be translated
to \palin by padding the input in the middle, so that the longest palindrome must be centered
in the middle.

\begin{restatable}{lemma}{restatablelowerboundexact}
\label{lowerbound:exact}
There exists a constant  $\gamma$ such that any randomized Monte Carlo streaming algorithm
$\alg$ solving \midpalin or \palin exactly with probability $1-\frac{1}{n}$
uses at least $\gamma \cdot n \log\min \{|\Sigma|, n\}$ bits of memory.
\end{restatable}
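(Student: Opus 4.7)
The plan is to invoke Yao's minimax principle (Theorem~\ref{yao}) by reducing from the AUGMENTED-INDEX one-way communication problem. Set $\sigma = \min\{|\Sigma|, n\}$ and fix a subalphabet $\Sigma' \subseteq \Sigma$ of size $\sigma$. I would first prove the bound for \midpalin on the family of inputs
\[
I(y,k,c,\mu) = y[1]\cdots y[n/2]\,\cdot\, y[n/2]y[n/2-1]\cdots y[k+1]\,\cdot\, c\,\cdot\, \mu,
\]
with $y \in \Sigma'^{n/2}$, $k \in [1, n/2]$, $c \in \Sigma'$, and arbitrary $\mu \in \Sigma^{k-1}$. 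A direct unfolding shows the middle palindrome has radius exactly $n/2 - k$ if $c \ne y[k]$ and radius at least $n/2 - k + 1$ if $c = y[k]$, regardless of $\mu$. Consequently, an exact answer to \midpalin decides whether $c = y[k]$.

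Given a Monte Carlo streaming algorithm $\alg$ for \midpalin with error $\le 1/n$ using $s$ bits, I would build a one-way protocol for AUGMENTED-INDEX over $\Sigma'$ with array length $n/2$ as follows: Alice, holding $y$, simulates $\alg$ on $y[1]\cdots y[n/2]$ under shared randomness and transmits the resulting memory state to Bob in $s$ bits. Bob, holding $k$, $c$ and $y[k+1],\ldots,y[n/2]$, continues the simulation with the continuation above, choosing any fixed $\mu$ he can generate without knowing $y[1],\ldots,y[k-1]$; he then outputs ``$c=y[k]$'' iff the reported radius exceeds $n/2-k$. This protocol has error at most $1/n$, so the standard AUGMENTED-INDEX lower bound $\Omega\bigl(\tfrac{n}{2}(\log \sigma - H_2(1/n) - (1/n)\log(\sigma-1))\bigr) = \Omega(n \log \sigma)$ forces $s = \Omega(n \log \min\{|\Sigma|, n\})$.

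To transfer the bound to \palin, I would reuse the same construction with symmetric padding by the non-palindromic word $\nu$ from the proof of Theorem~\ref{th:lv}, so that the longest palindrome in the padded input is the middle palindrome of $I(y,k,c,\mu)$. Since random $y$ has no palindrome of length $\omega(\log n)$ with high probability, and the longest palindrome inside $\nu(d)$ has length $O(\sqrt{d})$, for $k$ bounded by a small constant times $n$ and appropriate padding lengths the \palin answer equals $2(n/2-k)$ or at least $2(n/2-k+1)$ depending on whether $c = y[k]$. Hence Bob can still decide AUGMENTED-INDEX from the \palin output. If the padding needs characters outside $\Sigma$, I would apply Lemma~\ref{alphabetreduction} to collapse back to $\Sigma$ (binary if necessary) at the cost of a constant factor in the input length, which does not affect the asymptotic bound.

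The main obstacle is the threshold $\sigma = \min\{|\Sigma|, n\}$: for $|\Sigma| > n$ the trivial AUGMENTED-INDEX protocol that always guesses ``$c \ne y[k]$'' already has error $1/|\Sigma| < 1/n$, so no nontrivial communication lower bound can be hoped for beyond alphabet size $n$, which is exactly why the lemma caps at $\log n$. A secondary technical subtlety is verifying that the AUGMENTED-INDEX lower bound remains $(1 - o(1)) \cdot \tfrac{n}{2} \log \sigma$ in the sub-constant error regime $\epsilon = 1/n$; this follows from the standard direct-sum / Fano analysis since $H_2(1/n) + (1/n)\log(\sigma-1) = O(\log n / n)$, which is $o(\log \sigma)$ whenever $\sigma \ge 2$ and $n$ is large.
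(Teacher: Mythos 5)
Your reduction, as written, produces a one-way protocol for the \emph{equality-test} version of AUGMENTED-INDEX: Bob holds a candidate $c$ and decides whether $c=y[k]$. But the Fano-type bound you invoke, $\Omega\bigl(\tfrac{n}{2}(\log \sigma - H_2(1/n) - (1/n)\log(\sigma-1))\bigr)$, is the bound for the \emph{output} version, in which Bob must produce the symbol $y[k]$ itself; there, Fano is applied to a $\sigma$-ary random variable, which is what generates the $\log\sigma$ factor. In the equality-test version Fano is applied to the binary variable $\mathbb{1}[c=y[k]]$ and only certifies $\Omega(n)$ bits, not $\Omega(n\log\sigma)$. You cannot recover the output version for free by running Bob's simulation once per candidate $c$, because with a worst-case error guarantee of $1/n$ the union bound over $\sigma$ candidates gives error $\sigma/n$, which becomes vacuous exactly when $\sigma$ is close to $n$ --- the regime where the $\log\sigma$ factor matters. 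Your own remark that the ``always answer no'' strategy has distributional error $1/\sigma$ is a symptom of the same issue: the equality-test problem is genuinely easier in the information-theoretic sense. The paper obtains the $\log\min\{|\Sigma|,n\}$ factor through a different mechanism. It first proves, by a direct pigeonhole pairing of first-halves $x,x'$ with identical memory states, that a deterministic algorithm with fewer than $\lfloor\tfrac{n}{2}\log|\Sigma|\rfloor$ bits must err with probability at least $\tfrac{1}{n|\Sigma|}$ on the hard distribution $w(x,k,c)$; it then amplifies a $1/n$-error algorithm by majority vote over $\Theta(\log|\Sigma|/\log\min\{|\Sigma|,n\})$ parallel copies and derives a contradiction. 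The amplification step is essential and is precisely what makes the $\min\{|\Sigma|,n\}$ appear; your outline has no counterpart to it, so it establishes only $\Omega(n)$ rather than the claimed bound.

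A smaller point: for transferring from \midpalin to \palin, the paper pads in the \emph{middle} with $1\,0^n\,1$, which deterministically forces the longest palindrome to coincide with the middle one. Your proposal pads at the ends with $\nu$ and must additionally argue that a random $y$ contains no long internal palindrome; this is fixable but adds a probabilistic side-condition that the paper's construction avoids entirely.
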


\begin{proof}
First we prove that if $\alg$ is a Monte Carlo streaming algorithm solving \midpalin exactly using less than $\lfloor\frac{n}{2} \log |\Sigma| \rfloor$ bits
of memory, then its error probability is at least $\frac{1}{n|\Sigma|}$.

By Theorem~\ref{yao}, it is enough to construct probability distribution $\mathcal{P}$ over $\Sigma^n$ such that for any deterministic
algorithm $\dalg$ using less than $\lfloor\frac{n}{2} \log |\Sigma| \rfloor$ bits of memory, the expected probability of error on a word chosen
according to $\mathcal{P}$ is at least $\frac{1}{n|\Sigma|}$.

Let $n' = \frac{n}{2}$. For any $x\in\Sigma^{n'}$, $k\in\{1,2,\ldots,n'\}$ and $c\in\Sigma$ we define:
\[w(x,k,c) = x[1] x[2] x[3]  \ldots x[n'] x[n'] x[n'-1] x[n'-2] \ldots  x[k+1]  c  x[k-1] \ldots x[2]x[1].\]
Now $\mathcal{P}$ is the uniform distribution over all such $w(x,k,c)$.

Since there are $|\Sigma|^{n'}=2^{n' \log |\Sigma|} \ge 2\cdot 2^{\lfloor\frac{n}{2} \log |\Sigma| \rfloor-1}$ possible strings of length $n'$ and we assume that
$\dalg$ uses at most $\lfloor\frac{n}{2} \log |\Sigma| \rfloor$ bits,
we can 
partition at least half of these strings into pairs $(x,x')$, such that $\dalg$ is in the same state after reading either $x$ or $x'$. (If we choose an arbitrary maximal
matching of strings into pairs, at most half of possible strings will be left unpaired, that is one per each possible state of $\dalg$.)
Let $s$ be longest common suffix of $x$ and $x'$, so $x = v c s$ and $x' = v' c' s$, where $c \not= c'$ are single characters.  Then
$\dalg$ returns the same answer on $w(x,n'-|s|,c)$ and $w(x',n'-|s|,c)$, even though the radius of the middle palindrome is exactly $|s|$ in one of
them, and at least $|s|+1$ in the other one. Therefore, $\dalg$ errs on at least one of these two inputs. Similarly, it errs on either
$w(x,n'-|s|,c')$ or $w(x,n'-|s|,c')$. Thus the error probability is at least $\frac{1}{2n'|\Sigma|} = \frac1{n|\Sigma|}$. 

Now we can prove the lemma for \midpalin with a standard amplification trick.
Say that we have a~Monte Carlo streaming algorithm, which solves \midpalin exactly
with error probability $\varepsilon$ using $s(n)$ bits of memory. Then we can run its $k$ instances simultaneously and return
the most frequently reported answer. The new algorithm needs $\bigo(k\cdot s(n))$ bits of memory and its error probability $\varepsilon_{k}$ satisfies:
\[\varepsilon_k \le \sum_{2i < k} \binom{k}{i}(1-\varepsilon)^i \varepsilon^{k-i} \le 2^k \cdot \varepsilon^{k/2} = (4 \varepsilon)^{k/2}.\]

Let us choose $\kappa = \frac{1}{6}\frac{\log(4/n)}{\log(1/(n|\Sigma|))} = \frac16 \frac{1-o(1)}{1+\log|\Sigma|/\log n} = \Theta(\frac{\log n}{\log n + \log |\Sigma|}) = \gamma \cdot \frac{1}{\log |\Sigma|} \log \min \{|\Sigma|, n\}$, for some constant $\gamma$.
Now we can prove the theorem. Assume that $\alg$ uses less than $\kappa \cdot n\log |\Sigma| = \gamma \cdot n \log \min  \{|\Sigma|, n\}$ bits of memory. Then running $\left\lfloor \frac{1}{2\kappa} \right\rfloor \ge \frac34\frac1{2\kappa}$ (which holds since $\kappa < \frac16$)
instances of $\alg$ in parallel requires less than $\lfloor \frac{n}{2} \log |\Sigma| \rfloor$ bits of memory. But then 
the error probability of the new algorithm is bounded from above by:
\[\left(\frac{4}{n}\right)^{\frac{3}{16\kappa}} = \left(\frac{1}{n|\Sigma|}\right)^{\frac{18}{16}} \le \frac{1}{n|\Sigma|}\]
which we have already shown to be impossible.

The lower bound for \midpalin can be translated into a lower bound for solving \palin exactly
by padding the input so that the longest palindrome is centered in the middle.
Let $n'=\frac{n}{2}$ and
$x=x[1]x[2]\ldots x[n]$ be the input for \midpalin. We define:
\[w(x)= x[1] x[2] x[3]  \ldots x[n']  1 \underbracket{000 \ldots 0}_{n} 1 x[n'+1] \ldots x[n].\]
Now if the radius of the middle palindrome in $x$ is $k$, then $w(x)$ contains a palindrome of radius
at least $n'+k+1$. In the other direction, any palindrome inside $w(x)$ of radius larger than $n'$ must be centered somewhere
in the middle block consisting of only zeroes and both ones are mapped to each other, so it must be
the middle palindrome.
Thus, the radius of the longest palindrome inside $w(x)$ is exactly $n'+k+1$, so we have reduced solving \midpalin to solving
\palin[2n+2]. We already know that solving \midpalin[n] with probability $1-\frac{1}{n}$
requires $\gamma \cdot n \log \min \{|\Sigma|, n\}$ bits of memory,
so solving \palin[2n+2] with probability $1-\frac{1}{2n+2}\geq 1-\frac{1}{n}$ requires $\gamma \cdot n \log \{|\Sigma|,n\} \geq \gamma' \cdot (2n+2) \log \min \{|\Sigma|, 2n+2\}$ bits of memory.
Notice that the reduction needs $\bigo(\log n)$ additional bits of memory to count up to $n$, but for large $n$ this is
much smaller than the lower bound if we choose $\gamma' < \frac{\gamma}{4}$.
\end{proof}

To obtain a lower bound for Monte Carlo additive approximation, we observe that any algorithm
solving \palin with additive error $\aerr$ can be used to solve \palin[\frac{n-\aerr}{\aerr+1}] exactly
by inserting $\aerr$ zeroes between every two characters, in the very beginning, and in the very end.
However, this reduction requires $\log \aerr\leq \log n$ additional bits of memory for counting up to $\aerr$
and cannot be used when the desired lower bound on the required number of bits
$\Omega(\frac{n}{\aerr}\log\min(|\Sigma|,\frac{n}{E})$ is significantly smaller than $\log n$. 
Therefore, we need a separate technical lemma which implies that either additive or multiplicative
approximation with error probability $\frac{1}{n}$ requires $\Omega(\log n)$ bits of space.

\begin{restatable}{lemma}{restatablehashinglb}
\label{hashing_lb}
Let $\alg$ be any randomized Monte Carlo streaming algorithm solving \palin with additive error at most $0.49 n$ or multiplicative error at most
$n^{0.49}$  and error probability $\frac{1}{n}$.
Then $\alg$ uses $\Omega(\log n)$ bits of memory.
\end{restatable}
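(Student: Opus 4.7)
The plan is to apply Yao's minimax principle via a reduction from the one-way randomized communication complexity of \textsc{Equality}, which is known to be $\Omega(\log n)$ for $\lceil n/2\rceil$-bit strings under error probability $O(1/n)$ (by the standard private-coin fingerprinting lower bound).

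Given an algorithm $\alg$ for \palin with additive error $\aerr\leq 0.49 n$ or multiplicative error $1+\varepsilon\leq n^{0.49}$ and error probability $1/n$, I would construct a one-way protocol for \textsc{Equality} on strings $u,v\in\{0,1\}^{n/2}$ as follows. Alice simulates $\alg$ on $u$ as the first half of the stream, transmits the internal state of $\alg$ (of size $s(n)$ bits) to Bob, who resumes the simulation by feeding $v^{R}$ as the second half of the stream and reads off the final output $\ell$. The crux is that $T=uv^{R}$ is a palindrome if and only if $u=v$: when $u=v$, the true answer on $T$ equals $n$, so $\ell\geq n-\aerr\geq 0.51\,n$ in the additive case and $\ell\geq n/(1+\varepsilon)\geq n^{0.51}$ in the multiplicative case. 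When $(u,v)$ is drawn uniformly at random conditioned on $u\neq v$, a standard concentration bound shows that the longest palindromic factor of a uniform string in $\{0,1\}^{n}$ is $O(\log n)$ with probability $1-n^{-\omega(1)}$, so $\ell=O(\log n)$ with probability $1-O(1/n)$. Bob therefore declares $u=v$ iff $\ell$ exceeds the obvious threshold, giving an \textsc{Equality} protocol with error $O(1/n)$ and communication $s(n)$.

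The main obstacle is controlling the algorithm's output in the $u\neq v$ case against adversarial pairs: for example, if $v$ differs from $u$ in a single bit, then $T=uv^{R}$ contains a palindromic substring of length $n-2$, and the approximate algorithm could legitimately return $\ell>0.5\,n$ (resp.\ $\ell>n^{0.51}$) without violating its promise. I would address this by performing the reduction under the uniform distribution over non-equal pairs, for which such near-palindromic structure occurs with probability $n^{-\omega(1)}$ and contributes negligibly to the overall error probability in Yao's argument. Combining the two cases via the balanced \textsc{Equality} distribution (equal with probability $1/2$, uniform non-equal otherwise) and invoking the lower bound for \textsc{Equality} on this distribution then yields $s(n)=\Omega(\log n)$, as required.
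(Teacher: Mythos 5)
Your proof takes a genuinely different route from the paper's. The paper's own proof of Lemma~\ref{hashing_lb} applies Yao's principle directly to an explicit hard distribution over words of the form $w(x,y)=\nu(\frac{n}{2}-n')^R\, x\, y^R\, \nu(\frac{n}{2}-n')^R$ with $n'=s(n)+1$ and $x,y$ drawn from a set of size $2\cdot 2^{s(n)}$: the $\nu$-padding deterministically guarantees that when $x\neq y$ the longest palindrome has length $\bigo(\sqrt{n})$, and a pigeonhole/counting argument over the algorithm's states directly gives error at least $\frac{1}{4\cdot 2^{s(n)}}$. You instead phrase everything as a reduction to the one-way communication complexity of \textsc{Equality}, stream $uv^R$ with $u,v$ of length $n/2$, and replace the deterministic $\nu$-padding by a probabilistic control: under the balanced distribution, the string $uv^R$ in the $u\neq v$ branch is essentially uniform, so with overwhelming probability it contains no palindrome long enough to confuse the approximation threshold. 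Both are valid; your version is more modular (offloading the counting to a known communication fact) at the cost of an extra probabilistic lemma and a slightly more delicate error accounting, whereas the paper's version is self-contained and, by tuning $n'$ to $s(n)$, reads off the contradiction directly.

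Two caveats on precision. First, the invocation of a ``private-coin fingerprinting lower bound'' is not quite the right reference: the reduction from a streaming algorithm yields a \emph{public-coin} (equivalently, after Yao, a deterministic distributional) one-way protocol, so what you need is the distributional one-way lower bound for \textsc{Equality} under the balanced distribution, which gives communication $\Omega(\log(1/\delta))=\Omega(\log n)$ for error $\delta=\bigo(1/n)$; this is a different statement from the private-coin $\Omega(\log m)$ bound and should be stated as such (the arithmetic happens to give the same $\Omega(\log n)$ here, but the justification is not interchangeable). Second, the claim that the longest palindromic factor of a uniform string is $\bigo(\log n)$ with probability $1-n^{-\omega(1)}$ is slightly overstated as written (for a fixed constant $C$ one only gets $1-n^{-\Omega(1)}$ for threshold $C\log n$), but this is harmless because all you actually need is that the longest palindrome stays below $0.51n$, respectively $n^{0.51}$, which holds with probability $1-2^{-\Omega(n^{\Omega(1)})}$, easily within the $\bigo(1/n)$ error budget.
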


\begin{proof}
By Theorem~\ref{yao}, it is enough to construct a probability distribution $\mathcal{P}$ over $\Sigma^n$, such that for
any deterministic algorithm $\dalg$ using at most $s(n)=\bigo(\log n)$ bits of memory, the expected probability of error on a word chosen according
to $\mathcal{P}$ is $\frac{1}{2^{s(n)+2}}$.

Let $n' = s(n)+1$.  For any $x,y \in \Sigma^{n'}$, let $w(x,y) = \nu(\frac{n}{2}-n')^R x y^R \nu(\frac{n}{2}-n')^R$.
Observe that if $x=y$ then $w(x,y)$ contains a palindrome of radius $\frac{n}{2}$, and otherwise the longest palindrome there has
radius at most $2n'+\bigo(\sqrt{n}) = \bigo(\sqrt{n})$, thus any algorithm with additive error of at most $0.49  n$ or with a multiplicative error at most $n^{0.49}$
must be able to distinguish between these two cases (for $n$ large enough).

Let $S \subseteq \Sigma^{n'}$ be an arbitrary family of words of length $n'$ such that $|S| = 2 \cdot 2^{s(n)}$, and let $\mathcal{P}$ be 
the uniform distribution on all words of the form $w(x,y)$, where $x$ and $y$ are chosen uniformly and independently from $S$.
By a counting argument, we can create at least $\frac{|S|}{4}$ pairs $(x,x')$ of elements
from $S$ such that the state of $\dalg$ is the same after having read $\nu(\frac{n}{2}-n')^Rx$ and $\nu(\frac{n}{2}-n')^Rx'$. 
(If we create the pairs greedily, at most one such $x$ per state of memory can be left unpaired, so at least $|S| - 2^{s(n)} = \frac{|S|}2$ elements are paired.)
Thus, $\dalg$ cannot
distinguish between $w(x,x')$ and $w(x,x)$, and between $w(x',x')$ and $w(x',x)$, so its error probability must be at least
$\frac{|S|/2}{|S|^2} = \frac{1}{4\cdot 2^{s(n)}}$. Thus if $s(n) = o(\log n)$, the error rate is at least $\frac{1}{n}$ for $n$ large enough, a contradicion.
\end{proof}

Combining the reduction with the technical lemma and 
taking into account that we are reducing to a problem with word length of $\Theta(\frac{n}{E})$,
we obtain the following.

\begin{restatable}[Monte Carlo additive approximation]{theorem}{restatablemontecarloadditive}
\label{th:montecarlo_additive_lowerbound}
Let $\alg$ be any randomized Monte Carlo streaming algorithm solving \palin with additive error $\aerr$
with probability $1-\frac{1}{n}$. If $\aerr \le 0.49n$ then $\alg$ uses
$\Omega(\frac{n}{\aerr} \log \min \{|\Sigma|,\frac{n}{\aerr}\} )$ bits of memory.
\end{restatable}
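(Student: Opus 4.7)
I would derive the bound by a padding reduction from exact \palin on a proportionally shorter text, followed by an invocation of Lemma~\ref{lowerbound:exact}. Given an input $x\in\Sigma^{m}$ with $m=\Theta(n/\aerr)$ for \palin[m], I would encode it as a word $w(x)$ of length $n$ by interleaving $\aerr$ copies of a fresh padding character between every two consecutive letters of $x$ (and padding the ends if needed so that $|w(x)|=n$ exactly). A palindrome of length $\ell$ in $x$ inflates to a palindrome of length $\ell(\aerr+1)-\aerr$ in $w(x)$, and any palindrome in $w(x)$ longer than a single padding block must contain at least two non-padding characters placed symmetrically on the grid of non-padding positions, hence corresponds to a palindrome of~$x$. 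Consequently, the length of the longest palindrome in $w(x)$ is $(\aerr+1)$ times that of $x$ up to an additive $\bigo(\aerr)$ term, so approximating it with additive error~$\aerr$ recovers the longest palindrome in $x$ exactly, because the scaling factor $\aerr+1$ makes the residual per-letter error smaller than~$1$.

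\paragraph{Simulation and counter overhead.} Given a Monte Carlo streaming algorithm~$\alg$ for \palin[n] with additive error $\aerr$ and failure probability $1/n$, I would build an exact streaming algorithm for \palin[m] that reads $x$ character by character, simulates~$\alg$ on $w(x)$, and uses an $\bigo(\log\aerr)$-bit counter to emit the intermediate padding blocks on the fly. The error probability is still at most $1/n\leq 1/m$, so by Lemma~\ref{lowerbound:exact} the simulation must use $\Omega(m\log\min\{|\Sigma|,m\}) = \Omega(\tfrac{n}{\aerr}\log\min\{|\Sigma|,\tfrac{n}{\aerr}\})$ bits, and hence~$\alg$ itself uses that many bits minus the $\bigo(\log n)$ counter overhead. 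If $\Sigma$ does not contain a suitable padding character, I would either reserve half of~$\Sigma$ for the construction when $|\Sigma|$ is large enough, or apply the binary reduction of Lemma~\ref{alphabetreduction} at the cost of a constant factor, exactly as discussed just before that lemma.

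\paragraph{Closing the small regime and main obstacle.} The counter overhead $\bigo(\log n)$ is negligible whenever $n/\aerr$ is superconstant, but it would erase the bound when $\aerr$ is a constant fraction of~$n$, since in that regime the target bound itself is only $\bigo(\log n)$. This boundary case is exactly what Lemma~\ref{hashing_lb} is designed to handle: it forces $\Omega(\log n)$ bits of memory for any Monte Carlo algorithm with additive error at most $0.49n$ and failure probability $1/n$, independently of the reduction. Taking the maximum of the reduction-based bound and the Lemma~\ref{hashing_lb} bound yields the claimed $\Omega(\tfrac{n}{\aerr}\log\min\{|\Sigma|,\tfrac{n}{\aerr}\})$ across the full range $\aerr\in[1,0.49n]$. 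The main technical nuisance I expect is the palindrome-accounting in the reduction: verifying that no palindrome in $w(x)$ of length exceeding the padding block can arise spuriously from a resonance between zero-runs and matching non-padding letters, and that the optional boundary padding used to hit length exactly~$n$ does not create long palindromes unrelated to~$x$.
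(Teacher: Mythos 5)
Your proposal follows essentially the same route as the paper's proof: a padding reduction (inserting $\aerr$ filler characters between consecutive letters) that converts additive-$\aerr$ approximation of \palin[n] into exact solution of \palin[$\Theta(n/\aerr)$], then Lemma~\ref{lowerbound:exact} supplies the main bound and Lemma~\ref{hashing_lb} covers the regime where the $\bigo(\log n)$ counter overhead would swamp the target. The only slight imprecision is your characterization of when the counter overhead matters (it is governed by whether $\frac{n}{\aerr}\log\min\{|\Sigma|,\frac{n}{\aerr}\}$ dwarfs $\log n$, not merely by whether $n/\aerr$ is superconstant --- the paper's explicit threshold $\aerr \le \frac{\gamma}{4}\frac{n}{\log n}\log\sigma$ makes this precise), but since you take the maximum of the two bounds across the whole range, the conclusion is unaffected.
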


\begin{proof}
Define $\sigma = \min \{|\Sigma|,\frac12 \frac{n}{\aerr}\}.$

Because of Lemma~\ref{hashing_lb} and $\log\sigma \geq \frac{1}{35}\log \min\{|\Sigma|,\frac{n}{\aerr}\}$
(which holds due to $\aerr \leq 0.49n$),
it is enough to prove that $\Omega(\frac{n}{\aerr} \log \sigma)$ is a lower bound when 
\begin{equation}\label{foo} \aerr \le \frac{\gamma}4 \cdot \frac{n}{\log n} \log \sigma. \end{equation} 
Assume that there is a Monte Carlo streaming algorithm $\alg$ solving \palin with additive error $\aerr$ using $o(\frac{n}{\aerr}\log \sigma)$ 
bits of memory and probability $1-\frac{1}{n}$. 
Let $n' = \frac{n-\aerr}{\aerr+1} \ge \frac12 \frac{n}{\aerr}$ (the last inequality holds because $\aerr \le 0.49n$ and because we can assume that $\aerr > 1$).  Given a word $x[1] x[2] \ldots x[n']$, we can simulate running $\alg$
on $0^{\aerr} x[1] 0^{\aerr} x[2] 0^{\aerr} x[3] \ldots 0^{\aerr} x[n'] 0^{\aerr}$ to calculate $R$ (using $\log \aerr \leq \log n$ additional bits of memory), and then return $\left\lfloor \frac{R}{\aerr+1}\right\rfloor$.
We call this new Monte Carlo streaming algorithm $\alg'$.
Recall that $\alg$ reports the radius of the longest palindrome with additive error $\aerr$. Therefore, if the original word contains
a palindrome of radius $r$, the new word contains a palindrome of radius $\frac{\aerr}{2}+r(\aerr+1)$, so $R\geq r(\aerr+1)$ and $\alg'$
will return at least $r$. In the other direction, if $\alg'$ returns $r$, then the new word contains a palindrome of radius $r(\aerr+1)$.
If such palindrome is centered so that $x[i]$ is matched with $x[i+1]$ for some $i$, then
it clearly corresponds to a palindrome of radius $r$ in the original word. But otherwise every
$x[i]$ within the palindrome is matched with $0$, so in fact the whole palindrome corresponds to a streak
of consecutive zeroes in the new word and can be extended to the left and to the right to start and end
with $0^{\aerr}$, so again it corresponds to a palindrome of radius $r$ in the original word.
Therefore, $\alg'$ solves \palin[n'] exactly with probability
$1-\frac{1}{(n'(\aerr+1)+\aerr)} \ge 1 - \frac{1}{n'}$ and uses
$o(\frac{n'(\aerr+1)+\aerr}{\aerr} \log \sigma)+\log n = o(n' \log \sigma) + \log n$ bits of memory. Observe that by Lemma~\ref{lowerbound:exact} we get a lower bound 
\[
\gamma \cdot n' \log \min \{|\Sigma|,n'\} \geq \frac{\gamma}{2}\cdot n' \log \sigma+ \frac{\gamma}{4}\cdot \frac{n}{E} \log\sigma
\ge \frac{\gamma}{2}\cdot n' \log \sigma + \log n
\label{eq:lb}
\]
 (where the last inequality holds because of Eq.\eqref{foo}). Then, for large $n$
we obtain contradiction as follows
\[
o(n' \log\sigma) + \log n < \frac{\gamma}{2} \cdot n' \log\sigma + \log n. \qedhere
\]
\end{proof}

Finally, we consider multiplicative approximation. Here we observe that any algorithm solving
\palin with multiplicative error $(1+\varepsilon)$ can be used  to solve \midpalin[2n'] exactly, where
$n'=\Theta(\frac{\log n}{\log(1+2\varepsilon)})$, by separating the characters appropriately.
Intuitively, the padding is chosen so that the middle palindrome has the largest radius and
the larger the distance from the center the longer the separator inserted between two consecutive
characters of the original input. Again, we need $\log n$ bits for a counter and hence need
to invoke a separate technical lemma when $(1+\varepsilon)$ is very large. After some calculations,
and taking into account that we are reducing to a problem with word length of $\Theta(\frac{\log n}{\log(1+\varepsilon)})$
we obtain the following.

\begin{restatable}[Monte Carlo multiplicative approximation]{theorem}{restatablemontecarlomultiplicative}
\label{th:montecarlo_multiplicative_lowerbound}
Let $\alg$ be any randomized Monte Carlo streaming algorithm solving \palin with multiplicative error $(1+\varepsilon)$
with probability $1-\frac{1}{n}$. If $n^{-0.98} \le \varepsilon \le n^{0.49}$ then $\alg$ uses $\Omega(\frac{\log n}{\log(1+\varepsilon)}\log \min \{|\Sigma|,\frac{\log n}{\log(1+\varepsilon)}\})$ bits of memory.
\end{restatable}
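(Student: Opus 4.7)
}
The approach mirrors the reduction used in the additive case, but with an exponentially expanding padding rather than a linear one. The plan is to reduce \midpalin[2n'] on the same alphabet $\Sigma$ to \palin with multiplicative error $(1+\varepsilon)$, where $n' = \Theta(\frac{\log n}{\log(1+\varepsilon)})$, and then invoke Lemma~\ref{lowerbound:exact}. Concretely, given an input $y=y[1]\ldots y[2n']$ for \midpalin[2n'], I would build a word $w(y)$ of length at most $n$ by placing, for each $i=1,\ldots,n'$, the character $y[n'+i]$ at distance roughly $(1+2\varepsilon)^i$ to the right of the center and $y[n'-i+1]$ at the symmetric position on the left, filling the gaps with a padding derived from $\nu$ so that the padding itself contributes no long palindromes. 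Choosing the constant in $n'$ so that $(1+2\varepsilon)^{n'}\le n/2$ keeps $|w(y)|\le n$, and gives $n' = \Theta(\frac{\log n}{\log(1+\varepsilon)})$.

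The next step is to verify the two key properties of $w(y)$. First, if the middle palindrome in $y$ has radius exactly $k$, the middle palindrome in $w(y)$ has a well-defined radius $r_k$ of order $(1+2\varepsilon)^k$, because matching $y[n'-k+1]$ against $y[n'+k]$ fails while all closer pairs match. Second, because the $\nu$-based padding does not admit long palindromes and the non-padding characters are placed at positions growing exponentially away from the center, no palindrome in $w(y)$ of radius comparable to $r_k$ can be centered elsewhere. Since $r_{k+1}/r_k \ge 1+2\varepsilon > 1+\varepsilon$, the intervals $[r_k/(1+\varepsilon),r_k]$ and $[r_{k+1}/(1+\varepsilon),r_{k+1}]$ are disjoint, so a multiplicative $(1+\varepsilon)$-approximation to the radius of the longest palindrome in $w(y)$ determines $k$ uniquely.

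Given an $\alg$ with $s(n)$ bits of memory for \palin with multiplicative error $(1+\varepsilon)$ and success probability $1-\tfrac1n$, simulating it on $w(y)$ yields a Monte Carlo algorithm $\alg'$ for \midpalin[2n'] with error probability at most $\tfrac1n \le \tfrac1{2n'}$ and memory usage $s(n)+\bigo(\log n)$ (the $\log n$ overhead counts positions while inflating the input). By Lemma~\ref{lowerbound:exact},
\[
s(n)+\bigo(\log n) \;=\; \Omega\Bigl(n' \log \min\{|\Sigma|, n'\}\Bigr) \;=\; \Omega\Bigl(\tfrac{\log n}{\log(1+\varepsilon)}\log \min\{|\Sigma|,\tfrac{\log n}{\log(1+\varepsilon)}\}\Bigr),
\]
which rearranges to the claimed bound whenever this expression dominates $\log n$. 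When the target bound is of order $\log n$ or smaller, Lemma~\ref{hashing_lb} directly provides the required $\Omega(\log n)$ lower bound across the full range $\varepsilon \le n^{0.49}$, covering that remaining regime.

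The main obstacle will be the careful design of the padding and the accompanying case analysis. In particular, establishing property (ii) --- that no palindrome of radius comparable to $r_k$ can be centered away from the center of $w(y)$ --- requires arguing that the exponentially growing gaps combined with the $\nu$-style filler rule out accidental long palindromes for every choice of $y\in\Sigma^{2n'}$. Equally delicate is the arithmetic relating the separation factor $(1+2\varepsilon)$ to the error factor $(1+\varepsilon)$ uniformly across the range $\varepsilon \in [n^{-0.98}, n^{0.49}]$: the very small $\varepsilon$ regime forces one to be careful that $n'$ is still at least a positive power of $n$, while the very large $\varepsilon$ regime forces one to be careful that $n'$ is at least a small constant so that \midpalin[2n'] is nontrivial. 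Everything else is a routine amplification and Yao-style calculation, already packaged into Lemmas~\ref{lowerbound:exact} and~\ref{hashing_lb}.
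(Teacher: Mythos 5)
Your plan follows essentially the same route as the paper's proof: reduce \midpalin[2n'] with $n'=\Theta(\frac{\log n}{\log(1+\varepsilon)})$ to \palin by embedding the $2n'$ input characters symmetrically around the center, separated by $\nu$-based padding whose distances from the center grow geometrically with ratio $(1+2\varepsilon)$; recover the middle radius $k$ from a $(1+\varepsilon)$-approximate answer because consecutive achievable radii are separated by a factor larger than $1+\varepsilon$; account for the $\bigo(\log n)$ counter overhead; invoke Lemma~\ref{lowerbound:exact}; and fall back on Lemma~\ref{hashing_lb} in the regime where the target bound is only $\bigo(\log n)$. All of these elements appear in the paper's argument.

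One concrete ingredient is missing, and it sits exactly where you flag your main obstacle: placing $y[n'+i]$ at distance ``roughly $(1+2\varepsilon)^i$'' from the center does not work at the small end of the range. For $\varepsilon$ near $n^{-0.98}$ the increments $(1+2\varepsilon)^{i+1}-(1+2\varepsilon)^i\approx 2\varepsilon(1+2\varepsilon)^i$ are far below $1$ for many indices $i$, so consecutive characters cannot occupy distinct positions, and the middle radii of order $(1+2\varepsilon)^k$ would be dwarfed by the $\Theta(\sqrt d)$ palindromes that occur inside the padding $\nu(d)$ itself, so your property that no off-center palindrome has comparable radius fails. The paper's fix is to anchor the geometric progression at a polynomially large base: the cumulative padding lengths are chosen so that the $d$-th embedded character sits at distance $\lceil(1+2\varepsilon)^{d+1}\cdot n^{0.99}\rceil$ from the center. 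Then (for $\varepsilon\ge n^{-0.98}$) every gap length is a positive integer of size about $2n^{0.01}$ or more, the middle palindrome has radius at least $n^{0.99}$, which strictly dominates every non-middle palindrome of radius $\bigo(\sqrt n)$, and $n'=\Theta(\frac{\log n}{\log(1+2\varepsilon)})$ retains the right order; a short ceiling check, $\lceil(1+2\varepsilon)^{k}n^{0.99}\rceil\cdot(1+\varepsilon)<(1+2\varepsilon)^{k+1}n^{0.99}$, confirms that $k$ is still uniquely recoverable from the approximate answer. With that anchoring (and the routine bookkeeping relating $\log(1+2\varepsilon)$ to $\log(1+\varepsilon)$ and $\min\{|\Sigma|,n'\}$ to $\min\{|\Sigma|,\frac{\log n}{\log(1+\varepsilon)}\}$), your outline becomes the paper's proof.
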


\begin{proof}
For $\varepsilon \ge n^{0.001}$ then the claimed lower bound reduces to $\Omega(1)$ bits,
which obviously holds. Thus we can assume that $\varepsilon < n^{0.001}$.
Define \[\sigma = \min \{|\Sigma|, \frac{1}{50} \frac{\log n}{\log(1+2\varepsilon)}-2\}.\]
First we argue that it is enough to prove that $\mathcal{A}$ uses
$\Omega(\frac{\log n}{\log(1+\varepsilon)}\log\sigma)$ bits of memory.
Since $\log (1+2\varepsilon) \le 0.001 \log n + o(\log n)$, we have that:
\begin{equation}
\label{eq_18}
\frac{1}{50}\frac{\log n}{\log(1+2\varepsilon)}-2  \ge 18 - o(1)
\end{equation}
and consequently:
\begin{equation}
\label{eq:sigma1}
\frac{1}{50}\frac{\log n}{\log(1+2\varepsilon)}-2 = \Theta(\frac{\log n}{\log(1+2\varepsilon)}).
\end{equation}
Finally, observe that:
 \begin{equation}
 \label{eq:2eps}
 \log(1+2\varepsilon) = \Theta(\log(1+\varepsilon))
 \end{equation}
because $\log 2(1+\varepsilon) = \Theta(\log(1+\varepsilon))$ for $\varepsilon \ge 1$,
and $\log(1+\varepsilon) = \Theta(\varepsilon)$ for $\varepsilon < 1$.
From \eqref{eq:sigma1} and \eqref{eq:2eps} we conclude that:
\begin{equation}
\label{eq:sigma2}
 \log\sigma = \Theta(\log \min\{|\Sigma|,  \frac{\log n}{\log(1+\varepsilon)} \}).
 \end{equation}

Because of Lemma~\ref{hashing_lb} and equations \eqref{eq:2eps} and \eqref{eq:sigma2},
it is enough to prove that $\Omega(\frac{\log n}{\log(1+\varepsilon)}\log\sigma)$ is a lower bound when
\begin{equation}
\label{foo2}
\log(1+2\varepsilon) \le \gamma \cdot \frac{\log\sigma}{100},
\end{equation} 
as otherwise $\Omega(\frac{\log n}{\log (1+\varepsilon)} \log \sigma ) = \Omega( \frac{\log n}{\log (1+2\varepsilon)} \log \sigma) = \Omega(\log n)$.

Assume that there is a Monte Carlo streaming algorithm $\alg$ solving \palin with multiplicative error $(1+\varepsilon)$
with probability $1-\frac{1}{n}$ using $o(\frac{\log n}{\log(1+\varepsilon)}\log \sigma )$ bits of memory.
Let $x = x[1] x[2] \ldots x[n'] x[n'+1] \ldots x[2n']$ be an input for \midpalin[2n']. We choose $n'$ so that $n=(1+2\varepsilon)^{n'+1} \cdot n^{0.99} $. Then $n'= \log_{(1+2\varepsilon)} (n^{0.01})-1 = \frac{1}{100} \frac{\log n}{\log(1+2\varepsilon)}-1$.
We choose $i_0,i_1, i_{2},i_{3},\ldots,i_{n'}$ so that  $i_0+\ldots+i_d = \lceil(1+2\varepsilon)^{d+1} \cdot n^{0.99}\rceil$ for any $0 \le d \le n'$.

(Observe that for $\varepsilon = \Omega(n^{-0.98})$ we have $i_0 > n^{0.99}$ and $i_1,\ldots,i_d > 2n^{0.01}-1$.) Finally we define:
\[w(x) = \nu(i_{n'})^R x[1] \nu(i_{n'-1})^R  \ldots x[n'] \nu(i_{0})^R \nu(i_{0}) x[n'+1] \nu(i_{1}) \ldots \nu(i_{n'-1}) x[2n'] \nu(i_{n'}).\]

If $x$ contains a middle palindrome of radius exactly $k$, then $w(x)$ contains a middle palindrome of radius  $(1+2\varepsilon)^{k+1} \cdot n^{0.99}$.
Also, based on the properties of $\nu$, any non-middle centered palindrome in $w(x)$ has radius at most $\bigo(\sqrt{n})$, which is less than $n^{0.99}$ for $n$ large 
enough. Since $\lceil(1+2\varepsilon)^{k}\cdot n^{0.99}\rceil \cdot (1+\varepsilon) < ((1+2\varepsilon)^{k}\cdot n^{0.99}+1) \cdot (1+\varepsilon) <(1+2\varepsilon)^{k+1}\cdot n^{0.99}$, value of $k$ can be extracted from the answer of $\alg$.
Thus, if $\alg$ approximates the middle palindrome in $w(x)$ with multiplicative error $(1+\varepsilon)$ with probability $1-\frac{1}{n}$
using
$o(\frac{\log n}{\log(1+\varepsilon)}\log\sigma)$ bits of memory, we can construct a new algorithm $\alg'$ solving \midpalin[2n'] exactly with
probability $1-\frac{1}{n}>1-\frac{1}{2n'}$ using
\begin{equation}
\label{eq:ub2}
o(\frac{\log n}{\log(1+\varepsilon)}\log\sigma) + \log n
\end{equation}
bits of memory.
By Lemma~\ref{lowerbound:exact}  we get a lower bound
\begin{eqnarray}
\gamma \cdot 2n' \log \min \{|\Sigma|,2n'\}  &=& \frac{\gamma}{50} \cdot \frac{\log n}{\log(1+2\varepsilon)} \log\sigma - 2\gamma\log\sigma\nonumber \\
&\ge &\frac{\gamma}{100} \cdot \frac{\log n}{\log(1+2\varepsilon)} \log\sigma + \log n - 2\gamma\log\sigma
\label{eq:lb2}
\end{eqnarray}
(where the last inequality holds because of~(\ref{foo2})). On the other hand, for large $n$
\begin{eqnarray*}
&\frac{\gamma}{100}\cdot \frac{\log n}{\log(1+2\varepsilon)}\log\sigma  - 2\gamma\log\sigma + \log n=\left(\frac{1}{100}\frac{\log n}{\log(1+2\varepsilon)}-2\right)\gamma\log\sigma +\log n\\
&=\Theta\left(\frac{\log n}{\log(1+\varepsilon)}\log\sigma\right) + \log n
\end{eqnarray*}
so~\eqref{eq:lb2} exceeds~\eqref{eq:ub2}, a contradiction.
\end{proof}

\newpage

\section*{Acknowledgments}

The first author is currently holding a post-doctoral position at Warsaw Center of Mathematics and Computer Science.
However, most of this work has been done when the first author was at Max-Planck-Institut f\"{u}r Informatik and
the second author at LIF, CNRS – Aix Marseille University (supported by the Labex Archimède and by the ANR project MACARON (ANR-13-JS02-0002)).

The authors would like to thank Tomasz Syposz for a suggestion which allowed them to simplify the algorithm.

\bibliography{main}

\begin{thebibliography}{10}

\bibitem{Apostolico}
Alberto Apostolico, Dany Breslauer, and Zvi Galil.
\newblock Parallel detection of all palindromes in a string.
\newblock {\em Theor. Comput. Sci.}, 141(1{\&}2):163--173, 1995.

\bibitem{Berenbrink}
Petra Berenbrink, Funda Erg{\"u}n, Frederik Mallmann-Trenn, and Erfan~Sadeqi
  Azer.
\newblock {Palindrome Recognition In The Streaming Model}.
\newblock In {\em STACS 2014}, volume~25 of {\em LIPIcs}, pages 149--161,
  Dagstuhl, Germany, 2014. Schloss Dagstuhl--Leibniz-Zentrum fuer Informatik.

\bibitem{Breslauer}
Dany Breslauer and Zvi Galil.
\newblock Real-time streaming string-matching.
\newblock {\em {ACM} Transactions on Algorithms}, 10(4):22, 2014.

\bibitem{DictionaryStream}
Rapha{\"{e}}l Clifford, Allyx Fontaine, Ely Porat, Benjamin Sach, and
  Tatiana~A. Starikovskaya.
\newblock Dictionary matching in a stream.
\newblock In Nikhil Bansal and Irene Finocchi, editors, {\em Algorithms - {ESA}
  2015 - 23rd Annual European Symposium, Patras, Greece, September 14-16, 2015,
  Proceedings}, volume 9294 of {\em Lecture Notes in Computer Science}, pages
  361--372. Springer, 2015.

\bibitem{KMismatch}
Rapha{\"{e}}l Clifford, Allyx Fontaine, Ely Porat, Benjamin Sach, and
  Tatiana~A. Starikovskaya.
\newblock The \emph{k}-mismatch problem revisited.
\newblock In Robert Krauthgamer, editor, {\em Proceedings of the Twenty-Seventh
  Annual {ACM-SIAM} Symposium on Discrete Algorithms, {SODA} 2016, Arlington,
  VA, USA, January 10-12, 2016}, pages 2039--2052. {SIAM}, 2016.

\bibitem{ErgunPeriodicity}
Funda Erg{\"{u}}n, Hossein Jowhari, and Mert Saglam.
\newblock Periodicity in streams.
\newblock In Maria~J. Serna, Ronen Shaltiel, Klaus Jansen, and Jos{\'{e}} D.~P.
  Rolim, editors, {\em Approximation, Randomization, and Combinatorial
  Optimization. Algorithms and Techniques, 13th International Workshop,
  {APPROX} 2010, and 14th International Workshop, {RANDOM} 2010, Barcelona,
  Spain, September 1-3, 2010. Proceedings}, volume 6302 of {\em Lecture Notes
  in Computer Science}, pages 545--559. Springer, 2010.

\bibitem{FiciGKK14}
Gabriele Fici, Travis Gagie, Juha K{\"{a}}rkk{\"{a}}inen, and Dominik Kempa.
\newblock A subquadratic algorithm for minimum palindromic factorization.
\newblock {\em J. Discrete Algorithms}, 28:41--48, 2014.

\bibitem{FineWilf}
N.~J. Fine and H.~S. Wilf.
\newblock Uniqueness theorems for periodic functions.
\newblock {\em Proceedings of the AMS}, 16:109--114, 1965.

\bibitem{GalilSeiferas}
Zvi Galil and Joel Seiferas.
\newblock A linear-time on-line recognition algorithm for ``palstar''.
\newblock {\em J. ACM}, 25(1):102--111, January 1978.

\bibitem{GawrychowskiSTACS}
Paweł Gawrychowski, Florin Manea, and Dirk Nowotka.
\newblock {Testing Generalised Freeness of Words}.
\newblock In {\em STACS 2014}, volume~25 of {\em LIPIcs}, pages 337--349,
  Dagstuhl, Germany, 2014. Schloss Dagstuhl--Leibniz-Zentrum fuer Informatik.

\bibitem{ISIBT14}
Tomohiro I, Shiho Sugimoto, Shunsuke Inenaga, Hideo Bannai, and Masayuki
  Takeda.
\newblock Computing palindromic factorizations and palindromic covers on-line.
\newblock In {\em CPM 2014}, volume 8486 of {\em Lecture Notes in Computer
  Science}, pages 150--161. Springer, 2014.

\bibitem{ParametrizedStreaming}
Markus Jalsenius, Benny Porat, and Benjamin Sach.
\newblock Parameterized matching in the streaming model.
\newblock In {\em 30th International Symposium on Theoretical Aspects of
  Computer Science, {STACS} 2013, February 27 - March 2, 2013, Kiel, Germany},
  pages 400--411, 2013.

\bibitem{KaplanSlowdown}
Haim Kaplan and Robert~E. Tarjan.
\newblock Persistent lists with catenation via recursive slow-down.
\newblock In {\em Proceedings of the Twenty-seventh Annual ACM Symposium on
  Theory of Computing}, STOC '95, pages 93--102, New York, NY, USA, 1995. ACM.

\bibitem{KR}
Richard~M. Karp and Michael~O. Rabin.
\newblock Efficient randomized pattern-matching algorithms.
\newblock {\em IBM Journal of Research and Development}, 31(2):249--260, 1987.

\bibitem{KMP}
Donald~E. Knuth, Jr. James H.~Morris, and Vaughan~R. Pratt.
\newblock Fast pattern matching in strings.
\newblock {\em SIAM Journal on Computing}, 6(2):323--350, 1977.

\bibitem{KosolobovRS15}
Dmitry Kosolobov, Mikhail Rubinchik, and Arseny~M. Shur.
\newblock $\text{Pal}^{k}$ is linear recognizable online.
\newblock In {\em {SOFSEM} 2015}, volume 8939 of {\em Lecture Notes in Computer
  Science}, pages 289--301. Springer, 2015.

\bibitem{Manacher}
Glenn~K. Manacher.
\newblock A new linear-time ``on-line'' algorithm for finding the smallest
  initial palindrome of a string.
\newblock {\em J. ACM}, 22(3):346--351, 1975.

\bibitem{PoratStreaming}
Benny Porat and Ely Porat.
\newblock Exact and approximate pattern matching in the streaming model.
\newblock In {\em 50th Annual {IEEE} Symposium on Foundations of Computer
  Science, {FOCS} 2009, October 25-27, 2009, Atlanta, Georgia, {USA}}, pages
  315--323. {IEEE} Computer Society, 2009.

\bibitem{Yao77}
Andrew Chi-Chih Yao.
\newblock Probabilistic computations: Toward a unified measure of complexity
  (extended abstract).
\newblock In {\em FOCS}, pages 222--227. IEEE Computer Society, 1977.

\end{thebibliography}

\end{document}